\documentclass[10pt,a4paper]{article}
\usepackage{amsmath,amssymb,amsfonts,amsthm}
\usepackage{centernot}
\usepackage{bbm} 
\usepackage{caption}
\usepackage{graphicx}
\usepackage{float}
\usepackage{natbib}
\usepackage{mathrsfs}
\usepackage{epstopdf}
\usepackage{longtable}
\usepackage{adjustbox}
\usepackage{multirow}
\usepackage{soul}
\usepackage{color}
\usepackage[pdfpagemode=UseNone,bookmarksopen=false,colorlinks=true]{hyperref}
\hypersetup{linkcolor=blue, citecolor=blue,
  filecolor=blue, urlcolor=red}

\allowdisplaybreaks

\newtheorem{assumption}{Assumption}

\newtheorem{corollary}{Corollary}[section]
\newtheorem{lemma}{Lemma}[section]
\newtheorem{result}{Result}[section]
\newtheorem{remark}{Remark}

\newtheorem{theorem}{Theorem}[section]
\theoremstyle{remark}
\newtheorem{definition}{Definition}[section]

\begin{document}
\title{\bf Linear Regression: Inference Based on Cluster Estimates } \vspace{1cm}			
		
\author{Subhodeep Dey\thanks{ Theoretical Statistics and Mathematics Unit,  Indian Statistical Institute,  Kolkata, India} \; Gopal K. Basak\thanks{ Theoretical Statistics and Mathematics Unit,  Indian Statistical Institute,  Kolkata, India} \;   Samarjit Das\thanks{ Economic Research Unit, Indian Statistical Institute, Kolkata, India} \\ Indian Statistical Institute }  
\date{}
\maketitle
			
\begin{abstract}
This article proposes a novel estimator for regression coefficients in clustered data that explicitly accounts for within-cluster dependence. We study the asymptotic properties of the proposed estimator under both finite and infinite cluster sizes. The analysis is then extended to a standard random coefficient model, where we derive asymptotic results for the average (common) parameters and develop a Wald-type test for general linear hypotheses. We also investigate the performance of the conventional pooled ordinary least squares (POLS) estimator within the random coefficients framework and show that it can be unreliable across a wide range of empirically relevant settings. Furthermore, we introduce a new test for parameter stability at a higher (superblock; Tier 2, Tier 3,...) level, assuming that parameters are stable across clusters within that level. Extensive simulation studies demonstrate the effectiveness of the proposed tests, and an empirical application illustrates their practical relevance.


\end{abstract}
\vspace{0.2cm}
\noindent {\bf Key Words:}  Least Squares, Cross-section data, cluster dependence.
\vspace{0.2cm}

\noindent {\bf JEL Classification: C01, C13 }
\newpage


\section{Introduction}

Applied researchers in economics often use data consisting of independent clusters of dependent random variables. \cite{moulton1986random} and \cite{moulton1990illustration} provide several canonical examples illustrating how clustering naturally arises in empirical research. It often occurs when the sampling design involves selecting random groups, such as schools, households, or towns, and surveying all units within each selected group. Clustering may also result from cost-reduction strategies in data collection, such as interviewing multiple households located on the same block of the city. Furthermore, even under a Simple Random Sampling scheme, dependence can emerge due to unobserved effects shared across units, for instance, a latent factor common to all households within a given state.  The fundamental nature of cluster data is that individuals within a cluster are generally interdependent, and individuals between clusters are expected to be independent.

A plethora of literature has been developed to facilitate robust inference through appropriate correction of standard errors (see \cite{white1984asymptotic}, \cite{arellano1987computing}, \cite{conley1999gmm}, \cite{wooldridge2003cluster}, \cite{cameron2005microeconometrics}, \cite{wooldridge2006cluster}, \cite{cameron2008bootstrap}, \cite{cameron2015practitioner}, \cite{mackinnon2019and}, \cite{abadie2020sampling}, \cite{canay2021wild}, \cite{hansen2022econometrics}, \cite{abadie2023should},  \cite{chiang2023genuinely}, \cite{mackinnon2023cluster}). The literature mostly assumes that the cluster size ($N,$  say) is finite and the number of clusters ($G,$ say) diverges to infinity. Such a framework limits the scope of applicability of the associated robust standard error adjustments. In practice, three different asymptotic regimes may arise: (1) $G$ is finite, and $N \rightarrow \infty $. (2)  $N \rightarrow \infty, $ and  $ G \rightarrow \infty,$ (3) $N$ is finite, but $ G \rightarrow \infty.$
Scenario (1) typically leads to inconsistency of the LSE and invalidity of the Central Limit Theorem.\footnote{\cite{andrews2005cross} demonstrates that the LSE is inconsistent when the sample size grows with a fixed number of clusters.} \footnote{\cite{bester2011inference} study asymptotic properties under fixed $G$ and large $N$, invoking a “limited memory” assumption under which dependence vanishes with distance, resulting in sparse error covariance matrices.} \footnote{\cite{ibragimov2016inference} proposes inference strategies for small numbers of clusters, effectively assuming Gaussianity of model errors within each cluster.}
The same inferential challenges arise when both $N\to\infty$ and $G\to\infty$, without additional restrictions on their relative magnitudes. Scenario (3), where the cluster size is fixed and the number of clusters increases, has been extensively studied in the above references.

However, the literature remains limited for scenario (2), with two notable exceptions: \cite{hansen2019asymptotic} and \cite{djogbenou2019asymptotic}. Both works assume some homogeneity in cluster sizes, excluding settings where a few clusters are disproportionately large while the majority are relatively small. In such contexts, the POLS (Pooled Ordinary Least Squares) estimator used in these studies may perform poorly. For instance, \cite{djogbenou2019asymptotic} found some undesirable properties when a cluster is dominant, having half of the observations. These cases are not merely theoretical. For example, studies on state laws and corporate governance in the United States often encounter clustering at the state level, with Delaware alone housing nearly half of all corporations (\cite{spamann2019inference}). To illustrate further, we consider examples from India’s Annual Survey of Industries (ASI) for the years 2017–18 and 2019–20. The ASI data for regions such as  Tripura, Maharashtra, and West Bengal are particularly relevant, where 4-digit NIC codes (NIC4s) may serve as natural cluster identifiers.

\textit{Example 1}: In the state of Tripura, one extremely large cluster exists among a total of 50 clusters. For the year 2017–18, the largest cluster has a size of 323, whereas the average size of the remaining clusters is just 6. Tripura is considered one of the least industrialized states in India.

\textit{Example 2}: In Maharashtra, 17 of 162 clusters each contain more than 500 units (based on 2017–18 data, where 4-digit NICs are treated as clusters). The average size of the remaining clusters is 95, and 56 clusters comprise only 1 to 25 units. Maharashtra is recognized as one of the most industrialized states in India.

\textit{Example 3}: In West Bengal, 3 out of 148 clusters each contain more than 500 units (based on 2017–18 data, with 4-digit NICs considered as clusters). The average size of the remaining clusters is 55, and 70 clusters have sizes ranging from 1 to 25 units. West Bengal is regarded as a moderately industrialized state.

These examples illustrate cases in which a finite number of clusters are disproportionately large compared to the others.

In this paper, we demonstrate that the existence of a few large clusters (dominating clusters) can cause the POLS to be inconsistent. We propose a new estimator and demonstrate that it is consistent even when a few large clusters exist. We then extend the basic cluster model to accommodate random variations of the underlying parameters. Random coefficients (RC) models have been widely employed to accommodate unobserved heterogeneity in regression relationships across observational units. \cite{swamy1970efficient} developed estimation methods for linear random coefficients models.  As he rightly pointed out, ``....  it is unlikely that interindividual differences observed in a cross-sectional sample can be explained by a simple regression equation with a few independent variables. In such situations, the coefficient vector of a regression model can be treated as random to account for interindividual heterogeneity".  We show that the proposed estimator is also consistent under RC model as well. We study the asymptotic properties of this proposed estimator and demonstrate the inferential methods. 

Swamy's work extended to a panel allowing regression coefficients to vary across individuals and firms while maintaining a common mean structure. One valuable review on the parameter stability test is available in  \cite{pesaran2008testing}. This paper itself develops a test for slope homogeneity by standardizing  Swamy’s dispersion statistic and is valid under large \(N\) and \(T\) panels for testing the null of homogeneous coefficients across cross-sectional units. \cite{breitung2016lagrange} develop Lagrange multiplier tests for slope homogeneity in panel data that generalize the Breusch--Pagan framework while accommodating heteroskedasticity, serial correlation, and non-normality. 

In the context of pure cross-sectional data, it appears that there is no test for parameter stability. The existing literature does not allow underlying parameters to vary across clusters. Here, we develop a new test to examine whether the parameter vector is stable across clusters or not, at the superblock level. Here, it may be mentioned that the new test for parameter stability is developed for the higher (superblock) level, assuming that parameters are stable across clusters within the superblock. From a practitioner's point of view, this may not be a restrictive assumption. Consider the following examples.

\textit{Example 4}: Suppose we want to estimate the Engel curve for India. India is an extremely heterogeneous country, divided into various, somewhat homogeneous, states based on common language, culture, value system, and food habits.  Again, each state is divided into several districts primarily based on some common features (homogeneity). Again, districts are divided into subdivisions.  Consider the Household Consumption Expenditure Survey (HCES) of India for the year 2022--23, with households as individual observations, and First Stage Units (FSUs) as clusters. FSU contains around 5-18 households. For rural India, FSUs are from villages. For the urban area,  FSUs are collected from small communes.  We can consider the districts or the states as superblocks. It is natural to assume that the parameter vector of the Engel curve would vary across states; parameters may not vary across FSUs within a subdivision or even within districts. 

If one wants to test for parameter stability at the state level, we assume that all the clusters (FSUs) in each state have stable parameters. On the other hand, if one wants to test for parameter stability at the district level, we assume that all the clusters (FSUs) in each district have stable parameters. 

\textit{Example 5}: In India, the school (Age 16-18) dropout rate or the proportion of dropouts is extremely high, particularly in recent years. The dropout decision (Binary) is expected to be determined by variables like family income (Consumption), education level of the head of the household, Household size, religious status, etc.  The education system is mainly run by the state governments. It is observed that there is a significant amount of variation in dropout rate across states or regions.  Schools may be treated as basic clusters.  Districts or states may be treated as superblocks.

In summary, in this paper, we propose a new estimator that is consistent under both the basic linear model and the random coefficients model. The proposed estimator remains valid as \( G \to \infty \), regardless of the magnitude or imbalance of cluster sizes.  We show that the proposed estimator does have a few advantages over the standard estimator as studied by  \cite{moulton1986random}, \cite{moulton1990illustration}, \cite{cameron2008bootstrap}, \cite{cameron2015practitioner}, \cite{wooldridge2003cluster}, \cite{wooldridge2006cluster}, \cite{cameron2005microeconometrics}, and \cite{hansen2019asymptotic}. In particular, we establish that the traditional pooled ordinary least squares (POLS) estimator is inconsistent for the random coefficients model under various forms of cross-sectional dependence when a small number of large clusters dominate the sample. By contrast, the proposed estimator remains consistent in such settings. We further develop a Wald-type test statistic based on the proposed estimator for testing general linear hypotheses under both the basic and random coefficients models. In addition, we propose a valid test for assessing parameter constancy across superblocks, which represent higher-level clustering structures. Finally, we present simulation results and empirical applications that support our theoretical findings.

The remainder of this paper is organized as follows. Section \ref{section:model}  presents the basic model and its associated assumptions under the parameter constancy framework. This section introduces the proposed estimator and establishes its asymptotic properties under various forms dependence. Section \ref{section:model2} extends the analysis by allowing the regression parameters to vary across clusters and provides the corresponding asymptotic results for this generalized model. We also present results for the conventional pooled ordinary least squares (POLS) estimator under the varying-parameter specification. Section \ref{section:analysis} reports the simulation study and empirical analysis using real-world data. Section \ref{section:conclusion} concludes the paper. All formal proofs are provided in the Appendix.

\section{Model with Constant Parameters}\label{section:model}

Let $Y_{gi}$ be the observation on the $i^{th}$ individual of the $g^{th}$ cluster, and suppose that it is generated according to the linear model for clustered data as 
\begin{equation} \label{eq: model}
Y_{gi}=X_{gi}^{\prime} \beta +\epsilon_{gi}, \; \; i=1,2,\ldots,N_g; \; g=1,2,\ldots,G, 
\end{equation} 
where $X_{gi}$ is a $k\times1$ vector of observed regressors (including categorical variables) on the  $i^{th}$  individual of the $g^{th}$ cluster and $\epsilon_{gi}$  is the corresponding random error.
We consider one-way clustering only. Stacking observations within a cluster yields the following model.  
\begin{equation} \label{eq: model1}
 Y_g=X_g\beta +\epsilon_g ,\;\;  g=1,2,\ldots ,G,
\end{equation}
 where $Y_g$  is a $N_g\times 1$  vector, $X_g$  is a  $N_g \times k$  matrix and $\beta$  is a $k\times 1$  vector.  Note that for every $g=1,2,\ldots,G,$ the variance-covariance matrix of ${\epsilon}_g$ is defined as $\Omega_g,$  which is an $N_g\times N_g$ positive definite matrix with uniformly bounded diagonal elements and is independent of $X$.\footnote{ Conditional heteroskedasticity may be allowed with some assumptions on $X.$}

 Before presenting the main results, we establish some fundamental assumptions and make key observations. 
 
\begin{assumption}\label{assump:exogeneity}
    We assume that $X=\left(X_1^\prime,X_2^\prime,\ldots,X_G^\prime\right)^\prime$ are strictly exogenous with respect to $\epsilon_{g}$, $\forall g,$ i.e., 
$$ \mathbb{\mathbb{E}}\left[\epsilon_{g}|X \right] = 0, \;\;  \forall g.$$
\end{assumption}

\begin{definition} \label{def:exact order and order uniformly}
    (i)     Let $\left\{z_n\right\}$ be a sequence of real numbers and $\left\{a_n\right\}$ be a sequence of positive real numbers.
    
    (a) By $z_n=O_e\left(a_n\right),$ we mean that for some $0<c_1\leq c_2<\infty,$
    \[ c_1 \leq \left| \frac{z_n}{a_n} \right| \leq c_2, \; \text{ for all } n. \]
    
    (b)  By $z_n=O\left(a_n\right),$ we mean that for some $0<c<\infty,$
    \[
     \left| \frac{z_n}{a_n} \right| \leq c, \; \text{ for all } n.
    \]  
    
    (ii) Let $\left\{z_{N_g}\right\}$ be a doubly indexed sequence of real numbers and $\left\{a_{N_g}\right\}$ be a doubly indexed sequence of positive real numbers, where $N_g$ is the cluster size for the $g^{th}$ cluster, which may increase to $+\infty.$
    
    (a) By $z_{N_g}=O_e\left(a_{N_g}\right)$ uniformly in $g,$ we mean that for some $0<c_1\leq c_2<\infty,$
    \[ c_1 \leq \inf_{g,N_g}\;\left| \frac{z_{N_g}}{a_{N_g}} \right| \leq \sup_{g,N_g}\;\left| \frac{z_{N_g}}{a_{N_g}} \right| \leq c_2. \]

     (b) By $z_{N_g}=O\left(a_{N_g}\right)$ uniformly in $g,$ we mean that for some $0<c<\infty,$
    \[ \sup_{g,N_g}\;\left| \frac{z_{N_g}}{a_{N_g}} \right| \leq c. \]
\end{definition}

\begin{definition} \label{def:stochastically bounded uniformly}
    (i) Let $\left\{Z_n\right\}$ be a sequence of random variables and $\left\{a_n\right\}$ be a sequence of positive real numbers.
    By $Z_n=O_P\left(a_n\right),$ we mean that for every $\varepsilon > 0$, there exist $M_\varepsilon > 0$ and ${L_\varepsilon} \in \mathbb{N}$ such that 
     \[ \mathbb{P}\left(\left|\frac{Z_n}{a_n}\right| > M_\varepsilon\right) < \varepsilon, \; \text{ for all } n \geq {L_\varepsilon}. \]

    (ii)    
    Let $\left\{Z_{N_g}\right\}$ be a doubly indexed sequence of random variables and $\left\{a_{N_g}\right\}$ be a doubly indexed sequence of positive real numbers, where $N_g$ is the cluster size for the $g^{th}$ cluster, which may increase to $+\infty.$
     By $Z_{N_g}=O_P\left(a_{N_g}\right)$ uniformly in $g,$ we mean that for every $\varepsilon > 0$, there exists $M_\varepsilon > 0$ such that 
     \[ \sup_{g,N_g}\;\mathbb{P}\left(\left|\frac{Z_{N_g}}{a_{N_g}}\right| > M_\varepsilon\right) < \varepsilon. \] 
\end{definition}
Definition \ref{def:stochastically bounded uniformly}(i) essentially means that for every $\varepsilon > 0$, there exists $M_\varepsilon > 0$ such that $\underset{n}{\operatorname{sup}}\;\mathbb{P}\left(\left|\frac{Z_{n}}{a_{n}}\right| > M_\varepsilon\right) < \varepsilon.$

We now characterize the dependence structure of the error terms, which is fundamental to our analysis.

\begin{definition} \label{def:dependency rates}
We define three possible kinds of cross-sectional dependence. Let us denote $\Omega_g=\mathbb{\mathbb{E}}\left[\epsilon_g \epsilon_g^\prime \big | X\right]$ for a fixed cluster $g,$ and the $g^{th}$ cluster contains $N_g$ observations.
\begin{enumerate}
    \item \textbf{ Strong Dependence:} Dependence across the $N_g$ individuals is said to be strong, when $\lambda_{\max}(\Omega_{g})=O_e(N_{g}).$ 
    
    \item \textbf{ Semi-strong Dependence:} Dependence across the $N_g$ individuals is said to be semi-strong or moderate, when the following condition is met: $\lambda_{\max}(\Omega_{g})=O_e(h(N_{g})),$ where $h(N_{g})\uparrow\infty,$
    as $N_{g}\uparrow\infty,$ but $\frac{h(N_{g})}{N_{g}}\rightarrow0,$
    as $N_{g}\rightarrow\infty.$

    \item \textbf{ Weak Dependence:} Dependence across the $N_g$ individuals is said
    to be weak, when $\lambda_{\max}(\Omega_{g})=O_e(1).$
\end{enumerate}
\end{definition}

Strong dependence implies that (almost) all individuals are correlated/interconnected. Semi-strong dependence implies that the number of dependent pairs increases with sample size. Weak dependence implies that all the eigenvalues of $\Omega_{g}$ are finite. Independence is regarded as weak dependence. Weak dependence may also hold for dependence that decays sufficiently fast as observations become more distant according to some measure. Further details and examples of these types of cross-sectional dependence are described in \cite{basak2018understanding}.
Consider a scenario where we have clustered data with $G$ clusters, where the $g^{th}$ cluster contains $N_g$ observations. If the $g^{th}$ cluster exhibits strong or semi-strong dependence, then $N_g$ must diverge to $+\infty,$ along with $G\to\infty.$ Otherwise, if $N_g$ is finite, and $G\to\infty,$ then one will always be left with weakly dependent observations.

\begin{remark}\label{rem:explanation of strong, semi-strong, weak dependence}
    Throughout this paper, by strong dependence, we refer to a scenario in which $O(G)$ clusters exhibit strong dependency, while other clusters may be semi-strongly or weakly dependent. Note that the number of strongly dependent clusters may be, say, $\frac{G}{2}$ or $\frac{G}{3},$ which are also included in the above case. Semi-strong dependence means that $O(G)$ clusters are semi-strongly dependent, with no strongly dependent clusters present, although weakly dependent clusters may still exist. Weak dependence, on the other hand, describes a situation where all clusters are weakly dependent, with no strong or semi-strongly dependent clusters involved.
\end{remark}

\begin{remark}
    For simplicity, we have assumed that the rate of semi-strong dependence is uniform across clusters, i.e., $h_g(N_g)\simeq h(N_g), \; \forall g,$ although it can be generalized.
\end{remark}

\begin{assumption}\label{assump:error independence}
    The independence of error between clusters is assumed so that for each $g$, 
\begin{align*}
    \mathbb{\mathbb{E}}\left[\epsilon_{g} \epsilon^\prime_{g_1} \big|X \right]  & = 0,\; \forall g_1\neq g \quad \text{ and } \\
    \mathbb{\mathbb{E}}\left[\epsilon_{g} \epsilon^\prime_{g} \big|X \right]  &  = \Omega_g,
\end{align*} 
 This structure allows for arbitrary within-cluster heteroskedasticity and correlation among individuals belonging to the same cluster. As per the Definition \ref{def:dependency rates}, further assume that 
\[
\lambda_{\max}\left(\Omega_{g}\right)=\begin{cases}
O_{e}\left(N_{g}\right), & \text{for strong dependence}\\
O_{e}\left(h(N_{g})\right), & \text{for semi-strong dependence}\\
O_{e}\left(1\right), & \text{for weak dependence}
\end{cases}
\]
uniformly in $g,$ as in Definition \ref{def:exact order and order uniformly}.
\end{assumption}

Our use of precise definitions and order notation is designed to systematically track the numerous fixed constants that arise in the assumptions, theorems, and their corresponding proofs.

\begin{definition} \label{def:norms}
     We use two norms, viz., the maximum eigenvalue norm and the trace norm. The maximum eigenvalue norm for any matrix $A$ is defined as $ \Vert A\Vert_e= \substack{{\text{ max }}\\{z\neq 0}}\frac{\parallel Az \parallel}{\parallel z \parallel},$ where for a $n\times 1$ vector $y,$ $\Vert y \Vert=\left(y^\prime y\right)^{1/2}.$ In particular, for any nonnegative definite matrix $A,$ the maximum eigenvalue norm becomes $ \Vert A\Vert_e= \substack{{\text{ max }}\\{z:z^\prime z=1}}z^\prime Az.  $ The trace norm is defined as  $\Vert A\Vert=\left[\operatorname{tr}(A^{\prime}A)\right]^{\frac{1}{2}}.$ This is also called the Frobenius norm with $ F=\frac{1}{2}$.
\end{definition}

\begin{definition}
\textbf{(a) Nearly balanced clusters:}  $\min_g\{N_g\}/\max_g\{N_g\}$  is bounded away from zero.

\textbf{(b) Unbalanced clusters}:  $\min_g\{N_g\}/\max_g\{N_g\}=o(1).$
\end{definition}

\subsection{The Proposed Estimator}

We estimate $\beta$ separately for each cluster and then combine all these $G$ estimates by taking the simple average. The estimate for $g^{th}$ cluster is
$\hat{\beta}_g=\left(X_g^\prime X_g\right)^{-1} X_g^\prime Y_g, \; g=1, 2\ldots ,G.$
Then, our proposed estimator is the average of all these estimates, and it is defined as 
\begin{equation} \label{eq: proposed estimator}
    \hat{\bar{\beta}}=\frac{1}{G} \sum_{g=1}^{G}\hat{\beta}_g.
\end{equation}

The variance-covariance matrix of $\hat{\bar\beta}$ under model \eqref{eq: model1} is given by
\begin{equation} \label{eq: V}
        V=\frac{1}{G^{2}}\sum_{g=1}^G \mathbb{E}\left[\left(X_{g}^{\prime}X_{g}\right)^{-1}X_{g}^{\prime}\Omega_{g}X_{g}\left(X_{g}^{\prime}X_{g}\right)^{-1}\right].
\end{equation}

\begin{assumption}\label{assump:Xg'Xg/Ng inverse is bounded}
For some fixed $0<M<\infty,$ assume that for some $p>1,$ 
\[
\underset{g}{sup}\;\mathbb{E}\left[\left(\operatorname{tr}\left(\left(\frac{X_{g}^{\prime}X_{g}}{N_{g}}\right)^{-1}\right)\right)^{p}\right]\leq M.
\]
\end{assumption}

Assumption \ref{assump:Xg'Xg/Ng inverse is bounded} imposes a uniform moment bound on the inverse of the normalized design matrix across clusters. Basically, it requires that the trace of $\left(X_g'X_g/N_g\right)^{-1}$ admits a finite $p$-th moment, uniformly in $g$. The assumption implies that the eigenvalues of $X_g'X_g/N_g$ are uniformly bounded away from zero in expectation.

\begin{theorem}
\label{thm:consistency}
Under Assumptions \ref{assump:exogeneity}, \ref{assump:error independence} and \ref{assump:Xg'Xg/Ng inverse is bounded}, for strong, semi-strong or weak dependence, and for any cluster sizes,  $\hat{\bar\beta}$ is consistent for $\beta$.
\end{theorem}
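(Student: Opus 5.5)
We prove consistency by showing that $\hat{\bar\beta}-\beta$ has mean zero and that its mean-square error tends to zero as $G\to\infty$, with bounds that hold uniformly over cluster sizes.

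\textbf{Step 1 (unbiasedness).} Substituting the model \eqref{eq: model1} into the cluster estimator gives $\hat\beta_g-\beta=(X_g^\prime X_g)^{-1}X_g^\prime\epsilon_g$, so
\[
\hat{\bar\beta}-\beta=\frac1G\sum_{g=1}^G (X_g^\prime X_g)^{-1}X_g^\prime\epsilon_g .
\]
Assumption \ref{assump:Xg'Xg/Ng inverse is bounded} makes $(X_g^\prime X_g)^{-1}$ almost surely well defined. Assumption \ref{assump:exogeneity} then gives $\mathbb{E}[\hat{\bar\beta}-\beta\mid X]=0$.

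\textbf{Step 2 (conditional covariance).} By Assumption \ref{assump:error independence}, the cross-cluster conditional covariances vanish. The conditional covariance is therefore
\[
\frac1{G^2}\sum_g (X_g^\prime X_g)^{-1}X_g^\prime\Omega_gX_g(X_g^\prime X_g)^{-1},
\]
whose expectation is $V$ in \eqref{eq: V}. Chebyshev's inequality then gives $\mathbb{P}(\Vert\hat{\bar\beta}-\beta\Vert>\delta)\le \operatorname{tr}(V)/\delta^2$. It therefore suffices to show $\operatorname{tr}(V)\to0$.

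\textbf{Step 3 (bounding each summand).} This is the key step. Since $\Omega_g\le\lambda_{\max}(\Omega_g)I$ in the Loewner order,
\[
\operatorname{tr}\left[(X_g^\prime X_g)^{-1}X_g^\prime\Omega_gX_g(X_g^\prime X_g)^{-1}\right]\le \lambda_{\max}(\Omega_g)\,\operatorname{tr}\left[(X_g^\prime X_g)^{-1}\right]=\frac{\lambda_{\max}(\Omega_g)}{N_g}\operatorname{tr}\left[\left(\frac{X_g^\prime X_g}{N_g}\right)^{-1}\right].
\]
Assumption \ref{assump:error independence} gives $\lambda_{\max}(\Omega_g)\le c_2N_g$ uniformly in $g$ in the strong case. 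The semi-strong case, where $h(N_g)\le CN_g$ for large $N_g$, and the weak case, where $\lambda_{\max}(\Omega_g)\le c_2\le c_2N_g$ since $N_g\ge1$, also satisfy $\lambda_{\max}(\Omega_g)\le CN_g$. In every regime, and for arbitrary and possibly very unbalanced $N_g$, we get $\lambda_{\max}(\Omega_g)/N_g\le C$ uniformly.

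By Jensen (or Lyapunov) with $p>1$ and Assumption \ref{assump:Xg'Xg/Ng inverse is bounded}, $\mathbb{E}\operatorname{tr}[(X_g^\prime X_g/N_g)^{-1}]\le M^{1/p}$. Hence each summand has expectation at most $CM^{1/p}$, and
\[
\operatorname{tr}(V)\le \frac{G\,CM^{1/p}}{G^2}=O(G^{-1})\to0.
\]

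The only subtle points are the following.
\begin{itemize}
\item The bound must be applied conditionally, using independence of $\Omega_g$ from $X$, before taking expectations.
\item Since $\Omega_g$ is the conditional covariance, Steps 1--2 are justified by the law of total variance.
\item The crude bound $\lambda_{\max}(\Omega_g)\le CN_g$ absorbs all dependence regimes at once. This is why cluster imbalance plays no role.
\end{itemize}

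Under weak dependence the same computation in fact yields the sharper rate $\operatorname{tr}(V)=O\left(G^{-2}\sum_g N_g^{-1}\right)$. Consistency, however, needs only the uniform $O(G^{-1})$ bound.
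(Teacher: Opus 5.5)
Your proposal is correct and follows essentially the same route as the paper: unbiasedness from Assumption~\ref{assump:exogeneity}, reduction of $\mathbb{E}\Vert\hat{\bar\beta}-\beta\Vert^2$ to a sum of within-cluster terms via Assumption~\ref{assump:error independence}, the per-cluster bound $\operatorname{tr}\bigl((X_g'X_g/N_g)^{-1}\bigr)\lambda_{\max}(\Omega_g)/N_g$, and uniform boundedness of both factors to get $O(1/G)$. The differences are only in presentation: you spell out the Jensen step $\mathbb{E}\operatorname{tr}\bigl((X_g'X_g/N_g)^{-1}\bigr)\le M^{1/p}$ and why $\lambda_{\max}(\Omega_g)\le CN_g$ holds in every dependence regime, both of which the paper leaves implicit.
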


The proof of Theorem \ref{thm:consistency} is given in the Appendix. We now provide a lemma that will clarify the subsequent results.

\begin{lemma}\label{lemma:Exact order summability}
    Let $\left\{z_{N_g}\right\}$ and $\left\{a_{N_g}\right\}$ be doubly indexed sequences of positive real numbers such that $z_{N_g}=O_e\left(a_{N_g}\right)$ uniformly in $g,$ where $N_g$ is the cluster size for the $g^{th}$ cluster. Then for any $G\geq 1,$  $\sum_{g=1}^G z_{N_g}=O_e\left(\sum_{g=1}^G a_{N_g}\right).$ 
\end{lemma}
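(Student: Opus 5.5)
The plan is to unpack Definition \ref{def:exact order and order uniformly}(ii)(a) and sum the resulting two-sided inequalities over $g$. By hypothesis there are constants $0<c_1\leq c_2<\infty$, not depending on $g$ or $N_g$, such that
\[
c_1 \leq \inf_{g,N_g}\left|\frac{z_{N_g}}{a_{N_g}}\right| \leq \sup_{g,N_g}\left|\frac{z_{N_g}}{a_{N_g}}\right| \leq c_2 .
\]
Since both $z_{N_g}$ and $a_{N_g}$ are positive, the absolute values can be dropped. Multiplying through by $a_{N_g}>0$ then gives, for every $g$,
\[
c_1 a_{N_g} \leq z_{N_g} \leq c_2 a_{N_g}.
\]

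Next I sum these inequalities over $g=1,\ldots,G$. This gives $c_1 S_a \leq S_z \leq c_2 S_a$, where $S_z=\sum_{g=1}^G z_{N_g}$ and $S_a=\sum_{g=1}^G a_{N_g}$. Because each $a_{N_g}>0$, we have $S_a>0$, so we may divide and obtain
\[
c_1 \leq \frac{S_z}{S_a} \leq c_2 .
\]
The quantity $S_z/S_a$ is positive, so this equals its absolute value.

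The only point needing care is that the resulting statement is again an exact-order statement in the sense of Definition \ref{def:exact order and order uniformly}(i)(a). This requires the constants to be the same for every $G\geq 1$ and every configuration of cluster sizes $(N_1,\ldots,N_G)$. That is exactly what uniformity in $g$ provides: $c_1$ and $c_2$ were fixed before summing and do not depend on $G$ or on the $N_g$. Hence $S_z=O_e(S_a)$ with the same constants $c_1,c_2$, indexed by $G$ (and the cluster sizes). There is no real obstacle; the one thing to emphasize is that without uniformity the constants could degrade as $G$ grows, and the argument would fail.
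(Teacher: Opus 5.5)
Your proof is correct and follows the paper's argument: unpack the uniform exact-order bounds into $c_1 a_{N_g}\le z_{N_g}\le c_2 a_{N_g}$ for every $g$, sum over $g=1,\ldots,G$, and divide by the positive sum $\sum_{g=1}^G a_{N_g}$ to get $c_1\le S_z/S_a\le c_2$, with constants that do not depend on $G$. You are right to stress that uniformity in $g$ is what keeps $c_1,c_2$ fixed as $G$ and the cluster sizes vary.
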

The proof of this lemma is given in the Appendix. Analogous results hold if we replace $O_e(\cdot)$ by $O(\cdot).$

Note that $\hat{\bar{\beta}}-\beta=\frac{1}{G}\sum_{g=1}^{G}\left(\frac{X_{g}^{\prime}X_{g}}{N_{g}}\right)^{-1}\frac{X_{g}^{\prime}\epsilon_{g}}{N_{g}}.$ Clearly, $\hat{\bar{\beta}}$ is an unbiased estimator of $\beta.$ From the proof of Theorem \ref{thm:consistency} given in the Appendix, we have 
\[
\mathbb{E}\left\Vert \hat{\bar{\beta}}-\beta \right\Vert^2 \leq \frac{1}{G^{2}}\sum_{g=1}^{G}\mathbb{E}\left[\operatorname{tr}\left(\left(\frac{X_{g}^{\prime}X_{g}}{N_{g}}\right)^{-1}\right)\right]\frac{\lambda_{max}\left(\Omega_{g}\right)}{N_{g}}.
\]
Then, by Assumption \ref{assump:error independence}, Assumption \ref{assump:Xg'Xg/Ng inverse is bounded}, and Lemma \ref{lemma:Exact order summability}, we obtain
\begin{eqnarray}
    \mathbb{E}\left\Vert \hat{\bar{\beta}}-\beta \right\Vert^2 = \begin{cases}
O\left(\frac{1}{G}\right), & \text{for strong dependence}\\
O\left(\frac{1}{G^{2}}\sum_{g=1}^{G}\frac{h(N_g)}{N_{g}}\right), & \text{for semi-strong dependence}\\
O\left(\frac{1}{G^{2}}\sum_{g=1}^{G}\frac{1}{N_{g}}\right), & \text{for weak dependence}
\end{cases}
\end{eqnarray}
\begin{corollary}
     The above discussion yields that $\hat{\bar\beta}$ is $\sqrt{G}$-consistent for strong dependence. $\hat{\bar\beta}$ is $\sqrt{G}\cdot \bar{h}^{-\frac{1}{2}}_1$-consistent for semi-strong dependence, and $\sqrt{G}\cdot \bar{m}^{-\frac{1}{2}}$-consistent for weak dependence, where $\bar{h}_{1}=\frac{1}{G}\sum_{g=1}^G \frac{h(N_g)}{N_g}$ and $\bar{m}=\frac{1}{G}\sum_{g=1}^G \frac{1}{N_g}.$  
\end{corollary}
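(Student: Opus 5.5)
The plan is to read the corollary off the mean-squared-error bound displayed just above it and then apply Chebyshev's (Markov's) inequality. Since $\hat{\bar\beta}$ is unbiased, $\mathbb{E}\Vert\hat{\bar\beta}-\beta\Vert^2$ is the trace of $V$ in \eqref{eq: V}. For any deterministic rate $r_G>0$, Markov's inequality gives
\[
\mathbb{P}\left(r_G\Vert\hat{\bar\beta}-\beta\Vert>M\right)\le \frac{r_G^2\,\mathbb{E}\Vert\hat{\bar\beta}-\beta\Vert^2}{M^2}.
\]
So whenever $r_G^2\,\mathbb{E}\Vert\hat{\bar\beta}-\beta\Vert^2=O(1)$, choosing $M_\varepsilon$ large makes the right side less than $\varepsilon$ for all $G$. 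That is, $\hat{\bar\beta}-\beta=O_P(r_G^{-1})$ in the sense of Definition \ref{def:stochastically bounded uniformly}(i), which is what "$r_G$-consistent" means here.

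Next I would verify the three cases separately, choosing $r_G$ to cancel the rate in the display.

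\emph{Strong dependence.} Here $\lambda_{\max}(\Omega_g)/N_g=O(1)$ uniformly in $g$ for the strongly dependent clusters. The semi-strong and weak clusters have even smaller ratios $h(N_g)/N_g$ and $1/N_g$, which are bounded because $h(N_g)/N_g\to0$. By Lemma \ref{lemma:Exact order summability} (in its $O(\cdot)$ version) together with Assumption \ref{assump:Xg'Xg/Ng inverse is bounded}, the bound is therefore $O(G/G^2)=O(1/G)$. Taking $r_G=\sqrt G$ gives the claim.

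\emph{Semi-strong and weak dependence.} The bound equals $O(\bar h_1/G)$ and $O(\bar m/G)$ respectively, so I take $r_G=\sqrt{G}\,\bar h_1^{-1/2}$ and $r_G=\sqrt G\,\bar m^{-1/2}$; then $r_G^2\,\mathbb{E}\Vert\hat{\bar\beta}-\beta\Vert^2=O(1)$ directly. In the semi-strong case, the weakly dependent clusters contribute terms of order $1/N_g\le h(N_g)/N_g$, at least for large $N_g$ (since $h\uparrow\infty$). For small $N_g$, both $1/N_g$ and $h(N_g)/N_g$ are $O_e(1)$, so these clusters are absorbed into $\bar h_1$ up to a constant.

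Assumption \ref{assump:Xg'Xg/Ng inverse is bounded} with $p>1$ gives, by Jensen's inequality, $\sup_g\mathbb{E}\operatorname{tr}((X_g'X_g/N_g)^{-1})\le M^{1/p}$. This is the uniform constant needed to pull the expectation factor out of the sum.

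The main point to handle carefully is that the rates are nontrivial, i.e.\ $r_G\to\infty$, so the statement is a genuine consistency rate. For strong dependence this is immediate. In the other two cases it requires $\bar h_1/G\to0$ and $\bar m/G\to0$, which hold because $\bar h_1$ and $\bar m$ are averages of bounded quantities and hence $O(1)$. Indeed $h(N_g)/N_g$ is bounded since the ratio tends to zero, and $1/N_g\le1$. Consequently $r_G\ge c\sqrt G\to\infty$, and $r_G$ may grow faster than $\sqrt G$ when the $N_g$ diverge.
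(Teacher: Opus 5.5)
Your proposal is correct and follows the paper's route: the corollary is read off from the mean-squared-error orders derived in the proof of Theorem \ref{thm:consistency} via Assumptions \ref{assump:error independence} and \ref{assump:Xg'Xg/Ng inverse is bounded} and Lemma \ref{lemma:Exact order summability}. Your explicit Chebyshev step, the Jensen bound $\sup_g\mathbb{E}\operatorname{tr}((X_g'X_g/N_g)^{-1})\le M^{1/p}$, and your treatment of mixed cluster types fill in details the paper leaves implicit; for the strong case you could shorten the argument by using the uniformly bounded diagonal of $\Omega_g$, which gives $\lambda_{\max}(\Omega_g)\le\operatorname{tr}(\Omega_g)\le cN_g$ for every cluster.
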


\begin{assumption}\label{assump:existence of moment}
 Assume that for some $0<c<\infty,$ and for some $p>1,$ 
\[
\mathbb{E}\left[ |  \epsilon_{gi}  | ^{2p} \Big|  X \right]\leq c,\quad \text{ for all } i, g \text{ and uniformly in } X.
\]
\end{assumption}

The above assumption essentially ensures the existence of moments beyond the second order, thereby controlling the tail behavior of the error terms $\epsilon_{gi}.$ It also establishes the necessary regularity conditions for deriving the limiting distribution of the estimator.

\begin{assumption}\label{assump:lambda max V_G=lambda min V_G}
Assume that 
\[
V_{G}=\frac{1}{G}\sum_{g=1}^{G}\mathbb{E}\left[\left(X_{g}^{\prime}X_{g}\right)^{-1}X_{g}^{\prime}\Omega_{g}X_{g}\left(X_{g}^{\prime}X_{g}\right)^{-1}\right]
\]
 is finite and positive definite. Also, assume that 
\[
O_{e}\left(\lambda_{min}\left(V_{G}\right)\right)=O_{e}\left(\lambda_{max}\left(V_{G}\right)\right)=\begin{cases}
O_{e}\left(1\right), & \text{for strong dependence}\\
O_{e}\left(\frac{1}{G}\sum_{g=1}^{G}\frac{h\left(N_{g}\right)}{N_{g}}\right), & \text{for semi-strong dependence}\\
O_{e}\left(\frac{1}{G}\sum_{g=1}^{G}\frac{1}{N_{g}}\right), & \text{for weak dependence}
\end{cases}
\]
as in Definition \ref{def:exact order and order uniformly}.  
\end{assumption}

For weak dependence, the above follows immediately because both the smallest and largest eigenvalues satisfy $\lambda_{min}\left(\Omega_{g}\right)=O_{e}(1)=\lambda_{max}\left(\Omega_{g}\right)$
uniformly in $g.$ Also, by applying Lemma \ref{lemma:Exact order summability}, we can rigorously characterize the orders of $\lambda_{\min}\left(V_{G}\right)$ and $\lambda_{\max}\left(V_{G}\right),$ for the three kinds of dependence.

\begin{theorem}\label{thm:normal}
Under Assumptions  \ref{assump:exogeneity}, \ref{assump:error independence}, \ref{assump:Xg'Xg/Ng inverse is bounded}, \ref{assump:existence of moment} and \ref{assump:lambda max V_G=lambda min V_G}, for strong dependence
\[
\sqrt{G}V_{G}^{-1/2}\left(\hat{\bar{\beta}}-\beta\right)\xrightarrow{d}N_{k}\left(0,I_{k}\right), \; \text{ as } G\to\infty.
\]
In fact, for semi-strong dependence, the asymptotic normality holds if for some $p>1,$ $\frac{G}{\left[\sum_{g=1}^{G}\frac{h\left(N_{g}\right)}{N_{g}}\right]^{p}}\to0,$
as $G\to\infty.$ For weak dependence, the asymptotic normality holds
if for some $p>1,$ $\frac{G}{\left[\sum_{g=1}^{G}\frac{1}{N_{g}}\right]^{p}}\to0,$ as $G\to\infty.$
\end{theorem}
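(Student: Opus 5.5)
We write $\sqrt{G}\left(\hat{\bar\beta}-\beta\right)=\frac{1}{\sqrt G}\sum_{g=1}^G \xi_g$ with
\[
\xi_g=\left(X_g^{\prime}X_g\right)^{-1}X_g^{\prime}\epsilon_g .
\]
By Assumptions \ref{assump:exogeneity} and \ref{assump:error independence}, the $\xi_g$ are independent across $g$ (or at least uncorrelated, with a martingale-difference structure conditional on $X$). They have mean zero, and $\frac1G\sum_g \mathbb{E}[\xi_g\xi_g^{\prime}]=V_G$. By the Cram\'er--Wold device it suffices to fix a unit vector $a\in\mathbb{R}^k$ and show that
\[
S_G=\sum_{g=1}^G \frac{a^{\prime}V_G^{-1/2}\xi_g}{\sqrt G}\xrightarrow{d}N(0,1).
\]
The summands are independent and mean zero, and $\operatorname{Var}(S_G)=1$. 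We therefore apply the Lyapunov central limit theorem for triangular arrays with exponent $2p$, where $p>1$ is taken as the minimum of the $p$'s in Assumptions \ref{assump:Xg'Xg/Ng inverse is bounded} and \ref{assump:existence of moment}; reducing $p$ is harmless. It remains to verify
\[
L_G=\frac{1}{G^{p}}\sum_{g=1}^G \mathbb{E}\left|a^{\prime}V_G^{-1/2}\xi_g\right|^{2p}\to 0 .
\]

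\textbf{Bounding the $2p$-th moments.} We have $|a^{\prime}V_G^{-1/2}\xi_g|^{2p}\le \lambda_{\min}(V_G)^{-p}\Vert\xi_g\Vert^{2p}$. Write
\[
\xi_g=\left(\frac{X_g^{\prime}X_g}{N_g}\right)^{-1}\frac{X_g^{\prime}\epsilon_g}{N_g}.
\]
Conditioning on $X$, we need a bound of the form
\[
\mathbb{E}\left[\Vert X_g^{\prime}\epsilon_g\Vert^{2p}\mid X\right]\le C\left(\operatorname{tr}(X_g^{\prime}\Omega_gX_g)\right)^{p}\le C\left(\lambda_{\max}(\Omega_g)\operatorname{tr}(X_g^{\prime}X_g)\right)^{p}.
\]
This is a Rosenthal/Marcinkiewicz--Zygmund-type moment inequality for linear forms in the dependent errors $\epsilon_g$, justified using Assumption \ref{assump:existence of moment}. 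It is the main obstacle: for general within-cluster dependence, the bound $\mathbb{E}|c^{\prime}\epsilon_g|^{2p}\lesssim (c^{\prime}\Omega_g c)^p$ is not automatic. I would first try to obtain it by representing $\epsilon_g=\Omega_g^{1/2}u_g$ with $u_g$ having uncorrelated components with bounded $2p$-th moments; failing that, I would accept a cruder bound via Minkowski's inequality, $\Vert c^{\prime}\epsilon_g\Vert_{2p}\le \sum_i|c_i|\,\Vert\epsilon_{gi}\Vert_{2p}$, which under strong dependence ($\lambda_{\max}(\Omega_g)\asymp N_g$) has the same order anyway.

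Combining this with $\Vert (X_g^{\prime}X_g)^{-1}X_g^{\prime}\Vert^2=\operatorname{tr}((X_g^{\prime}X_g)^{-1})$, we get
\[
\mathbb{E}\Vert\xi_g\Vert^{2p}\le C\,\mathbb{E}\left[\operatorname{tr}\left(\left(\tfrac{X_g^{\prime}X_g}{N_g}\right)^{-1}\right)^{p}\right]\left(\frac{\lambda_{\max}(\Omega_g)}{N_g}\right)^{p}\le CM\left(\frac{\lambda_{\max}(\Omega_g)}{N_g}\right)^{p},
\]
where the last step uses Assumption \ref{assump:Xg'Xg/Ng inverse is bounded}.

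\textbf{Rates for the three regimes.} By Assumption \ref{assump:error independence}, $\lambda_{\max}(\Omega_g)/N_g$ is $O(1)$, $O(h(N_g)/N_g)$, or $O(1/N_g)$ uniformly in $g$, according to the regime.

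In the strong case, Assumption \ref{assump:lambda max V_G=lambda min V_G} gives $\lambda_{\min}(V_G)^{-p}=O(1)$. Hence $L_G=O(G\cdot G^{-p})=O(G^{1-p})\to0$.

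In the semi-strong case, let $s_G=\sum_g h(N_g)/N_g$, so that $\lambda_{\min}(V_G)\asymp s_G/G$. Since each $h(N_g)/N_g$ is bounded, $\sum_g (h(N_g)/N_g)^p\le C s_G$. This yields
\[
L_G\le C\,\frac{G^{p}}{s_G^{p}}\cdot\frac{s_G}{G^{p}}=C\,\frac{s_G}{s_G^{p}}=C\,s_G^{1-p}.
\]
We have $s_G^{1-p}\le G\,s_G^{-p}$ because $s_G\le CG$ and therefore $s_G^{1-p}\cdot s_G^{p}/G=s_G/G$ is bounded. So the hypothesis $G/s_G^{p}\to0$ suffices; indeed $G/s_G^{p}\to0$ also forces $s_G\to\infty$, which alone gives $s_G^{1-p}\to 0$.

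The weak case is identical with $h\equiv1$, using $\sum_g N_g^{-p}\le\sum_g N_g^{-1}$ and the stated condition $G/(\sum_g N_g^{-1})^{p}\to0$.

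Lyapunov's condition therefore holds in each regime, and the Cram\'er--Wold device gives
\[
\sqrt G\,V_G^{-1/2}\left(\hat{\bar\beta}-\beta\right)\xrightarrow{d}N_k(0,I_k).
\]
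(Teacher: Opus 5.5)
Your architecture is the paper's: Cram\'er--Wold, then Lyapounov with exponent $2p$, with the numerator controlled by $\lambda_{\min}(V_G)^{-p}$ times a $2p$-th moment of the cluster term and the denominator equal to $G^p$. The step that does not hold up is the moment bound you actually use in the rate section. The inequality $\mathbb{E}[|c'\epsilon_g|^{2p}\mid X]\lesssim(\lambda_{\max}(\Omega_g)\,c'c)^p$ is not implied by Assumption \ref{assump:existence of moment}, which only bounds marginal moments of the $\epsilon_{gi}$. Your proposed rescue, $\epsilon_g=\Omega_g^{1/2}u_g$ with uncorrelated components, does not deliver it either. Marcinkiewicz--Zygmund/Rosenthal bounds need independence (or a martingale-difference structure) among the components, and uniformly bounded $2p$-th moments of $\Omega_g^{-1/2}\epsilon_g$; neither is assumed. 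So the factor $(\lambda_{\max}(\Omega_g)/N_g)^p$ in your semi-strong and weak computations is unjustified.

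The fix is the Minkowski fallback you already name. Contrary to your remark, it suffices in all three regimes, not only the strong one. Apply it to the scalar linear form itself, $a'V_G^{-1/2}\xi_g=c_g'\epsilon_g$ with $c_g=X_g(X_g'X_g)^{-1}V_G^{-1/2}a$. Bounding $\|X_g'\epsilon_g\|$ first and then multiplying by a norm of $(X_g'X_g)^{-1}$ would need upper moments of $X_g'X_g/N_g$, which Assumption \ref{assump:Xg'Xg/Ng inverse is bounded} does not provide. The linear-form version gives $\mathbb{E}[|c_g'\epsilon_g|^{2p}\mid X]\le cN_g^p\|c_g\|^{2p}$ with $\|c_g\|^2\le\operatorname{tr}((X_g'X_g)^{-1})/\lambda_{\min}(V_G)$. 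Hence $L_G\le cM\,G^{1-p}\lambda_{\min}(V_G)^{-p}$. By Assumption \ref{assump:lambda max V_G=lambda min V_G} this is $O(G^{1-p})$, $O(G/s_G^{p})$ and $O\bigl(G/(\sum_g N_g^{-1})^{p}\bigr)$ in the three regimes, which are exactly the theorem's side conditions.

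This is, in substance, the paper's proof. It bounds the quadratic form by $(z'V_G^{-1}z)\operatorname{tr}((X_g'X_g)^{-1})\,\epsilon_g'\epsilon_g$ and uses $\mathbb{E}[(\epsilon_g'\epsilon_g)^p\mid X]=O(N_g^p)$, never exploiting $\lambda_{\max}(\Omega_g)$. The hypotheses $G/s_G^p\to0$ are the price of that crude bound.

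Two consequences follow. First, on this route ``reducing $p$ is harmless'' is no longer true. Since $G/s_G^{p}$ increases as $p$ decreases, the exponent in the side condition must be one at which Assumptions \ref{assump:Xg'Xg/Ng inverse is bounded} and \ref{assump:existence of moment} hold; the paper tacitly uses a single $p$. Second, your sharper route would prove a stronger statement if you added structure that makes it valid, e.g.\ $\epsilon_g=\Omega_g^{1/2}\zeta_g$ with independent, uniformly $2p$-bounded components of $\zeta_g$. Then only $s_G\to\infty$ (resp.\ $\sum_g N_g^{-1}\to\infty$) would be needed, with no matching of exponents.
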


The proof of Theorem \ref{thm:normal} is given in the Appendix.

\begin{remark} \label{rem:conditions for normality of beta hat}
    Note that for nearly balanced clusters case with $N_{g}\simeq N,$ for strong dependence, the Liapounov condition holds trivially, since it reduces to $O\left(G^{1-p}\right),$ which goes to $0,$ as $G\to\infty,$ for some $p>1.$ For semi-strong dependence with $h(N)\to\infty$ and $N\to\infty,$ the condition reduces to $O\left(\left(\frac{N}{h(N)}\right)^{p}G^{1-p}\right)\to0,$ as $G\to\infty,$ for some $p>1.$ For weak dependence with $N\to\infty,$ the condition reduces to $O\left(N^{p}G^{1-p}\right)\to0,$ as $G\to\infty,$ for some $p>1.$ These are somewhat similar conditions to those used for the asymptotic normality of the POLS estimator.
\end{remark}

\begin{remark}
    If we assume that for some $p>1,$ 
\[
\frac{G}{\left[\lambda_{min}\left(\sum_{g=1}^{G}\mathbb{E}\left[\left(X_{g}^{\prime}X_{g}\right)^{-1}X_{g}^{\prime}\Omega_{g}X_{g}\left(X_{g}^{\prime}X_{g}\right)^{-1}\right]\right)\right]^{p}}\to0,\text{ as }G\to\infty.
\]
then, the asymptotic normality of $\hat{\bar{\beta}}$ holds without Assumption \ref{assump:lambda max V_G=lambda min V_G}. Although this assumption is relatively weak, more specific conditions are imposed in Assumption \ref{assump:lambda max V_G=lambda min V_G}.
\end{remark}

\begin{assumption}\label{assump:Order of second moment of Xg'Omega_g Xg}
Assume that 
\[
\mathbb{E}\left\Vert X_{g}^{\prime}\epsilon_{g}\epsilon_{g}^{\prime}X_{g}-X_{g}^{\prime}\Omega_{g}X_{g}\right\Vert ^{2}=\begin{cases}
O\left(N_{g}^{4}\right),  & \; \text{for strong dependence}\\
O\left(N_{g}^{2}h^{2}\left(N_{g}\right)\right), & \; \text{for semi-strong dependence}\\
O\left(N_{g}^{2}\right), & \; \text{for weak dependence}
\end{cases}
\]
uniformly in $g.$
\end{assumption}

This assumption controls the second-order variability of the cluster-level quadratic form $X_g^{\prime}\epsilon_g\epsilon_g^{\prime}X_g$ around its mean $X_g^{\prime}\Omega_g X_g$ as the cluster size grows. 
The deviation is allowed to increase with \(N_g\), but at different rates depending on whether within-cluster dependence is strong, semi-strong, or weak.

\begin{lemma}\label{lemma: sum h^2/N^2 to 0}
 Let $\{N_g\}_{g=1}^{\infty}$ be a sequence of positive integers and let $h$ be a function of $N_g$ such that $h(N_g)\uparrow \infty,$ as $N_g\uparrow \infty,$ but $h(N_g) = o(N_g).$ Then,
    \[
    \frac{\sum_{g=1}^{G} \frac{h^{2}(N_{g})}{N_{g}^{2}}}{\left( \sum_{g=1}^{G} \frac{h(N_{g})}{N_{g}} \right)^{2}} \to 0 \; \text{ and } \; \frac{\sum_{g=1}^{G} \frac{1}{N_{g}^{2}}}{\left( \sum_{g=1}^{G} \frac{1}{N_{g}} \right)^{2}} \to 0, \; \text{ as } G \to \infty
    \]
 holds for (i) the nearly balanced case and (ii) the unbalanced case with $O(G)$ clusters being nearly balanced.
\end{lemma}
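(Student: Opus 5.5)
The plan is to treat both ratios at once. Set $a_g := h(N_g)/N_g$ in the semi-strong case and $a_g := 1/N_g$ in the weak case. In both cases $a_g>0$, and $a_g$ is bounded above uniformly in $g$. For $a_g=1/N_g$ this is immediate since $N_g\geq 1$. For $a_g=h(N_g)/N_g$ it follows because $h(N_g)=o(N_g)$ forces $h(N)/N\to 0$, so $\sup_N h(N)/N=:A<\infty$ (positive integers, finitely many small values). Both claims then reduce to showing
\[
\frac{\sum_{g=1}^G a_g^2}{\left(\sum_{g=1}^G a_g\right)^2}\to 0 .
\]
The first bound is elementary: $\sum_g a_g^2\le \left(\max_g a_g\right)\sum_g a_g\le A\sum_g a_g$. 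Hence the ratio is at most $A/\sum_{g=1}^G a_g$, and it suffices to show that $\sum_{g=1}^G a_g\to\infty$.

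\textbf{Nearly balanced case.} Here $N_g\simeq N$ with $\min_g N_g/\max_g N_g\ge c>0$, so $a_g$ is of the same exact order $a(N)$ for every $g$. By Lemma \ref{lemma:Exact order summability},
\[
\sum_g a_g^2=O_e\left(G\,a(N)^2\right), \qquad \left(\sum_g a_g\right)^2=O_e\left(G^2 a(N)^2\right),
\]
so the ratio is $O(1/G)\to 0$, regardless of whether $N$ diverges. This direct route avoids needing $\sum_g a_g\to\infty$. For the semi-strong case I would use monotonicity of $h$ to compare $h(N_g)$ with $h(\min_g N_g)$ and $h(\max_g N_g)$. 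The exact-order comparison needs $h(cN)\gtrsim h(N)$, a mild regularity of $h$ (e.g.\ regular variation, or simply $h$ nondecreasing together with $N_g/N_{g'}$ bounded). I would state this explicitly as the interpretation of the uniform-rate remark preceding the lemma.

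\textbf{Unbalanced case.} Assume there is a subset $S_G$ with $|S_G|\ge \delta G$ on which the clusters are nearly balanced, with a common order $N^*$ (possibly depending on $G$). Then
\[
\sum_{g=1}^G a_g\ge\sum_{g\in S_G}a_g\ge c\,\delta\, G\, a(N^*).
\]
For the numerator I would split the sum into $S_G$ and its complement. The part over $S_G$ is handled as in the balanced case and gives $O(G a(N^*)^2)$. The complementary part is bounded by $A\sum_{g\notin S_G}a_g$.

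The key step, and the main obstacle, is controlling this complementary part. The crude bound gives the ratio at most $A/\sum_g a_g\le A/(c\delta G a(N^*))$, which tends to zero precisely when $G\,a(N^*)\to\infty$. That condition holds automatically for weak dependence if $N^*=o(G)$, but it is not automatic in general. So I would first try the finer inequality
\[
\frac{\sum_g a_g^2}{\left(\sum_g a_g\right)^2}\le \frac{\max_g a_g}{\sum_g a_g},
\]
and bound $\max_g a_g$ relative to $a(N^*)$. Small clusters have the largest $a_g$, but $a_g\le A$ always. The honest resolution is to note that the statement implicitly requires $\sum_g a_g\to\infty$, which is exactly the regime in which the Liapounov-type conditions of Theorem \ref{thm:normal} are meaningful. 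I would impose it, or derive it from $N^*=o(G)$ and $h(N^*)/N^*\cdot G\to\infty$, and then conclude using the balanced-subset lower bound.
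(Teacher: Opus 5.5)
Your balanced case is fine and matches the paper, which writes the ratio as $1/(1+a_G)$ with $a_G=G-1$; that is, the ratio is $O(1/G)$ with no growth condition on $\sum_g a_g$.

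\textbf{The gap is in the unbalanced case.} There you end up imposing $\sum_{g=1}^G a_g\to\infty$. That hypothesis is not in the lemma, so you prove a weaker statement. Your justification, that the lemma ``implicitly requires'' it, is also incorrect. Take weak dependence with $N_g=G^2$ for all $g$: then $\sum_g a_g=1/G\to 0$, yet the ratio equals $1/G\to 0$. The same holds if you add a few clusters of size $N_j\gg G^2$.

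\textbf{The missing idea is the size ordering built into the paper's case (ii).} The clusters outside the nearly balanced bulk are the \emph{large} ones: $N_1,\dots,N_L\to\infty$ with $N/N_j=o(1)$. Since $1/x$ is decreasing, those clusters have $a_j\le C\,a(N^*)$. The same holds for $h(x)/x$ under the regularity the paper uses in its proof, namely $h(N_j)N/(N_jh(N))\to 0$. Hence $\max_g a_g\le C\,a(N^*)$, and the finer inequality you already wrote down closes the argument at once:
\[
\frac{\sum_g a_g^2}{\left(\sum_g a_g\right)^2}\le\frac{\max_g a_g}{\sum_g a_g}\le\frac{C\,a(N^*)}{c\,\delta\,G\,a(N^*)}=O\!\left(\frac{1}{G}\right).
\]
The paper does the same computation in the $1/(1+a_G)$ form. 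It keeps only the bulk--bulk cross terms in the numerator of $a_G$. It then notes that the outlier squares in the denominator are $o\bigl(\tfrac{L}{G-L}\bigr)$ relative to the bulk contribution $(G-L)h^2(N)/N^2$.

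You were right that something beyond ``$O(G)$ clusters nearly balanced'' is needed, but it is needed in the other direction. If the outliers are \emph{smaller} than the bulk, the statement can fail. For example, weak dependence with one cluster of size $1$ and $G-1$ clusters of size $G^2$ gives a ratio tending to $1$. So the correct repair is to state the ordering explicitly: the non-balanced clusters are larger than the bulk, so that $\max_g a_g=O(a(N^*))$. Assuming $\sum_g a_g\to\infty$ is the wrong fix, because it excludes cases the lemma is meant to cover.
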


\begin{theorem}\label{thm: consistency of V hat}
Under Assumptions  \ref{assump:exogeneity}, \ref{assump:error independence}, \ref{assump:Xg'Xg/Ng inverse is bounded}, \ref{assump:existence of moment}, \ref{assump:lambda max V_G=lambda min V_G}, \ref{assump:Order of second moment of Xg'Omega_g Xg}, and Lemma \ref{lemma: sum h^2/N^2 to 0}, for nearly balanced clusters, under strong, semi-strong, or weak dependence, $\hat{V}_G $ is consistent for $V_G$, where 
\[
\hat{V}_G = \frac{1}{G}\sum_{g=1}^{G}\left(X_{g}^{\prime}X_{g}\right)^{-1}X_{g}^{\prime}e_{g}e_{g}^{\prime}X_{g}\left(X_{g}^{\prime}X_{g}\right)^{-1}
\]
with $e_{g}=Y_{g}-X_{g}\hat{\bar{\beta}}.$ 
For the unbalanced case, the consistency holds for: (i) strong dependence generally, (ii) weak and semi-strong dependence, if $O(G)$ clusters are nearly balanced.
\end{theorem}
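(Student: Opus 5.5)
We prove consistency in the relative sense natural to Assumption \ref{assump:lambda max V_G=lambda min V_G}: writing $r_G$ for the common exact order of $\lambda_{\min}(V_G)$ and $\lambda_{\max}(V_G)$ ($r_G=1$, $\bar h_1$ or $\bar m$ in the three dependence cases), we show $\Vert \hat V_G - V_G\Vert / r_G \xrightarrow{p} 0$. This is equivalent to $V_G^{-1/2}\hat V_G V_G^{-1/2}\xrightarrow{p} I_k$. Set $A_g=(X_g'X_g)^{-1}$. Since $e_g=\epsilon_g - X_g(\hat{\bar\beta}-\beta)$, write $\delta=\hat{\bar\beta}-\beta$. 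Then
\begin{align*}
\hat V_G - V_G ={}& \frac{1}{G}\sum_{g} \Big(A_gX_g'\epsilon_g\epsilon_g'X_gA_g - \mathbb{E}\big[A_gX_g'\Omega_gX_gA_g\big]\Big) \\
& - \frac{1}{G}\sum_g \Big(A_gX_g'\epsilon_g\delta' + \delta\epsilon_g'X_gA_g\Big) + \delta\delta'.
\end{align*}
Call these terms $T_1$, $T_2$ and $T_3$.

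\emph{Step 1 ($T_3$).} By the mean-square bound derived after Theorem \ref{thm:consistency}, $\Vert\delta\Vert^2 = O_P(r_G/G)$. Hence $\Vert T_3\Vert/r_G=O_P(1/G)\to0$.

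\emph{Step 2 ($T_2$).} Set $u_g=A_gX_g'\epsilon_g=\hat\beta_g-\beta$. Then $\Vert T_2\Vert\le 2\Vert\delta\Vert\cdot\frac1G\sum_g\Vert u_g\Vert$. By Cauchy--Schwarz and the per-cluster bound $\mathbb{E}\Vert u_g\Vert^2\le \mathbb{E}[\operatorname{tr}((X_g'X_g/N_g)^{-1})]\lambda_{\max}(\Omega_g)/N_g$ from the proof of Theorem \ref{thm:consistency}, we get $\frac1G\sum_g\Vert u_g\Vert = O_P(r_G^{1/2})$, using Lemma \ref{lemma:Exact order summability}. Hence $\Vert T_2\Vert/r_G=O_P(G^{-1/2})\to0$.

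\emph{Step 3 ($T_1$), the main obstacle.} Split $T_1=T_{1a}+T_{1b}$, where
\[
T_{1a}=\frac1G\sum_g A_g\big(X_g'\epsilon_g\epsilon_g'X_g-X_g'\Omega_gX_g\big)A_g
\]
and
\[
T_{1b}=\frac1G\sum_g\big(A_gX_g'\Omega_gX_gA_g-\mathbb{E}[A_gX_g'\Omega_gX_gA_g]\big).
\]
For $T_{1a}$, conditionally on $X$ the summands are independent across $g$ (Assumption \ref{assump:error independence}) with mean zero. A conditional second-moment computation therefore gives
\[
\mathbb{E}\Vert T_{1a}\Vert^2\le \frac1{G^2}\sum_g \mathbb{E}\Big[\Vert A_g\Vert_e^4\,\Vert X_g'\epsilon_g\epsilon_g'X_g-X_g'\Omega_gX_g\Vert^2\Big].
\]
I would bound $\Vert A_g\Vert_e\le N_g^{-1}\operatorname{tr}((X_g'X_g/N_g)^{-1})$ and then use Assumption \ref{assump:Order of second moment of Xg'Omega_g Xg}. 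The weak point is that this needs a fourth-moment control of the trace of the inverse jointly with the quadratic form. I would handle it by Hölder, using the $p$-moments in Assumptions \ref{assump:Xg'Xg/Ng inverse is bounded} and \ref{assump:existence of moment}, possibly truncating. The result is $\mathbb{E}\Vert T_{1a}\Vert^2 = O(G^{-2}\sum_g 1)$, $O(G^{-2}\sum_g h^2(N_g)/N_g^2)$ or $O(G^{-2}\sum_g N_g^{-2})$ in the three cases.

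Dividing by $r_G^2$, the strong case gives $O(1/G)\to0$ for any cluster sizes. The semi-strong and weak cases give exactly the ratios
\[
\frac{\sum_g h^2(N_g)/N_g^2}{\big(\sum_g h(N_g)/N_g\big)^2}
\qquad\text{and}\qquad
\frac{\sum_g N_g^{-2}}{\big(\sum_g N_g^{-1}\big)^2},
\]
which vanish by Lemma \ref{lemma: sum h^2/N^2 to 0} in the nearly balanced case, or when $O(G)$ clusters are nearly balanced. This is precisely the dichotomy stated in the theorem.

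For $T_{1b}$, the summands are independent across $g$ (as clusters are independent) and mean zero. Each has second moment bounded by $\mathbb{E}\Vert A_g\Vert_e^2\lambda_{\max}(\Omega_g)^2\Vert X_g'X_g\Vert_e^2$, which is of order $(\lambda_{\max}(\Omega_g)/N_g)^2$ given bounded design moments. The same weak-law argument and the same lemma therefore apply. Combining Steps 1--3 yields the theorem.
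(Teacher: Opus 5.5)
Your proposal follows the same route as the paper. Both expand $e_ge_g'$ around $\epsilon_g\epsilon_g'$, control the centred term by a conditional second-moment bound using Assumption~\ref{assump:Order of second moment of Xg'Omega_g Xg}, normalise by $\Vert V_G\Vert\asymp r_G$, and reduce the semi-strong and weak cases to the ratios of Lemma~\ref{lemma: sum h^2/N^2 to 0}.

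\emph{Cross terms.} The paper uses a simplification you missed: $\frac1G\sum_g A_gX_g'\epsilon_g=\delta$ exactly. Hence your cross term equals $-2\delta\delta'$. Equivalently, $\hat V_G=\frac1G\sum_g(\hat\beta_g-\hat{\bar\beta})(\hat\beta_g-\hat{\bar\beta})'$ and $\hat V_G-V_G=T_1-\delta\delta'$. This gives $O_P(r_G/G)$ directly, though your Cauchy--Schwarz bound $O_P(r_G/\sqrt G)$ also suffices.

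\emph{The step you flag.} It is a real problem, and H\"older cannot close it. Let $\tau_g=\operatorname{tr}((X_g'X_g/N_g)^{-1})$ and $Q_g=\Vert X_g'\epsilon_g\epsilon_g'X_g-X_g'\Omega_gX_g\Vert^2$. Assumption~\ref{assump:Xg'Xg/Ng inverse is bounded} gives only some $p>1$ moment of $\tau_g$, and only $\mathbb{E}Q_g$ is controlled. Your bound, however, needs $\mathbb{E}[\tau_g^4Q_g]$.

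The paper does not resolve this either. It asserts $\mathbb{E}\Vert(X_g'X_g)^{-1}\Vert^4=O(N_g^{-4})$ ``by Assumption~\ref{assump:Xg'Xg/Ng inverse is bounded}'' and factors $\mathbb{E}[\Vert A_g\Vert^4\,\mathbb{E}(Q_g\mid X)]$ into a product. That tacitly needs $p\ge4$ and a conditional version of Assumption~\ref{assump:Order of second moment of Xg'Omega_g Xg}.

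Truncation does work with $p>1$:
\begin{enumerate}
\item The event $\{\tau_g\le K\}$ is $X$-measurable, so conditional mean zero and cross-cluster orthogonality survive.
\item Since $\Vert A_g\Vert_e\le\tau_g/N_g$, the truncated sum has $\mathbb{E}\Vert\cdot\Vert^2/r_G^2=O(K^4\rho_G)$. Here $\rho_G$ is $1/G$ under strong dependence and the relevant Lemma~\ref{lemma: sum h^2/N^2 to 0} ratio otherwise.
\item On $\{\tau_g>K\}$, let $Y_g$ be the $T_{1a}$ summand. Then $\mathbb{E}[\Vert Y_g\Vert\mid X]\le 2\lambda_{\max}(\Omega_g)\tau_g/N_g$ and $\mathbb{E}[\tau_g\mathbf{1}\{\tau_g>K\}]\le MK^{1-p}$, giving an $L^1$ bound $O(r_GK^{1-p})$.
\item Take $K=K_G\to\infty$ with $K_G^4\rho_G\to0$.
\end{enumerate}

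\emph{The term $T_{1b}$.} The paper silently drops it: its computation of $\mathbb{E}\Vert T_1\Vert^2$ replaces $\mathbb{E}[A_gX_g'\Omega_gX_gA_g]$ by $A_gX_g'\Omega_gX_gA_g$. You are right that the term is there. But ``clusters are independent'' is not among the stated assumptions; Assumption~\ref{assump:error independence} restricts only the errors. If all $X_g$ share a common random factor (say $X_g=\xi\tilde X_g$), then $T_{1b}/r_G$ does not vanish.

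So you must either add cross-cluster independence of the designs, in which case the same truncated weak law handles $T_{1b}$ with $p>1$, or read $V_G$ conditionally on $X$, as the paper's argument effectively does.

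Also, your summand bound for $T_{1b}$ should be $\lambda_{\max}(\Omega_g)^2\Vert A_g\Vert_e^2$, since $A_gX_g'X_gA_g=A_g$. With the extra factor $\Vert X_g'X_g\Vert_e^2$ you wrote, the bound is of order $\lambda_{\max}(\Omega_g)^2$, not $(\lambda_{\max}(\Omega_g)/N_g)^2$. Even the corrected bound needs $\mathbb{E}\tau_g^2<\infty$ unless you truncate.
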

The proof of Theorem \ref{thm: consistency of V hat} is given in the Appendix.

\begin{remark}
    The consistency of $\hat{V}_G$ holds for weak or semi-strong dependence for unbalanced clusters, even if $O(G)$ clusters (say, $G/2$ clusters) are extremely large and $O(G)$ clusters (say, $G/2$ clusters) are nearly balanced.
\end{remark}

\begin{remark}\label{Rem:asympnorm and consistency of V for unbalanced}
\begin{enumerate}
    \item   The consistency of $\hat{V}_G$ is preserved in the unbalanced case under strong dependence.
    \item   For the unbalanced case, if all the clusters are semi-strongly dependent then it requires $\frac{\sum_{g=1}^G  \frac{h^2(N_g)}{N_g^2}}{\left(\sum_{g=1}^G  \frac{h(N_g)}{N_g}\right)^2}\rightarrow 0$ for the consistency of $\hat{V}_G.$
    \item   For the unbalanced case, if all the clusters are weakly dependent then it requires $\frac{\sum_{g=1}^G  \frac{1}{N_g^2}}{\left(\sum_{g=1}^G  \frac{1}{N_g}\right)^2}\rightarrow 0$ for the consistency of $\hat{V}_G.$
\end{enumerate}
\end{remark}

\begin{remark}
Note that under Assumption \ref{assump:Order of second moment of Xg'Omega_g Xg}, the Liapounov condition for the asymptotic normality of $\hat{\bar{\beta}}$ with $p=2$ under semi-strong dependence boils down to 
    \[
     \frac{\sum_{g=1}^{G} \frac{h^{2}(N_{g})}{N_{g}^{2}}}{\left( \sum_{g=1}^{G} \frac{h(N_{g})}{N_{g}} \right)^{2}} \to 0, \quad \text{ as } G\to\infty.
    \]
    Also, if Assumption \ref{assump:Order of second moment of Xg'Omega_g Xg} is changed to the following: For some $p>1,$
    \[  
    \mathbb{E}\left\Vert X_{g}^{\prime}\epsilon_{g}\epsilon_{g}^{\prime}X_{g}-X_{g}^{\prime}\Omega_{g}X_{g}\right\Vert ^{p}=\begin{cases}
    O\left(N_{g}^{2p}\right),  & \; \text{for strong dependence}\\
    O\left(N_{g}^{p}h^{p}\left(N_{g}\right)\right), & \; \text{for semi-strong dependence}\\
    O\left(N_{g}^{p}\right), & \; \text{for weak dependence}
    \end{cases}
    \]
    uniformly in $g,$ then the Liapounov condition reduces to 
    \[
        \frac{\sum_{g=1}^{G} \frac{h^{p}(N_{g})}{N_{g}^{p}}}{\left( \sum_{g=1}^{G} \frac{h(N_{g})}{N_{g}} \right)^{p}} \to 0, \quad \text{ as } G\to\infty.
    \]
    By Lemma \ref{lemma: sum h^2/N^2 to 0}, the above conditions are satisfied when $O(G)$ clusters are nearly balanced.
    
    Similar conclusions apply under weak cross-sectional dependence, in which  the Liapounov condition for some $p>1$ simplifies to
    \[
        \frac{\sum_{g=1}^{G} \frac{1}{N_{g}^{p}}}{\left( \sum_{g=1}^{G} \frac{1}{N_{g}} \right)^{p}} \to 0, \quad \text{ as } G\to\infty.
    \]
\end{remark}

Theorem \ref{thm:normal} and Theorem \ref{thm: consistency of V hat} provide a framework for developing a Wald-type test for the general linear hypothesis problem using our proposed estimator. Let us consider the problem of testing the general linear hypothesis $H_{0}:R\beta=r$ against $H_{1}:R\beta\ne r$, where $R$ is a known $q\times k$ matrix and $r$ is a known $q\times 1$ vector, with rank$(R)=q\leq k.$ The following theorem states our proposed test. 
 
\begin{theorem}\label{thm:testing Rbeta=r}
Let $\Pi_{G}^{q\times q}=RV_{G}R^{\prime}$
and $\hat{\Pi}_{G}^{q\times q}=R\hat{V}_{G}R^{\prime}$. Then,
under $H_{0}:R\beta=r,$ and Assumptions \ref{assump:exogeneity}, \ref{assump:error independence}, \ref{assump:Xg'Xg/Ng inverse is bounded}, \ref{assump:existence of moment}, \ref{assump:lambda max V_G=lambda min V_G}, \ref{assump:Order of second moment of Xg'Omega_g Xg}, and Lemma \ref{lemma: sum h^2/N^2 to 0}, for nearly balanced clusters, 
\[
G\left(R\hat{\bar{\beta}}-r\right)^{\prime}\hat{\Pi}_{G}^{-1}\left(R\hat{\bar{\beta}}-r\right)\xrightarrow{d}\chi_{q}^{2},\mbox{ as \ensuremath{G}\ensuremath{\rightarrow\infty}.}
\]
For the unbalanced case, the above holds for: (i) strong dependence generally, (ii) weak and semi-strong dependence, if $O(G)$ clusters are nearly balanced.
\end{theorem}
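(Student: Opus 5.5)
The plan is to combine Theorem \ref{thm:normal} and Theorem \ref{thm: consistency of V hat} through a continuous mapping and Slutsky argument. One subtlety is that $V_G$ need not converge: under semi-strong or weak dependence its scale $\bar h_1$ or $\bar m$ may tend to zero. So I will work with normalized matrices. Let $d_G$ denote the common exact order of $\lambda_{\min}(V_G)$ and $\lambda_{\max}(V_G)$ from Assumption \ref{assump:lambda max V_G=lambda min V_G}. This is $1$, $\bar h_1$ or $\bar m$ in the three regimes. Set $\tilde V_G = V_G/d_G$ and $\tilde{\hat V}_G=\hat V_G/d_G$. By that assumption, all eigenvalues of $\tilde V_G$ lie in a fixed compact interval $[c_1,c_2]\subset(0,\infty)$. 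I will read the consistency in Theorem \ref{thm: consistency of V hat} in the relative sense, $\Vert\hat V_G - V_G\Vert/d_G \xrightarrow{p} 0$, which is what its proof, normalized by the same rates, should deliver. Then $\tilde{\hat V}_G-\tilde V_G\xrightarrow{p}0$.

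\textbf{Step 1: the numerator.} Under $H_0$ we have $R\hat{\bar\beta}-r=R(\hat{\bar\beta}-\beta)$. Write $Z_G=\sqrt G V_G^{-1/2}(\hat{\bar\beta}-\beta)$, which converges in distribution to $N_k(0,I_k)$ by Theorem \ref{thm:normal}. The sequence of Lyapunov-type conditions in that theorem is guaranteed in the stated cluster configurations by Lemma \ref{lemma: sum h^2/N^2 to 0}, as in the Remark following Assumption \ref{assump:Order of second moment of Xg'Omega_g Xg}. We then get
\[
\sqrt G\,\Pi_G^{-1/2}R(\hat{\bar\beta}-\beta)=\Pi_G^{-1/2}RV_G^{1/2}Z_G=A_G Z_G,
\]
where $A_G=\Pi_G^{-1/2}RV_G^{1/2}$ is a $q\times k$ matrix with $A_GA_G'=I_q$. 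Since $\Vert A_G\Vert_e\le 1$, every subsequence has a further subsequence along which $A_G\to A$ with $AA'=I_q$. Along it, $A_GZ_G\xrightarrow{d}N_q(0,AA')=N_q(0,I_q)$, and because the limit is the same on every subsequence, the full sequence converges. Hence $G(R\hat{\bar\beta}-r)'\Pi_G^{-1}(R\hat{\bar\beta}-r)\xrightarrow{d}\chi^2_q$ by continuous mapping.

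\textbf{Step 2: replacing $\Pi_G$ by $\hat\Pi_G$.} Write $\tilde\Pi_G=\Pi_G/d_G=R\tilde V_GR'$. Because $R$ has full row rank $q$, $\lambda_{\min}(\tilde\Pi_G)\ge c_1\lambda_{\min}(RR')>0$ and $\lambda_{\max}(\tilde\Pi_G)\le c_2\lambda_{\max}(RR')$, uniformly in $G$. Also $\tilde{\hat\Pi}_G-\tilde\Pi_G=R(\tilde{\hat V}_G-\tilde V_G)R'\xrightarrow{p}0$. Since matrix inversion and the symmetric square root are uniformly continuous on the compact set of matrices with eigenvalues in a fixed positive interval, it follows that $\tilde\Pi_G^{1/2}\tilde{\hat\Pi}_G^{-1}\tilde\Pi_G^{1/2}-I_q\xrightarrow{p}0$. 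With $W_G=\sqrt G\,\Pi_G^{-1/2}(R\hat{\bar\beta}-r)$, which is $O_P(1)$ by Step 1, the statistic equals
\[
W_G'\,\tilde\Pi_G^{1/2}\tilde{\hat\Pi}_G^{-1}\tilde\Pi_G^{1/2}\,W_G=W_G'W_G+o_P(1),
\]
and the $d_G$ factors cancel. Slutsky's theorem then gives convergence to $\chi^2_q$. The unbalanced-case qualifications are inherited verbatim from Theorems \ref{thm:normal} and \ref{thm: consistency of V hat}.

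\textbf{Main obstacle.} I expect the main difficulty to be the scale issue in Step 2. When $d_G\to0$, absolute consistency $\hat V_G-V_G\xrightarrow{p}0$ is too weak, because $\Pi_G^{-1}$ blows up. I would therefore first verify that the proof of Theorem \ref{thm: consistency of V hat} yields the relative statement $\Vert\hat V_G-V_G\Vert=o_P(d_G)$. The plan is to bound $\mathbb E\Vert \hat V_G-V_G\Vert$ using Assumption \ref{assump:Order of second moment of Xg'Omega_g Xg} and the ratios in Lemma \ref{lemma: sum h^2/N^2 to 0}. Those ratios are exactly the (squared) relative errors at scale $d_G$, and the term from replacing $\epsilon_g$ by $e_g$ is of order $\Vert\hat{\bar\beta}-\beta\Vert^2=O_P(d_G/G)$. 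If this relative form is available, the remainder is routine.
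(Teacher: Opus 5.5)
Your proposal is correct and follows essentially the same route as the paper. Both transport the normality of Theorem~\ref{thm:normal} to $\sqrt{G}\,\Pi_G^{-1/2}(R\hat{\bar{\beta}}-r)$, then replace $\Pi_G$ by $\hat{\Pi}_G$ using the relative consistency $\Vert\hat V_G-V_G\Vert/\Vert V_G\Vert=o_P(1)$ and the uniform positive definiteness of the normalized matrices, and finish by continuous mapping. The ``main obstacle'' you flag is already settled: the paper proves Theorem~\ref{thm: consistency of V hat} in exactly this relative form, and $\Vert V_G\Vert\asymp d_G$. Your normalization by $d_G$ and your subsequence argument, used where the paper cites White's Corollary 4.24, are only cosmetic differences.
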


The proof of Theorem \ref{thm:testing Rbeta=r} is given in the Appendix.
The test described in this theorem can be used to determine whether explanatory variables in a model are significant, for cross-sectional data with cluster dependence. 

In the next section, we discuss the clustered model with varying parameters and provide some asymptotic results under the three kinds of dependence.

\section{Linear Model with Varying Parameters}\label{section:model2}

There is no compelling justification for assuming that the coefficient vector $\beta$ is homogeneous across clusters. It is therefore worthwhile to allow $\beta$ to vary across clusters or higher-level units (e.g., states or districts). In this section, we demonstrate that the proposed estimator remains consistent when $\beta$ is permitted to vary across clusters, within the framework of \cite{swamy1970efficient}. We further establish a central limit theorem for the estimator and introduce a novel testing procedure based on the concept of superblocks. 

\subsection{Model with Cluster-Specific Random Coefficients}

We consider a linear regression model for clustered data in which the regression coefficients are allowed to vary across clusters. Specifically, for each cluster
$g = 1,2,\ldots,G$, the outcome variable satisfies
\begin{equation} \label{eq: model2}
Y_g = X_g \beta_g + \epsilon_g,
\end{equation}
where $\beta_g$ is a $k \times 1$ vector of cluster-specific regression
coefficients, and $X_g$, $Y_g$, and $\epsilon_g$ are defined analogously to those in model \eqref{eq: model1}. To allow for systematic heterogeneity across clusters, we adopt a random coefficients specification. In particular, we assume that $\beta_g = \beta + u_g,$ where $\beta$ is a common $k \times 1$ parameter vector and $u_g$ represents cluster-specific deviations from the common mean. Substituting this into \eqref{eq: model2} yields
\begin{equation} \label{eq: model2 for u_g}
    Y_g = X_g \beta + X_g u_g + \epsilon_g, \quad g=1,2,\ldots,G.
\end{equation}

\begin{assumption}\label{assump: epsilon_g, X_g and u_g independent}
For each cluster $g = 1,2,\ldots,G$, the random coefficient vector $u_g$ is
independent of the idiosyncratic error term $\epsilon_g$ and the regressor matrix
$X_g$. Moreover, the collection $\{u_g\}_{g=1}^G$ is independent across clusters.
\end{assumption}

Assumption \ref{assump: epsilon_g, X_g and u_g independent} ensures exogeneity of the random coefficients by requiring that $u_g$ is independent of the regressors and idiosyncratic errors within each cluster, and independent across clusters.

\begin{assumption}\label{assump: distribution of u_g}
The random vectors $u_g$ are independently distributed across clusters with
\[
\mathbb{E}[u_g]=0 \quad \text{and} \quad \mathrm{Var}(u_g)=\Delta,
\]
where $\Delta$ is a finite, positive definite $k \times k$ matrix.
\end{assumption}
Assumption \ref{assump: distribution of u_g} formalizes the random coefficients
structure by allowing for unrestricted covariance across the elements of $u_g$,
while maintaining a common mean equal to zero.

For the model \eqref{eq: model2}, the proposed estimator as described in \eqref{eq: proposed estimator} is
\begin{equation} \label{eq: beta_hat expression with beta bar}
     \hat{\bar{\beta}}=\bar{\beta}+\frac{1}{G}\sum_{g=1}^G \left(X_g^\prime X_g\right)^{-1}X_g^\prime \epsilon_g.
\end{equation}

Note that for the random coefficients model \eqref{eq: model2 for u_g}, 
\begin{equation} \label{eq: beta_hat for model2}
    \hat{\bar{\beta}}= \beta + \frac{1}{G}\sum_{g=1}^G u_g + \frac{1}{G}\sum_{g=1}^G \left(X_g^\prime X_g\right)^{-1}X_g^\prime \epsilon_g.
\end{equation}

Clearly, $\hat{\bar{\beta}}$ is an unbiased estimator of $\bar{\beta}$ for model \eqref{eq: model2}. Under the random coefficients specification in model \eqref{eq: model2 for u_g}, the estimator remains unbiased for the common parameter $\beta$. The variance-covariance matrix of $\hat{\bar\beta}$ under model \eqref{eq: model2 for u_g} is defined as 
\begin{equation} \label{eq: V^*}
     V^*=V+\frac{1}{G}\Delta\,,
\end{equation}
where $V$ is the variance-covariance matrix of $\frac{1}{G}\sum_{g=1}^G \left(X_g^\prime X_g\right)^{-1}X_g^\prime \epsilon_g$ as defined in \eqref{eq: V}.

\begin{theorem}\label{thm:consistency under varying parameters}
For the model \eqref{eq: model2 for u_g}, under Assumptions \ref{assump:exogeneity}, \ref{assump:error independence}, \ref{assump:Xg'Xg/Ng inverse is bounded}, \ref{assump: epsilon_g, X_g and u_g independent}, and \ref{assump: distribution of u_g}, for strong, semi-strong, or weak dependence, and for any cluster sizes, $\hat{\bar\beta}$ is consistent for $\beta$.
\end{theorem}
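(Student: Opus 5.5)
The plan is to use the decomposition \eqref{eq: beta_hat for model2}, which splits the estimation error into two independent pieces:
\[
\hat{\bar{\beta}}-\beta=\frac{1}{G}\sum_{g=1}^G u_g+\frac{1}{G}\sum_{g=1}^G \left(X_g^\prime X_g\right)^{-1}X_g^\prime \epsilon_g =: A_G + B_G .
\]
I will show that $\mathbb{E}\Vert \hat{\bar{\beta}}-\beta\Vert^2\to 0$ as $G\to\infty$. Mean-square convergence then gives convergence in probability by Chebyshev's (Markov's) inequality.

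First, compute the second moment and discard the cross terms. Assumption \ref{assump: epsilon_g, X_g and u_g independent} says $u_g$ is independent of $(X_g,\epsilon_g)$, and Assumption \ref{assump: distribution of u_g} gives $\mathbb{E}u_g=0$. Together with Assumption \ref{assump:exogeneity}, this yields
\[
\mathbb{E}\bigl[u_g^\prime (X_h^\prime X_h)^{-1}X_h^\prime\epsilon_h\bigr]=0 \quad \text{for all } g,h.
\]
Here the independence assumption has to cover $u_g$ versus $(X_h,\epsilon_h)$ for $h\neq g$ as well; I read Assumption \ref{assump: epsilon_g, X_g and u_g independent} jointly with the cross-cluster independence in that way. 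Alternatively, one can condition on $X$ and the $\epsilon$'s and use $\mathbb{E}[u_g\mid X,\epsilon]=0$. Hence
\[
\mathbb{E}\Vert \hat{\bar{\beta}}-\beta\Vert^2=\mathbb{E}\Vert A_G\Vert^2+\mathbb{E}\Vert B_G\Vert^2 .
\]

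Second, bound the random-coefficient term. The $u_g$ are independent and mean zero, so
\[
\mathbb{E}\Vert A_G\Vert^2=\frac{1}{G^2}\sum_{g=1}^G \operatorname{tr}(\Delta)=\frac{\operatorname{tr}(\Delta)}{G}\to 0,
\]
because $\Delta$ is finite. This part does not depend on the cluster sizes or on the dependence type.

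Third, observe that $B_G$ is exactly the quantity $\hat{\bar\beta}-\beta$ under the constant-parameter model \eqref{eq: model1}. The bound from the proof of Theorem \ref{thm:consistency} therefore applies directly:
\[
\mathbb{E}\Vert B_G\Vert^2\le \frac{1}{G^2}\sum_{g=1}^G \mathbb{E}\Bigl[\operatorname{tr}\Bigl(\bigl(X_g^\prime X_g/N_g\bigr)^{-1}\Bigr)\Bigr]\frac{\lambda_{\max}(\Omega_g)}{N_g}.
\]
Assumption \ref{assump:Xg'Xg/Ng inverse is bounded}, via Jensen's inequality with $p>1$, bounds the expected trace uniformly by $M^{1/p}$. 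Assumption \ref{assump:error independence} gives $\lambda_{\max}(\Omega_g)/N_g=O(1)$ uniformly in $g$ in all three regimes, since $h(N_g)\le cN_g$ and $1\le N_g$. By Lemma \ref{lemma:Exact order summability}, the sum is $O(G)$, so $\mathbb{E}\Vert B_G\Vert^2=O(1/G)$.

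Combining the three steps gives $\mathbb{E}\Vert\hat{\bar\beta}-\beta\Vert^2=O(1/G)$, and consistency follows.

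The only delicate point is the vanishing of the cross term, which needs independence of $u_g$ from $(X_h,\epsilon_h)$ for all $h$, not just $h=g$. If that is not granted, I would instead avoid the cross term altogether. Using $\Vert a+b\Vert^2\le 2\Vert a\Vert^2+2\Vert b\Vert^2$ gives the same $O(1/G)$ bound with no independence between $A_G$ and $B_G$ required, so I would simply use this route to make the argument robust.
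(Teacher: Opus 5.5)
Your proof is correct and follows the paper's argument. The paper also splits $\hat{\bar\beta}-\beta$ into the $u_g$-average and the $\epsilon_g$-term, drops the cross term to get $\mathbb{E}\Vert\hat{\bar\beta}-\beta\Vert^2=\operatorname{tr}(V)+\frac{1}{G}\operatorname{tr}(\Delta)$, and reuses the $\operatorname{tr}(V)=O(1/G)$ bound from the proof of Theorem~\ref{thm:consistency}. Your fallback $\Vert a+b\Vert^2\le 2\Vert a\Vert^2+2\Vert b\Vert^2$ is a worthwhile safeguard, because the paper's exact variance identity tacitly needs $u_g$ uncorrelated with $(X_h'X_h)^{-1}X_h'\epsilon_h$ for $h\neq g$, which the stated assumption does not literally provide.
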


The proof of Theorem \ref{thm:consistency under varying parameters} is given in the Appendix.

Note that from the proof of Theorem \ref{thm:consistency under varying parameters} given in the Appendix, we obtain 
\[
\mathbb{E}\left\Vert \hat{\bar{\beta}}-\beta\right\Vert ^{2} = \operatorname{tr}(V)+\frac{1}{G}\operatorname{tr}\left(\Delta\right) = O\left(\frac{1}{G}\right),
\]
irrespective of the dependence structure of $\epsilon_g$ and for arbitrary cluster sizes. Therefore, for the model \eqref{eq: model2 for u_g}, $\hat{\bar{\beta}}$ is $\sqrt{G}$-consistent for $\beta,$ under any kind of cross-sectional dependence.

\begin{remark}
Since $\hat{\bar{\beta}}$ admits the representation given in \eqref{eq: beta_hat expression with beta bar}, it provides a consistent estimator of the average regression parameter across clusters. Consequently, when the parameter of interest is $\bar{\beta}$, the proposed estimator consistently estimates this quantity.
\end{remark}

\begin{assumption}\label{assump: existence of moment for ug}
For some $p > 1$, the $2p$-th absolute moment of each component $u_{gj}$ exists and
is finite for all $j = 1,\ldots,k$ and $g = 1,2,\ldots,G$.
\end{assumption}
Assumption \ref{assump: existence of moment for ug} provides the moment conditions for the application of central limit theorems in the presence of random coefficients.

\begin{theorem}\label{thm:normal under varying parameters}
Under Assumptions \ref{assump:exogeneity}, \ref{assump:error independence}, \ref{assump:Xg'Xg/Ng inverse is bounded}, \ref{assump:existence of moment}, \ref{assump:lambda max V_G=lambda min V_G}, \ref{assump: epsilon_g, X_g and u_g independent}, \ref{assump: distribution of u_g}, and \ref{assump: existence of moment for ug},  for strong, semi-strong ot weak dependence, and for arbitrary cluster sizes, $V^{*-1/2}(\hat{\bar\beta}-\beta) \xrightarrow{d} N_k(0, I_k)$.
\end{theorem}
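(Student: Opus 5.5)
The representation \eqref{eq: beta_hat for model2} writes $\hat{\bar\beta}-\beta=\frac{1}{G}\sum_{g=1}^G W_g$ with
\[
W_g=u_g+\left(X_g^\prime X_g\right)^{-1}X_g^\prime\epsilon_g .
\]
My plan is to treat this as a sum of independent, non-identically distributed mean-zero random vectors and apply the Cram\'er--Wold device together with the Liapounov CLT.

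\textbf{Independence, mean zero and variance.} By Assumptions \ref{assump:error independence} and \ref{assump: epsilon_g, X_g and u_g independent}, treating the $X_g$ as independent across clusters as in the proof of Theorem \ref{thm:normal}, the $W_g$ are independent across $g$. By Assumptions \ref{assump:exogeneity} and \ref{assump: distribution of u_g} they have mean zero. Independence of $u_g$ from $(X_g,\epsilon_g)$ removes the cross terms, so
\[
\operatorname{Var}\left(\frac{1}{G}\sum_g W_g\right)=V+\frac{1}{G}\Delta=V^*,
\]
which is \eqref{eq: V^*}.

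\textbf{Reduction to a scalar Liapounov condition.} Fix a unit vector $a\in\mathbb{R}^k$ and set $\xi_g=a^\prime V^{*-1/2}W_g/G$. Then $\sum_g\mathbb{E}\xi_g^2=1$, and it suffices to show $\sum_g\mathbb{E}|\xi_g|^{2p}\to0$ for the $p>1$ in the assumptions. The key lower bound is
\[
\lambda_{\min}(V^*)\ge \frac{1}{G}\lambda_{\min}(\Delta)\ge \frac{c}{G},
\]
since $V\ge0$ and $\Delta$ is positive definite. This is exactly why no rate condition appears in the statement, unlike Theorem \ref{thm:normal}: the random-coefficient term supplies a variance floor of order $1/G$ regardless of the dependence type or the cluster sizes. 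Hence $\Vert V^{*-1/2}\Vert_e\le (G/c)^{1/2}$ and
\[
\sum_g\mathbb{E}|\xi_g|^{2p}\le C\,G^{-p}\sum_{g=1}^G\mathbb{E}\Vert W_g\Vert^{2p}.
\]

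\textbf{Uniform moment bound.} It then remains to show $\sup_g\mathbb{E}\Vert W_g\Vert^{2p}<\infty$, which gives the bound $O(G^{1-p})\to0$. By the $c_r$ inequality it suffices to bound $\mathbb{E}\Vert u_g\Vert^{2p}$, which is finite by Assumption \ref{assump: existence of moment for ug}. One needs a uniform-in-$g$ bound here; I would read the assumption that way, or add it. It also suffices to bound $\mathbb{E}\Vert (X_g^\prime X_g)^{-1}X_g^\prime\epsilon_g\Vert^{2p}$ uniformly in $g$. The second term is the main obstacle. I would reuse the bound established in the proof of Theorem \ref{thm:normal}, where the Liapounov term was controlled via Assumptions \ref{assump:Xg'Xg/Ng inverse is bounded} and \ref{assump:existence of moment}. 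If needed, I would argue directly:
\[
\Vert (X_g^\prime X_g)^{-1}X_g^\prime\epsilon_g\Vert^2\le \lambda_{\max}\left((X_g^\prime X_g)^{-1}\right)\,\epsilon_g^\prime P_g\epsilon_g ,
\]
where $P_g$ is the projection onto the column space of $X_g$, of rank $k$. Then $\epsilon_g^\prime P_g\epsilon_g=\Vert Q_g^\prime\epsilon_g\Vert^2$ with $Q_g$ having $k$ orthonormal columns.

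The worry is strong dependence, where $Q_g^\prime\epsilon_g$ can be of size $N_g^{1/2}$. This is offset by $\lambda_{\max}((X_g^\prime X_g)^{-1})\le \operatorname{tr}((X_g^\prime X_g/N_g)^{-1})/N_g$. The product is therefore controlled by the $p$-th moment of $\operatorname{tr}((X_g^\prime X_g/N_g)^{-1})$ times conditional $2p$-th moments of $\epsilon$ from Assumption \ref{assump:existence of moment}, with H\"older or Minkowski used to bound $\mathbb{E}[\Vert Q_g^\prime\epsilon_g\Vert^{2p}\mid X]\le C N_g^{p}$. I expect matching the exponents in this H\"older step, since Assumption \ref{assump:Xg'Xg/Ng inverse is bounded} gives only a $p$-th moment of the trace, to be the delicate point. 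I would handle it by conditioning on $X$ first, so that only $\mathbb{E}[\operatorname{tr}(\cdot)^{p}]$ is needed. If the exponent does not match, I would fall back to a slightly smaller $p'\in(1,p)$ in the Liapounov condition. Cram\'er--Wold then yields $V^{*-1/2}(\hat{\bar\beta}-\beta)\xrightarrow{d}N_k(0,I_k)$.
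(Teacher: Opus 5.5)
Your proposal is correct and follows essentially the same route as the paper's proof. Both use Cram\'er--Wold with the Liapounov condition at exponent $2p$, and both rely on the variance floor $\lambda_{\min}(V^*)\ge \lambda_{\min}(\Delta)/G$, which gives $\Vert V^{*-1/2}\Vert_e^2\le G/c$. Both then bound $\mathbb{E}\Vert u_g\Vert^{2p}$ and $\mathbb{E}\Vert (X_g^\prime X_g)^{-1}X_g^\prime\epsilon_g\Vert^{2p}$ uniformly via $\operatorname{tr}((X_g^\prime X_g)^{-1})\,\epsilon_g^\prime\epsilon_g$, conditioning on $X$, to get a Liapounov ratio of order $O(G^{1-p})$.

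The exponent-matching step you flag is not actually delicate. After conditioning on $X$, the trace factor enters only through $\mathbb{E}[\operatorname{tr}((X_g^\prime X_g/N_g)^{-1})^p]$, since the conditional moment bound on $\epsilon_g$ is uniform in $X$. The caveats you raise are also implicit, and unaddressed, in the paper's own argument: uniformity in $g$ of the $2p$-th moments of $u_{gj}$, independence of the $X_g$ across clusters, and a common $p$ across assumptions.
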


The proof of Theorem \ref{thm:normal under varying parameters} is given in the Appendix.

 \begin{remark}\label{rem: no conditions needed for normality in varying clusters}
For the random coefficients model \eqref{eq: model2 for u_g} with $\Delta\neq 0,$ the central limit theorem for the proposed estimator does not require the restrictions stated in Theorem \ref{thm:normal} in case of weak and semi-strong dependence.
\end{remark}

Let us define $V^*_G = G V^*.$ To estimate this variance for the random coefficients model, we use the variance-covariance matrix estimator as in Theorem \ref{thm: consistency of V hat}:
\[
\hat{V}^*_G = \frac{1}{G}\sum_{g=1}^{G}\left(X_{g}^{\prime}X_{g}\right)^{-1}X_{g}^{\prime}e_{g}e_{g}^{\prime}X_{g}\left(X_{g}^{\prime}X_{g}\right)^{-1}.
\]
The residuals for this model are $e_{g}=Y_{g}-X_{g}\hat{\bar{\beta}} = X_gu_g+\epsilon_g-X_g(\hat{\bar{\beta}}-\beta).$

\begin{theorem} \label{thm: V hat consistency for model 2}
    Under Assumptions  \ref{assump:exogeneity}, \ref{assump:error independence}, \ref{assump:Xg'Xg/Ng inverse is bounded}, \ref{assump:existence of moment}, \ref{assump:lambda max V_G=lambda min V_G}, \ref{assump: epsilon_g, X_g and u_g independent}, \ref{assump: distribution of u_g} and \ref{assump: existence of moment for ug}, for strong, semi-strong or weak dependence, and for arbitrary cluster sizes, $\hat{V}^{*}_G $ is consistent for $V_G^*$ under model \eqref{eq: model2 for u_g}.
\end{theorem}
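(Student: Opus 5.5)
Write $A_g=(X_g^\prime X_g)^{-1}X_g^\prime$, so that $A_gX_g=I_k$. Under model \eqref{eq: model2 for u_g} the residuals give
\[
A_g e_g = u_g + A_g\epsilon_g - (\hat{\bar\beta}-\beta) =: \xi_g - \delta,
\]
where $\xi_g = u_g + A_g\epsilon_g$ and $\delta=\hat{\bar\beta}-\beta$. Hence
\[
\hat V^*_G = \frac1G\sum_{g=1}^G \xi_g\xi_g^\prime \;-\; \bar\xi\,\delta^\prime \;-\; \delta\,\bar\xi^\prime \;+\; \delta\delta^\prime,
\qquad \bar\xi=\frac1G\sum_{g=1}^G\xi_g=\delta .
\]
So $\hat V^*_G=\frac1G\sum_g\xi_g\xi_g^\prime-\delta\delta^\prime$. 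This identity is the first step, and it reduces the problem to two pieces: showing $\delta\delta^\prime=o_P(1)$, and showing $\frac1G\sum_g\xi_g\xi_g^\prime - V_G^* \xrightarrow{P} 0$.

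\textbf{The term $\delta\delta^\prime$.} From the remark after Theorem \ref{thm:consistency under varying parameters}, $\mathbb{E}\Vert\delta\Vert^2=O(1/G)$. Hence $\delta\delta^\prime=O_P(1/G)$.

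\textbf{Centering the main sum.} The $\xi_g$ are independent across $g$ by Assumptions \ref{assump:error independence} and \ref{assump: epsilon_g, X_g and u_g independent}. Their second moments are
\[
\mathbb{E}[\xi_g\xi_g^\prime]=\Delta+\mathbb{E}[A_g\Omega_gA_g^\prime],
\]
since the cross terms vanish by independence of $u_g$ and $(X_g,\epsilon_g)$ together with $\mathbb{E}[u_g]=0$. Averaging over $g$ gives exactly $V_G+\Delta=GV^*=V_G^*$. It therefore suffices to prove a weak law of large numbers for independent, non-identically distributed matrices $\xi_g\xi_g^\prime$.

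\textbf{Weak law via a $p$-th moment bound.} I would use the von Bahr--Esseen inequality with exponent $\min(p,2)\in(1,2]$: if $\sup_g\mathbb{E}\Vert\xi_g\Vert^{2p}<\infty$ for some $p>1$, then
\[
\mathbb{E}\Big\Vert \frac1G\sum_{g=1}^G(\xi_g\xi_g^\prime-\mathbb{E}\xi_g\xi_g^\prime)\Big\Vert^{p}=O(G^{1-p})\to0 .
\]
Since $\Vert\xi_g\Vert^{2p}\le 2^{2p-1}\big(\Vert u_g\Vert^{2p}+\Vert A_g\epsilon_g\Vert^{2p}\big)$, the $u_g$ part is bounded by Assumption \ref{assump: existence of moment for ug}.

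The main obstacle is the uniform bound on $\mathbb{E}\Vert A_g\epsilon_g\Vert^{2p}$ under strong dependence, with no restriction on cluster sizes.

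\begin{itemize}
\item First write
\[
\Vert A_g\epsilon_g\Vert \le \lambda_{\max}\Big(\big(\tfrac{X_g^\prime X_g}{N_g}\big)^{-1}\Big)\,\Big\Vert\tfrac{X_g^\prime\epsilon_g}{N_g}\Big\Vert ,
\]
and bound $\lambda_{\max}$ by the trace, which Assumption \ref{assump:Xg'Xg/Ng inverse is bounded} controls.
\item Since $\frac{X_g^\prime\epsilon_g}{N_g}$ is an average of $N_g$ terms $X_{gi}\epsilon_{gi}$, Minkowski's inequality conditional on $X$ bounds $\mathbb{E}[\,\Vert X_g^\prime\epsilon_g/N_g\Vert^{2p}\mid X]$ by $c\,(\max_i\Vert X_{gi}\Vert)^{2p}$, using Assumption \ref{assump:existence of moment}. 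This is a crude bound, but it is uniform in $N_g$ and in the dependence type.
\item Then combine the two factors by H\"older's inequality.
\end{itemize}

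This requires a moment condition on $X_{gi}$ that the paper leaves implicit, as well as some slack between the exponent $p$ in Assumption \ref{assump:Xg'Xg/Ng inverse is bounded} and the one used here. If that fails, I would fall back on a truncation argument: the $\xi_g\xi_g^\prime$ are uniformly integrable, because $\mathbb{E}\Vert A_g\epsilon_g\Vert^2\le \mathbb{E}[\operatorname{tr}((X_g^\prime X_g/N_g)^{-1})]\lambda_{\max}(\Omega_g)/N_g=O(1)$ uniformly (the bound used for Theorem \ref{thm:consistency}), and the weak law for independent uniformly integrable arrays does not need higher moments. Either route gives $\frac1G\sum_g\xi_g\xi_g^\prime-V^*_G\xrightarrow{P}0$.

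Combining this with $\delta\delta^\prime=O_P(1/G)$ yields $\hat V^*_G-V^*_G\xrightarrow{P}0$ for arbitrary cluster sizes and any dependence type. The reason no balance condition is needed is that $\Delta$ is positive definite and fixed, so $V^*_G$ has eigenvalues bounded away from $0$; absolute consistency therefore coincides with relative consistency.
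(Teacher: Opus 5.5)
Your reduction is correct and is the paper's own decomposition in compact form. Since $(X_g'X_g)^{-1}X_g'e_g=\xi_g-\delta$ and $\bar\xi=\delta$, you get $\hat V^*_G=\frac1G\sum_g\xi_g\xi_g'-\delta\delta'$, which is the paper's $T_1+T_2+T_3+T_4$ with $T_3=T_4=-T_2$. The centering $\frac1G\sum_g\mathbb{E}[\xi_g\xi_g']=V_G+\Delta=V^*_G$ and the bound $\delta\delta'=O_P(1/G)$ are also fine.

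\textbf{The gap.} Everything hinges on one step, the uniform bound $\sup_g\mathbb{E}\Vert(X_g'X_g)^{-1}X_g'\epsilon_g\Vert^{2p}<\infty$, and neither of your routes establishes it from the stated assumptions.
\begin{itemize}
\item Your primary route factors the term as $\lambda_{\max}((X_g'X_g/N_g)^{-1})\,\Vert X_g'\epsilon_g/N_g\Vert$ and bounds the second factor by $\max_i\Vert X_{gi}\Vert$. This forces two extra conditions. First, you need uniformly bounded moments of $\max_{i\le N_g}\Vert X_{gi}\Vert$ over growing $N_g$, which fails for essentially any unbounded regressor (for Gaussian regressors the maximum grows like $\sqrt{\log N_g}$). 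Second, you need a moment of $\operatorname{tr}((X_g'X_g/N_g)^{-1})$ of order at least $2p>2$, whereas the inverse-moment assumption only provides some order $p>1$.
\item Your fallback is not valid either. $\sup_g\mathbb{E}\Vert\xi_g\Vert^2<\infty$ is only $L^1$-boundedness of $\Vert\xi_g\xi_g'\Vert$, not uniform integrability. Without uniform integrability (or a Lindeberg/truncation condition), the weak law for independent, non-identically distributed summands can fail. Establishing uniform integrability would itself require something like the $2p$-th moment bound you were trying to obtain.
\end{itemize}

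\textbf{The fix.} Do not split off $X_g'$: $(X_g'X_g)^{-1/2}X_g'$ is a contraction. Directly,
\[
\Vert(X_g'X_g)^{-1}X_g'\epsilon_g\Vert^2=\epsilon_g'X_g(X_g'X_g)^{-2}X_g'\epsilon_g\le\lambda_{\max}\big((X_g'X_g)^{-1}\big)\,\epsilon_g'\epsilon_g\le\operatorname{tr}\Big(\Big(\frac{X_g'X_g}{N_g}\Big)^{-1}\Big)\,\frac{\epsilon_g'\epsilon_g}{N_g}.
\]
By convexity, $\mathbb{E}[(\epsilon_g'\epsilon_g/N_g)^p\mid X]\le\max_i\mathbb{E}[|\epsilon_{gi}|^{2p}\mid X]\le c$. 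Hence $\sup_g\mathbb{E}\Vert(X_g'X_g)^{-1}X_g'\epsilon_g\Vert^{2p}\le cM$, with no condition on $X$ beyond the inverse-moment assumption. This is exactly the bound the paper uses for $\mathbb{E}|w_g'w_g|^p$ in the proof of the CLT under varying parameters. If the $p$'s in the three moment assumptions differ, take the smallest.

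With this bound, von Bahr--Esseen with $r=\min(p,2)$ gives
\[
\mathbb{E}\Big\Vert\frac1G\sum_g\big(\xi_g\xi_g'-\mathbb{E}\xi_g\xi_g'\big)\Big\Vert^r=O(G^{1-r}),
\]
and your argument closes. Note that the exponent in your display should be $r$, not $p$.

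\textbf{Comparison with the paper.} Once repaired, your route is more economical than the paper's. The paper bounds $\mathbb{E}\Vert T_1\Vert^2$ by a variance computation. That needs fourth moments of $\epsilon_{gi}$ and $u_g$ and a second moment of $\operatorname{tr}((X_g'X_g/N_g)^{-1})$, so effectively $p\ge 2$. Its trace inequality $\operatorname{tr}^2((X_g'X_g)^{-1})\operatorname{tr}(\cdot)$ also loses the cancellation $(X_g'X_g)^{-1}X_g'X_gu_g=u_g$ that your decomposition exploits, so it implicitly needs moment control of $X_g'X_g$ itself. In exchange, the paper gets the faster rate $O(G^{-1/2})$.
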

The proof of Theorem \ref{thm: V hat consistency for model 2} is given in the Appendix.

Let us consider the problem of testing the general linear hypothesis $H_{0}:R\beta=r$ against $H_{1}:R\beta\ne r$  for model \eqref{eq: model2 for u_g}, where $R$ is a known $q\times k$ matrix and $r$ is a known $q\times 1$ vector, with rank$(R)=q\leq k.$ The following theorem presents the proposed test statistic and its asymptotic properties. 

\begin{theorem} \label{thm: Rb = r testing for model2}
Let $\Gamma_{G}=RV^*_{G}R^{\prime}$ and $\hat{\Gamma}_{G}=R\hat{V}^{*}_{G}R^{\prime}$. Under the null hypothesis $H_{0}:R\beta=r,$ and Assumptions \ref{assump:exogeneity}, \ref{assump:error independence}, \ref{assump:Xg'Xg/Ng inverse is bounded}, \ref{assump:existence of moment}, \ref{assump:lambda max V_G=lambda min V_G}, \ref{assump: epsilon_g, X_g and u_g independent}, \ref{assump: distribution of u_g}, and \ref{assump: existence of moment for ug},  for strong, semi-strong or weak dependence, and for arbitrary cluster sizes, 
\[
G\left(R\hat{\bar{\beta}}-r\right)^{\prime}\hat{\Gamma}_{G}^{-1}\left(R\hat{\bar{\beta}}-r\right)\xrightarrow{d}\chi_{q}^{2}, \quad \text{ as } G\to\infty.
\]
\end{theorem}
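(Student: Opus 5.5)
The argument combines Theorem \ref{thm:normal under varying parameters} and Theorem \ref{thm: V hat consistency for model 2} through the continuous mapping theorem and Slutsky's lemma, in the same way as Theorem \ref{thm:testing Rbeta=r} was obtained for the constant-parameter model.

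First, under $H_0$ we have $R\hat{\bar\beta}-r=R(\hat{\bar\beta}-\beta)$. Theorem \ref{thm:normal under varying parameters} gives $V^{*-1/2}(\hat{\bar\beta}-\beta)\xrightarrow{d}N_k(0,I_k)$. Since $V^*_G=GV^*$, this is the same as $\sqrt{G}\,V_G^{*-1/2}(\hat{\bar\beta}-\beta)\xrightarrow{d}N_k(0,I_k)$. Put $Z_G=\sqrt G\,\Gamma_G^{-1/2}R(\hat{\bar\beta}-\beta)$, so that
\[
Z_G=\Gamma_G^{-1/2}RV_G^{*1/2}\cdot\sqrt G\,V_G^{*-1/2}(\hat{\bar\beta}-\beta).
\]
The $q\times k$ matrix $A_G=\Gamma_G^{-1/2}RV_G^{*1/2}$ satisfies $A_GA_G'=I_q$ by construction. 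Because $A_G$ has orthonormal rows, its entries are bounded. A subsequence argument then shows $Z_G\xrightarrow{d}N_q(0,I_q)$: every subsequence has a further subsequence along which $A_G\to A$ with $AA'=I_q$. Alternatively, one can apply the Cramér--Wold device together with the Lyapunov step of the proof of Theorem \ref{thm:normal under varying parameters} directly to the linear combinations $R(\hat{\bar\beta}-\beta)$. The continuous mapping theorem then gives $Z_G'Z_G\xrightarrow{d}\chi^2_q$.

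Second, the estimated weighting matrix has to be replaced. Note that $V^*_G=V_G+\Delta$, where $V_G$ is positive semidefinite and $\Delta$ is fixed and positive definite. Hence $\lambda_{\min}(V_G^*)\ge\lambda_{\min}(\Delta)>0$. Also, $\lambda_{\max}(V_G^*)=O(1)$ by Assumption \ref{assump:lambda max V_G=lambda min V_G}, since under every dependence type that order is $O(1)$. Because $R$ has full row rank $q$, it follows that $\Gamma_G$ has eigenvalues bounded away from $0$ and $\infty$. Theorem \ref{thm: V hat consistency for model 2} gives $\hat V_G^*-V_G^*\xrightarrow{P}0$, so $\hat\Gamma_G-\Gamma_G\xrightarrow{P}0$. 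With the eigenvalue bounds, this yields $\Gamma_G^{1/2}\hat\Gamma_G^{-1}\Gamma_G^{1/2}-I_q\xrightarrow{P}0$; this uses the identity $\hat\Gamma_G^{-1}-\Gamma_G^{-1}=\hat\Gamma_G^{-1}(\Gamma_G-\hat\Gamma_G)\Gamma_G^{-1}$ and the fact that $\hat\Gamma_G$ is invertible with probability tending to one. The test statistic can then be written as
\[
G(R\hat{\bar\beta}-r)'\hat\Gamma_G^{-1}(R\hat{\bar\beta}-r)=Z_G'Z_G+Z_G'\bigl(\Gamma_G^{1/2}\hat\Gamma_G^{-1}\Gamma_G^{1/2}-I_q\bigr)Z_G .
\]
The second term is bounded in absolute value by $\|Z_G\|^2\,\|\Gamma_G^{1/2}\hat\Gamma_G^{-1}\Gamma_G^{1/2}-I_q\|_e=O_P(1)o_P(1)$. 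Slutsky's lemma then gives the $\chi^2_q$ limit.

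The main obstacle is the uniform eigenvalue control: the normalizing matrices depend on $G$ and need not converge. The positive definite $\Delta$ is what makes this work for arbitrary cluster sizes, since it keeps $\Gamma_G$ away from singularity without any balance conditions. This is why, unlike Theorem \ref{thm:testing Rbeta=r}, no conditions on cluster sizes such as those in Lemma \ref{lemma: sum h^2/N^2 to 0} are needed here. The only point to check carefully is that "consistency" in Theorem \ref{thm: V hat consistency for model 2} is stated in the normalized form $\hat V_G^*-V_G^*=o_P(1)$. If it is stated in a relative form instead, the same argument goes through directly.
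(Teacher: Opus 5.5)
Your proposal is correct and follows essentially the same two-step route as the paper. First, a CLT for $\sqrt{G}\,\Gamma_G^{-1/2}R(\hat{\bar\beta}-\beta)$ is obtained from Theorem~\ref{thm:normal under varying parameters}; then $\Gamma_G$ is replaced by $\hat\Gamma_G$ using the eigenvalue bounds $\lambda_{\min}(V_G^*)\ge\lambda_{\min}(\Delta)$ and $\lambda_{\max}(V_G^*)=O(1)$ together with Theorem~\ref{thm: V hat consistency for model 2}.

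The differences are cosmetic. In the first step you use a subsequence argument based on $A_GA_G'=I_q$ where the paper cites White's Corollary 4.24. In the second step you perturb the quadratic form through $\Gamma_G^{1/2}\hat\Gamma_G^{-1}\Gamma_G^{1/2}-I_q$, whereas the paper shows $\|\hat\Gamma_G^{-1/2}\Gamma_G^{1/2}-I_q\|=o_P(1)$ and then applies the continuous mapping theorem.
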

The proof of Theorem \ref{thm: Rb = r testing for model2} is given in the Appendix.
In the next subsection, we propose a novel method for testing whether the parameters are constant across clusters.

\subsection{Testing Parameter Constancy Using Superblocks}

Suppose we have clustered data organized in a superblock structure, where the data are grouped into $D$ superblocks. The $l^{th}$ superblock contains $P_{l}$ clusters such that 
\[
\sum_{l=1}^{D}P_{l}=G.
\]
Let us assume that for the $l^{th}$ superblock $\mathcal{S}_{l}$, $\beta_g = \beta + u_l,$ for $g\in\mathcal{S}_{l}$, where $\beta$ is a common $k \times 1$ parameter vector and $u_l$ has the properties similar to that mentioned in Assumption \ref{assump: epsilon_g, X_g and u_g independent} and Assumption \ref{assump: distribution of u_g}, i.e., $u_l$ has mean $0,$ and variance matrix $\Delta.$ Substituting this representation into \eqref{eq: model2} yields
\begin{equation} \label{eq: model2 for u_l}
    Y_g = X_g \beta + X_g u_l + \epsilon_g; \quad g\in \mathcal{S}_{l},\; l=1,2,\ldots,D.
\end{equation}
This specification allows the slope coefficients to be common within each superblock while permitting systematic heterogeneity across superblocks through the random effects $u_l$. Consequently, the composite error term captures both idiosyncratic shocks and superblock-level parameter variation. This assumption may not be a restriction on the model from a practical point of view.  It is probably safe, for most of the applications, to assume that at the lower tier (village, school...), parameters are not expected to change, but the underlying parameter vector may change at the district, state, or country level, i.e., at the higher level (Tier 2, Tier 3,...).

For the $l^{th}$ superblock $\mathcal{S}_{l}$, define $\Tilde{\beta}_{l}=\frac{1}{P_{l}}\sum_{g\in \mathcal{S}_{l}}\hat{\beta}_{g},$ constructed by taking the average of the clusters belonging to the $l^{th}$ superblock. Then, the proposed estimator mentioned in \eqref{eq: proposed estimator} becomes 
\[
\hat{\bar{\beta}} = \frac{1}{G}\sum_{l=1}^D P_l\Tilde{\beta}_l.
\]

Following the proof of Theorem \ref{thm:consistency}, it is easy to note that for every $l,$ $\Tilde{\beta}_l$ is consistent for $\beta,$ as $P_l\to\infty.$ 
For the $l$-th superblock $\mathcal{S}_{l},$ the variance-covariance
matrix of $\Tilde{\beta}_{l}$ is given by
\[
V_{G,l}=\frac{1}{P_{l}^{2}}\sum_{g\in \mathcal{S}_{l}}\mathbb{E}\left[\left(X_{g}^{\prime}X_{g}\right)^{-1}X_{g}^{\prime}\Omega_{g}X_{g}\left(X_{g}^{\prime}X_{g}\right)^{-1}\right].
\]

\begin{assumption} \label{assump: lambda_max Vl = lambda_min Vl}
    Assume that $V_{G,l}$ is finite and positive definite. Also, following from Assumption \ref{assump:lambda max V_G=lambda min V_G} 
\[
O_{e}\left(\lambda_{min}\left(V_{G,l}\right)\right)=O_{e}\left(\lambda_{max}\left(V_{G,l}\right)\right) = \begin{cases}
O_e\left(\frac{1}{P_{l}}\right), & \text{for strong dependence}\\
O_e\left(\frac{1}{P_{l}^{2}}\sum_{g\in \mathcal{S}_{l}}\frac{h\left(N_{g}\right)}{N_{g}}\right), & \text{for semi-strong dependence}\\
O_e\left(\frac{1}{P_{l}^{2}}\sum_{g\in \mathcal{S}_{l}}\frac{1}{N_{g}}\right), & \text{for weak dependence}
\end{cases}
\]
uniformly in $l$.
\end{assumption}

For weak dependence, the above follows immediately because both the smallest and largest eigenvalues satisfy $\lambda_{min}\left(\Omega_{g}\right)=O_{e}(1)=\lambda_{max}\left(\Omega_{g}\right)$
uniformly in $g.$ Also, by applying Lemma \ref{lemma:Exact order summability}, we can rigorously characterize the orders of $\lambda_{\min}\left(V_{G,l}\right)$ and $\lambda_{\max}\left(V_{G,l}\right),$ for the three kinds of dependence.

\begin{lemma} \label{lemma: V_l hat consistency}
    For the model \eqref{eq: model2 for u_l}, define $\Tilde{V}_{l}=\frac{1}{P_{l}^{2}}\sum_{g\in \mathcal{S}_{l}}\left(X_{g}^{\prime}X_{g}\right)^{-1}X_{g}^{\prime}e_{g}^{*}e_{g}^{*\prime}X_{g}\left(X_{g}^{\prime}X_{g}\right)^{-1},$
where $e_{g}^{*}=Y_{g}-X_{g}\Tilde{\beta}_{l}.$ Then, under Assumptions \ref{assump:exogeneity}, \ref{assump:error independence}, \ref{assump:Xg'Xg/Ng inverse is bounded}, \ref{assump:Order of second moment of Xg'Omega_g Xg} and \ref{assump: lambda_max Vl = lambda_min Vl}, for every $l=1,2,\ldots,D,$ $\Tilde{V}_l$ is consistent for $V_{G,l},$ as $P_l\to\infty.$
\end{lemma}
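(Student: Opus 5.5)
Within a fixed superblock $\mathcal{S}_l$ every cluster has the same coefficient $\beta_l=\beta+u_l$. So, conditionally on $u_l$, the clusters in $\mathcal{S}_l$ satisfy the constant-parameter model \eqref{eq: model1} with true parameter $\beta_l$. The random effect $u_l$ is independent of $X_g$ and $\epsilon_g$ (the analogue of Assumption \ref{assump: epsilon_g, X_g and u_g independent}), so conditioning on it leaves Assumptions \ref{assump:exogeneity}, \ref{assump:error independence}, \ref{assump:Xg'Xg/Ng inverse is bounded} and \ref{assump:Order of second moment of Xg'Omega_g Xg} intact. The key algebraic point is that $e_g^*=Y_g-X_g\tilde\beta_l=\epsilon_g-X_g(\tilde\beta_l-\beta_l)$: the random effect cancels exactly, unlike in the residuals of Theorem \ref{thm: V hat consistency for model 2}. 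Hence $\tilde V_l$ is precisely the estimator of Theorem \ref{thm: consistency of V hat}, computed on the $P_l$ clusters of $\mathcal{S}_l$, up to the normalization $1/P_l^2$ instead of $1/G$. The plan is to redo the proof of Theorem \ref{thm: consistency of V hat} with $G$ replaced by $P_l$ and $\beta$ by $\beta_l$. Consistency is understood in the relative sense $V_{G,l}^{-1/2}\tilde V_l V_{G,l}^{-1/2}\to I_k$ in probability, since $V_{G,l}\to0$.

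Write $A_g=(X_g'X_g)^{-1}X_g'$ and expand
\[
A_g e_g^*e_g^{*\prime}A_g' = A_g\epsilon_g\epsilon_g'A_g' - A_g\epsilon_g(\tilde\beta_l-\beta_l)' - (\tilde\beta_l-\beta_l)\epsilon_g'A_g' + (\tilde\beta_l-\beta_l)(\tilde\beta_l-\beta_l)'.
\]
Then $\tilde V_l - V_{G,l}=T_1+T_2+T_3$, with
\[
T_1=\frac{1}{P_l^2}\sum_{g\in\mathcal{S}_l}\bigl(A_g\epsilon_g\epsilon_g'A_g' - \mathbb{E}[A_g\Omega_gA_g']\bigr),
\]
$T_2$ the cross terms, which equal $-\frac{2}{P_l}(\tilde\beta_l-\beta_l)(\tilde\beta_l-\beta_l)'$ after symmetrisation because $\frac1{P_l}\sum_g A_g\epsilon_g=\tilde\beta_l-\beta_l$, and $T_3=\frac{1}{P_l}(\tilde\beta_l-\beta_l)(\tilde\beta_l-\beta_l)'$. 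By the computation after Theorem \ref{thm:consistency}, $\mathbb{E}\Vert\tilde\beta_l-\beta_l\Vert^2=\operatorname{tr}(V_{G,l})$. Therefore $T_2$ and $T_3$ are $O_P(\lambda_{\max}(V_{G,l})/P_l)$. By Assumption \ref{assump: lambda_max Vl = lambda_min Vl} this is $o_P(\lambda_{\min}(V_{G,l}))$ as $P_l\to\infty$, so these terms vanish after normalization.

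The main obstacle is $T_1$, a normalized sum of independent mean-zero matrices. I will bound its second moment via Chebyshev. Independence gives $\mathbb{E}\Vert T_1\Vert^2\le P_l^{-4}\sum_g \mathbb{E}\Vert A_g\epsilon_g\epsilon_g'A_g'-\mathbb{E}A_g\Omega_gA_g'\Vert^2$. I split each summand into the part $A_g(\epsilon_g\epsilon_g'-\Omega_g)A_g'$ and the part $A_g\Omega_gA_g'-\mathbb{E}[\cdot]$. Writing $A_g=(X_g'X_g/N_g)^{-1}X_g'/N_g$, the first part is controlled by Assumption \ref{assump:Order of second moment of Xg'Omega_g Xg} together with the moments in Assumption \ref{assump:Xg'Xg/Ng inverse is bounded}; I will use H\"older here, which may require slightly stronger joint moments. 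This gives $O(1)$, $O(h^2(N_g)/N_g^2)$ or $O(1/N_g^2)$ per cluster for strong, semi-strong and weak dependence respectively, and the second part is bounded similarly. Dividing by $\lambda_{\min}(V_{G,l})^2$ from Assumption \ref{assump: lambda_max Vl = lambda_min Vl} produces ratios of the form
\[
\frac{1}{P_l},\qquad \frac{\sum_{g\in\mathcal{S}_l} h^2(N_g)/N_g^2}{\bigl(\sum_{g\in\mathcal{S}_l} h(N_g)/N_g\bigr)^2},\qquad \frac{\sum_{g\in\mathcal{S}_l} 1/N_g^2}{\bigl(\sum_{g\in\mathcal{S}_l} 1/N_g\bigr)^2}
\]
for strong, semi-strong and weak dependence. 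The first tends to $0$ trivially, and the other two tend to $0$ by Lemma \ref{lemma: sum h^2/N^2 to 0} applied within the superblock, exactly as in Theorem \ref{thm: consistency of V hat}, under the same balance provisos. Combining with the bounds on $T_2$ and $T_3$ and unconditioning on $u_l$ yields the result.
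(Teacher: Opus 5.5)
Your proposal is correct and follows essentially the same route as the paper's proof. It uses the same decomposition of $\tilde V_l - V_{G,l}$ into two pieces. The first is the centred sum $T_1$, bounded in second moment via Assumption \ref{assump:Order of second moment of Xg'Omega_g Xg} and then normalised using Assumption \ref{assump: lambda_max Vl = lambda_min Vl} and Lemma \ref{lemma: sum h^2/N^2 to 0} applied within $\mathcal{S}_l$. The second is the quadratic and cross terms, which collapse to multiples of $P_l^{-1}(\tilde\beta_l-\beta_l)(\tilde\beta_l-\beta_l)'$ and are $O_P(\Vert V_{G,l}\Vert/P_l)$. Your version is in fact a bit more careful than the paper's in two places: you make explicit that $u_l$ cancels in $e_g^*$ (the paper writes $\epsilon_g - X_g(\tilde\beta_l-\beta)$, which is exact only under $H_0$ or with $\beta$ read as $\beta+u_l$), and you keep the fluctuation term $A_g\Omega_gA_g'-\mathbb{E}[A_g\Omega_gA_g']$, which the paper's bound on $\mathbb{E}\Vert T_1\Vert^2$ silently drops. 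Like the paper, you implicitly need the balance provisos of Lemma \ref{lemma: sum h^2/N^2 to 0} and moment conditions stronger than Assumption \ref{assump:Xg'Xg/Ng inverse is bounded} with a generic $p>1$.
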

The proof of Lemma \ref{lemma: V_l hat consistency} is given in the Appendix.

\begin{assumption}
\label{assump: condition for T_SB normality}
Assume that as $P_l \to \infty$ and $D \to \infty$, the following convergence holds:
\[
\frac{1}{\sqrt{D}}\sum_{l=1}^D \frac{P^2_l}{\sum_{g \in \mathcal{S}_l} \frac{h(N_g)}{N_g}} \cdot \frac{1}{G^2} \sum_{l=1}^D \sum_{g \in \mathcal{S}_l} \frac{h(N_g)}{N_g} \to 0.
\]
\end{assumption}

\begin{theorem} \label{thm: testing Delta=0}
 Let $T_{SB}=\sum_{l=1}^{D}\left(\Tilde{\beta}_{l}-\hat{\bar{\beta}}\right)^{\prime}\Tilde{V}_{l}^{-1}\left(\Tilde{\beta}_{l}-\hat{\bar{\beta}}\right).$
Then, under Assumptions \ref{assump:exogeneity}, \ref{assump:error independence}, \ref{assump:Xg'Xg/Ng inverse is bounded}, \ref{assump:Order of second moment of Xg'Omega_g Xg}, \ref{assump: lambda_max Vl = lambda_min Vl}, \ref{assump: condition for T_SB normality}, and Lemma \ref{lemma: V_l hat consistency}, for nearly balanced clusters, under any kind of dependence, 
\[
Z=\frac{T_{SB}-kD}{\sqrt{2kD}}\xrightarrow[H_{0}]{d}N\left(0,1\right),\;\text{ as }\min_{l}P_{l}\to\infty,\;D\to\infty,\text{ and }\frac{D}{\min_{l}P_{l}}\to0.
\]
\end{theorem}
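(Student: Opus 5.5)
Under $H_0:\Delta=0$ every $\beta_g=\beta$, so $\Tilde{\beta}_l-\beta=\frac{1}{P_l}\sum_{g\in\mathcal{S}_l}(X_g'X_g)^{-1}X_g'\epsilon_g$. The superblock averages $\Tilde{\beta}_l$ are therefore independent across $l$ (Assumption \ref{assump:error independence}) and mean zero about $\beta$ with covariance $V_{G,l}$. The plan is to compare $T_{SB}$ with the infeasible statistic
\[
T^{0}=\sum_{l=1}^{D}\left(\Tilde{\beta}_{l}-\beta\right)^{\prime}V_{G,l}^{-1}\left(\Tilde{\beta}_{l}-\beta\right)=\sum_{l=1}^D Q_l ,
\]
and to show that $(T^0-kD)/\sqrt{2kD}\xrightarrow{d}N(0,1)$ and $(T_{SB}-T^0)/\sqrt{D}\xrightarrow{p}0$. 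The argument proceeds in three steps.

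\textbf{Step 1: CLT for $T^0$.} For each $l$, $\xi_l=V_{G,l}^{-1/2}(\Tilde{\beta}_l-\beta)$ is a normalized average of $P_l$ independent terms. Applying the Liapounov argument from the proof of Theorem \ref{thm:normal} within a superblock, with $P_l\to\infty$ in place of $G\to\infty$ and using Assumption \ref{assump: lambda_max Vl = lambda_min Vl} together with the moment bounds, gives $\xi_l\Rightarrow N_k(0,I_k)$. Hence $Q_l=\xi_l'\xi_l$ has $\mathbb{E}Q_l=k$ exactly, and $\mathrm{Var}(Q_l)\to 2k$ uniformly in $l$. The variance limit needs fourth-moment control of $\xi_l$, which comes from Assumption \ref{assump:Order of second moment of Xg'Omega_g Xg} and Assumption \ref{assump:Xg'Xg/Ng inverse is bounded}. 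The normalized sum of independent centred $Q_l-k$ is then handled by a Lindeberg/Liapounov CLT over $l$ as $D\to\infty$. I would also quantify the speed of this within-superblock approximation: the discrepancy $\mathrm{Var}(Q_l)-2k$ should be $O(1/P_l)$ (a fourth-cumulant term), and summing over $l$ gives $D/\min_l P_l\to0$, as required.

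\textbf{Step 2: replace $\beta$ by $\hat{\bar\beta}$.} Write $\Tilde{\beta}_l-\hat{\bar\beta}=(\Tilde{\beta}_l-\beta)-(\hat{\bar\beta}-\beta)$. The resulting cross term and quadratic term in $\hat{\bar\beta}-\beta$ are of order $\sum_l \|\hat{\bar\beta}-\beta\|^2\lambda_{\max}(V_{G,l}^{-1})$, which is $\frac{1}{G^2}\sum_l\sum_{g\in\mathcal{S}_l}\frac{h(N_g)}{N_g}\cdot\sum_l\frac{P_l^2}{\sum_{g\in\mathcal{S}_l} h(N_g)/N_g}$ in the semi-strong case, with analogues for the other dependence types. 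Dividing by $\sqrt{D}$, this is exactly the quantity in Assumption \ref{assump: condition for T_SB normality}, so it vanishes. The cross term $2\sum_l(\Tilde\beta_l-\beta)'V_{G,l}^{-1}(\hat{\bar\beta}-\beta)$ is bounded by Cauchy--Schwarz by $2\sqrt{T^0}$ times the square root of that same quantity, i.e. $O_P(\sqrt{D})\cdot o(D^{1/4})$. This bound is too crude, so I would compute the cross term directly. Since $\hat{\bar\beta}-\beta=\sum_l (P_l/G)(\Tilde\beta_l-\beta)$, its expectation is $\sum_l (P_l/G)\,\mathrm{tr}(I_k)=k$ and its variance is $O(1)$ under near balance. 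It is therefore $o_P(\sqrt D)$.

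\textbf{Step 3: replace $V_{G,l}$ by $\Tilde V_l$; this is the main obstacle.} Lemma \ref{lemma: V_l hat consistency} gives only $\Tilde V_l\to V_{G,l}$ for each fixed $l$, whereas here $D$ errors must be aggregated at rate $o(\sqrt D)$. I would revisit the proof of that lemma to obtain a rate, namely $\mathbb{E}\|V_{G,l}^{-1/2}\Tilde V_lV_{G,l}^{-1/2}-I_k\|^2=O(1/P_l)$ uniformly in $l$. This should follow from Assumption \ref{assump:Order of second moment of Xg'Omega_g Xg} and the variance computation, since the relative error of an average of $P_l$ independent matrices is of order $P_l^{-1/2}$. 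Then
\[
|T_{SB}-T^0_{\hat{\bar\beta}}|\le\sum_l \|\xi_l^{*}\|^2\,\|A_l^{-1}-I\|_e,
\]
where $A_l=V_{G,l}^{-1/2}\Tilde V_l V_{G,l}^{-1/2}$ and $\xi_l^*=V_{G,l}^{-1/2}(\Tilde\beta_l-\hat{\bar\beta})$. A naive bound gives $O_P(D/\sqrt{\min P_l})$, which is $o(\sqrt D)$ only when $D/\min_lP_l\to0$; this is precisely the stated rate condition. It remains to control $A_l^{-1}$ uniformly on a high-probability event, which a union bound over $l$ with the $p$-th moments handles. Combining the three steps with Slutsky yields the result.
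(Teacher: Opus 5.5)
Your proposal is correct and follows the same skeleton as the paper's proof. The paper's $A_D$ is your $T^0$. Its $B_D$ is your quadratic centering term, bounded exactly as you do via $\sum_l\operatorname{tr}(V_{G,l}^{-1})\operatorname{tr}(V)$ and Assumption~\ref{assump: condition for T_SB normality}. Its Step~2 is your Step~3. The three key estimates are obtained differently, and in each case your version is the one that actually delivers the conclusion.

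(i) For $A_D$, the paper first fixes $D$ to get $\chi^2_{kD}$ and then lets $D\to\infty$, which is an iterated limit. Your triangular-array Lindeberg argument, using $\mathbb{E}Q_l=k$ exactly and $\mathrm{Var}(Q_l)\to 2k$ uniformly in $l$, handles the joint limit. Two small remarks on this step:
\begin{itemize}
\item The $O(1/P_l)$ variance discrepancy is already negligible relative to $2kD$ once $\min_lP_l\to\infty$, so $D/\min_lP_l\to0$ is not needed there.
\item For weak and semi-strong dependence, run the within-superblock Liapounov step with $p=2$ and the bound of Assumption~\ref{assump:Order of second moment of Xg'Omega_g Xg}. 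The crude bound from the proof of Theorem~\ref{thm:normal} would impose growth conditions that the paper lists and then drops.
\end{itemize}

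(ii) For the cross term, the paper uses $|C_D|\le 2\sqrt{A_DB_D}$. As you noticed, this gives only $o_P(D^{3/4})$. Your direct computation gives $\mathbb{E}C_D=-2k$ and $\mathrm{Var}(C_D)=O(1)$, since $\operatorname{tr}(V_{G,l}^{-1}V_{G,m})\asymp P_l/P_m$ under near balance. Hence $C_D=O_P(1)$, which closes a real gap in the paper's argument.

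(iii) For the plug-in step, the paper bounds $|Z-Z^*|$ by $\max_l\Vert V_{G,l}^{1/2}\Tilde V_l^{-1}V_{G,l}^{1/2}-I_k\Vert\cdot T^*_{SB}/\sqrt{2kD}$. It then asserts the maximum is $O_P(1/\min_lP_l)$, which does not follow from the per-$l$ rate $O_P(P_l^{-1/2})$ it derives. With only second moments, a max-times-sum bound would need roughly $D^2/\min_lP_l\to0$. Your route uses exactly the stated condition $D/\min_lP_l\to0$:
\begin{itemize}
\item prove $\mathbb{E}\Vert A_l-I_k\Vert^2=O(1/P_l)$;
\item use a Chebyshev union bound to put all $A_l$ within $1/2$ of $I_k$ with probability $1-O(D/\min_lP_l)$;
\item apply a termwise Cauchy--Schwarz in expectation, giving $O(\sqrt{D/\min_lP_l})$ after dividing by $\sqrt D$.
\end{itemize}

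The price of your approach is explicit uniform-in-$l$ fourth-moment bounds: $\mathbb{E}\Vert\xi_l^*\Vert^4=O(1)$ and the rate for $\Tilde V_l$. These effectively need Assumption~\ref{assump:Xg'Xg/Ng inverse is bounded} with $p\ge 4$. The paper's proof of Lemma~\ref{lemma: V_l hat consistency} already uses that strengthening tacitly, so nothing beyond what the paper relies on is required.
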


The proof of Theorem \ref{thm: testing Delta=0} is given in the Appendix. 

The test statistic developed in this theorem is designed to test the parameter constancy across superblocks. The assumption that \( D / \min_{l} P_{l} \to 0 \) as \( D \to \infty \) and \( \min_{l} P_{l} \to \infty \) is economically and statistically meaningful, as it ensures that the number of superblocks grows at a slower rate than the minimum number of clusters within each superblock.

In the following subsection, we examine the properties of the traditional pooled ordinary least squares (POLS) estimator under the random coefficients model, considering alternative forms of cross-sectional error dependence for arbitrary cluster sizes.

\subsection{Pooled Ordinary Least Squares}

The most commonly used estimator in the clustered framework is the pooled ordinary least squares (POLS) estimator. 
Stacking the observations across clusters from the model in \eqref{eq: model1} yields
\[
Y = X \beta + \epsilon,
\]
where $Y$ and $\epsilon$ are vectors of dimension $\left(\sum_{g=1}^G N_g\right) \times 1$, and $X$ is a $\left(\sum_{g=1}^G N_g\right) \times k$ matrix of regressors.

For the above model, the POLS estimator of $\beta$ is defined as
\begin{equation} \label{eq: ols estimator}
\hat{\beta}_{POLS} = \left(X^{\prime}X\right)^{-1} X^{\prime} Y,
\end{equation}
which coincides with the ordinary least squares estimator when the data are pooled across clusters. Under Assumptions \ref{assump:exogeneity} and \ref{assump:error independence}, the variance-covariance matrix of $\hat{\beta}_{POLS}$ can be expressed as
\begin{equation} \label{eq: ols variance}
\Sigma = \mathbb{E}\left[ \left(X^{\prime}X\right)^{-1} X^{\prime} \Omega X \left(X^{\prime}X\right)^{-1} \right],
\end{equation}
where $\Omega = \mathrm{Var}(\epsilon)$ represents the variance-covariance matrix of the stacked error vector $\epsilon$. In this setup, the conventional cluster-robust variance estimator (CRVE) takes the form:
\begin{equation} \label{eq: crve}
\hat{\Sigma} = \left(X^{\prime}X\right)^{-1} 
\Bigg( \sum_{g=1}^{G} X_g^{\prime} \hat{\epsilon}_g \hat{\epsilon}_g^{\prime} X_g \Bigg) 
\left(X^{\prime}X\right)^{-1},
\end{equation}
where $\hat{\epsilon}_g = Y_g - X_g \hat{\beta}_{POLS}$ denotes the vector of residuals for the $g$-th cluster, for $g = 1,2,\ldots,G$. This estimator is consistent for the true variance of $\hat{\beta}_{POLS}$ in the presence of arbitrary within-cluster correlation, provided that the number of clusters $G$ is sufficiently large and cluster sizes satisfy a mild homogeneity condition.

There are several settings in which the proposed estimator can be shown to be asymptotically more efficient than the pooled ordinary least squares (POLS) estimator. In this context, we consider the following results.

\begin{result}\label{result: V<VOLS, for weak dependence}
    If $\Omega_g=a_g I_{N_g}$ and $X_g^\prime X_g=a_g,$ then $V<\Sigma.$
\end{result} 
The proof of Result \ref{result: V<VOLS, for weak dependence} is given in the Appendix. The above result clearly suggests that the proposed estimator is more efficient, even under weak cross-sectional dependence.

\begin{result}\label{result:efficiency}
    Let us assume that $\Omega_g= (a-b) I_{N_g} + b \mathbbm{1}\mathbbm{1}^{\prime}$ and $X_g = \mathbbm{1},$ $\forall g,$ where $0<b<a<\infty.$  Also, assume that the first cluster is asymptotically dominant, with \( N_{1} \to \infty \), while the remaining clusters have finite sizes. Then, the proposed estimator $\hat{\bar{\beta}}$ is asymptotically more efficient than $\hat{\beta}_{POLS}$.
\end{result}

The proof of the above result is provided in the Appendix. Note that \(\Omega_g\) exhibits strong cross-sectional dependence in this setting. It follows from the result that the proposed estimator is asymptotically more efficient in the presence of strong error dependence and unbalanced cluster sizes.

We rewrite the model \eqref{eq: model2 for u_g} as
\begin{equation}\label{eq: model 2 for eta_g}
    Y_g = X_g\beta+\eta_g,\quad \text{ where } \eta_g=X_g u_g+\epsilon_g.  
\end{equation}
Here, the composite error term $\eta_g$ incorporates both idiosyncratic shocks and unobserved heterogeneity arising from the random coefficient variation. The POLS estimate of $\beta$ for model \eqref{eq: model 2 for eta_g} is given by
\begin{equation}\label{eq: POLS for model 2}
    \hat{\beta}_{POLS}=(X^{\prime}X)^{-1}X^{\prime}Y=\beta+(X^{\prime}X)^{-1}\sum_{g=1}^G X_g^\prime\eta_g.
\end{equation}
Clearly, $\mathbb{E}[\hat{\beta}_{POLS}]=0,$ since $\mathbb{E}[\eta_g]=0,$ by Assumption \ref{assump:exogeneity} and \ref{assump: distribution of u_g}. Also, by Assumptions \ref{assump: epsilon_g, X_g and u_g independent}, \ref{assump: distribution of u_g} and equation \eqref{eq: ols variance}, the variance-covariance matrix of $\hat{\beta}_{POLS}$ under model \eqref{eq: model 2 for eta_g} is
\begin{equation} \label{eq: ols variance for model 2}
    {\Sigma}^* = \Sigma + \mathbb{E}\left[(X^{\prime}X)^{-1}\sum_{g=1}^G X_g^\prime X_g\Delta X_g^\prime X_g(X^{\prime}X)^{-1}\right].
\end{equation}
Now we establish the asymptotic theory for the POLS for $\beta$ varying with clusters.

\begin{assumption}\label{assump: Xg'Xg=O(Ng)}
(i) Assume that $C_{G}=\mathbb{E}\left[\frac{X^{\prime}X}{\sum_{g=1}^{G}N_{g}}\right]$
is finite and uniformly positive definite over $G.$ Also, assume
that 
\[
\frac{X^{\prime}X}{\sum_{g=1}^{G}N_{g}} - C_{G}=o_{P}(1).
\]

(ii) Assume that for some $p>1,$
\[
\mathbb{E}\left[\left(\operatorname{tr}\left(X_{g}^{\prime}X_{g}\right)\right)^{p}\right]=O_e\left(N_{g}^{p}\right) \; \text{ uniformly in } g,
\]
as in Definition \ref{def:exact order and order uniformly}.
\end{assumption}
By Assumption \ref{assump: Xg'Xg=O(Ng)}(i), we mean that $\left|\left(\frac{X^{\prime}X}{\sum_{g=1}^{G}N_{g}}\right)_{j,j_1}-(D_G)_{j,j_1}\right|=o_P(1),$ for all $j,j_1=1,2,\ldots,k.$ This implies element-wise convergence in probability of the sample matrix to its expectation.
Assumption \ref{assump: Xg'Xg=O(Ng)}(ii) controls the growth of the regressor matrix within each cluster by requiring that the average squared magnitude of the regressors remains bounded away from zero as $N_g$ increases, uniformly over $g$. It rules out designs in which $X_g'X_g$ grows faster than order $N_g$.

\begin{theorem}
\label{thm: ols consistency under varying parameters}
Under  Assumptions \ref{assump:exogeneity}, \ref{assump:error independence}, \ref{assump: epsilon_g, X_g and u_g independent}, \ref{assump: distribution of u_g}, and \ref{assump: Xg'Xg=O(Ng)}, in the presence of strong, semi-strong, or weak cross-sectional dependence, $\hat{\beta}_{POLS}$ is consistent for $\beta$ for model \eqref{eq: model 2 for eta_g} provided (i) the clusters are nearly balanced, or (ii) the clusters are unbalanced with the restriction $\frac{\max_{g} N_g }{\sum_{g=1}^G N_g} \to 0,$ as $G\to\infty.$
\end{theorem}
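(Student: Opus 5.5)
We start from the decomposition \eqref{eq: POLS for model 2},
\[
\hat{\beta}_{POLS}-\beta=\left(\frac{X^{\prime}X}{\sum_{g}N_{g}}\right)^{-1}\left[\frac{\sum_{g=1}^{G}X_{g}^{\prime}\epsilon_{g}}{\sum_{g}N_{g}}+\frac{\sum_{g=1}^{G}X_{g}^{\prime}X_{g}u_{g}}{\sum_{g}N_{g}}\right]=:\left(\frac{X^{\prime}X}{\sum_{g}N_{g}}\right)^{-1}\left[A_{G}+B_{G}\right].
\]
By Assumption \ref{assump: Xg'Xg=O(Ng)}(i), $X^{\prime}X/\sum_g N_g - C_G = o_P(1)$ with $C_G$ uniformly positive definite. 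Hence $\lambda_{\min}(X^{\prime}X/\sum_g N_g)$ is bounded away from zero with probability tending to one, and $\left(X^{\prime}X/\sum_g N_g\right)^{-1}=O_P(1)$. It therefore suffices to show $A_G=o_P(1)$ and $B_G=o_P(1)$, which we do through second moments and Chebyshev's inequality.

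\textbf{The term $A_G$.} By Assumptions \ref{assump:exogeneity} and \ref{assump:error independence}, $\mathbb{E}[A_G]=0$ and
\[
\mathbb{E}\Vert A_{G}\Vert^{2}=\frac{1}{(\sum_{g}N_{g})^{2}}\sum_{g}\mathbb{E}\operatorname{tr}\left(X_{g}^{\prime}\Omega_{g}X_{g}\right)\le\frac{1}{(\sum_{g}N_{g})^{2}}\sum_{g}\lambda_{\max}(\Omega_{g})\,\mathbb{E}\operatorname{tr}(X_{g}^{\prime}X_{g}).
\]
By Assumption \ref{assump: Xg'Xg=O(Ng)}(ii) and Jensen's inequality, $\mathbb{E}\operatorname{tr}(X_g^{\prime}X_g)=O(N_g)$ uniformly in $g$. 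The worst case is strong dependence, where $\lambda_{\max}(\Omega_g)=O(N_g)$. The bound is then $O\left(\sum_g N_g^2/(\sum_g N_g)^2\right)\le O\left(\max_g N_g/\sum_g N_g\right)$, which tends to $0$ in case (ii). In the nearly balanced case (i), $\sum_g N_g^2/(\sum_g N_g)^2=O(1/G)\to0$. The semi-strong and weak cases are dominated by the strong case, since $h(N_g)\le c N_g$.

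\textbf{The term $B_G$.} By Assumptions \ref{assump: epsilon_g, X_g and u_g independent} and \ref{assump: distribution of u_g}, the $u_g$ are mean zero, independent across $g$, and independent of $X$. Hence $B_G$ has mean zero and uncorrelated summands, and
\[
\mathbb{E}\Vert B_{G}\Vert^{2}=\frac{1}{(\sum_{g}N_{g})^{2}}\sum_{g}\mathbb{E}\operatorname{tr}\left(X_{g}^{\prime}X_{g}\Delta X_{g}^{\prime}X_{g}\right)\le\frac{\lambda_{\max}(\Delta)}{(\sum_{g}N_{g})^{2}}\sum_{g}\mathbb{E}\left[\left(\operatorname{tr}X_{g}^{\prime}X_{g}\right)^{2}\right].
\]
\textbf{Main obstacle.} This step is where we expect difficulty. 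Assumption \ref{assump: Xg'Xg=O(Ng)}(ii) gives $\mathbb{E}[(\operatorname{tr}X_g^{\prime}X_g)^p]=O(N_g^p)$ only for some $p>1$, not necessarily $p=2$, so the second moment above is not directly controlled.

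If $p\ge2$, we obtain the bound $O(\sum_g N_g^2/(\sum_g N_g)^2)$, which vanishes exactly as in the analysis of $A_G$. For $1<p<2$, we would instead use the von Bahr--Esseen inequality for sums of independent mean-zero vectors (applied componentwise), which gives
\[
\mathbb{E}\Vert B_{G}\Vert^{p}\lesssim(\sum_{g}N_{g})^{-p}\sum_{g}\mathbb{E}\Vert X_{g}^{\prime}X_{g}u_{g}\Vert^{p}.
\]
Using independence of $u_g$ and $X_g$, this is at most $(\sum_g N_g)^{-p}\sum_g \mathbb{E}(\operatorname{tr}X_g^{\prime}X_g)^p\,\mathbb{E}\Vert u_g\Vert^p$, which is $O\left(\sum_g N_g^p/(\sum_g N_g)^p\right)\le O\left((\max_g N_g/\sum_g N_g)^{p-1}\right)\to0$. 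Here $\mathbb{E}\Vert u_g\Vert^p<\infty$ follows from $\Delta$ being finite, since $p<2$.

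The same $L^p$ device can be applied to $A_G$ if needed. The two cases, nearly balanced and $\max_g N_g/\sum_g N_g\to0$, enter only through this final ratio, and Markov's inequality then gives $B_G=o_P(1)$. Combining with $A_G=o_P(1)$ and the $O_P(1)$ bound on the inverse completes the consistency argument.
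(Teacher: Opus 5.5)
Your argument is correct, and in outline it is the paper's proof:
\begin{itemize}
\item the same split of $\hat{\beta}_{POLS}-\beta$ into the $X^{\prime}\epsilon$ and $\sum_g X_g^{\prime}X_g u_g$ pieces;
\item the same $O_P(1)$ control of $\left(X^{\prime}X/\sum_g N_g\right)^{-1}$ from Assumption \ref{assump: Xg'Xg=O(Ng)}(i), where your Weyl-inequality step replaces the paper's auxiliary lemma;
\item the same second-moment bounds, which reduce everything to $\sum_g N_g^2/(\sum_g N_g)^2\le \max_g N_g/\sum_g N_g$.
\end{itemize}

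The genuine difference is in the random-coefficient term. The paper bounds its variance using $\mathbb{E}[\lambda_{\max}((X_g^{\prime}X_g)^2)]=O(N_g^2)$, which it attributes to Assumption \ref{assump: Xg'Xg=O(Ng)}(ii). That assumption only gives a $p$-th moment of $\operatorname{tr}(X_g^{\prime}X_g)$ for some $p>1$. So the paper's claim, and even the finiteness of the variance it uses with Chebyshev, implicitly requires $p\ge 2$. Your von Bahr--Esseen step for $1<p<2$ gives the rate $(\max_g N_g/\sum_g N_g)^{p-1}$ and removes this hidden requirement. Your version therefore proves the theorem under the assumption as literally stated. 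As you note, the $\epsilon$-term needs only the first moment of $\operatorname{tr}(X_g^{\prime}X_g)$ and is fine either way.

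One precision for the write-up: the $X_g$ are not assumed independent across clusters. So apply von Bahr--Esseen conditionally on $X$, or in its martingale-difference form with respect to $\sigma(X,u_1,\ldots,u_g)$. This uses independence of $\{u_g\}$ from all of $X$, which is slightly stronger than the literal wording of Assumption \ref{assump: epsilon_g, X_g and u_g independent}. The paper relies on the same reading when it drops the cross-cluster terms in the variance.
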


The proof of Theorem \ref{thm: ols consistency under varying parameters} is given in the Appendix.

\begin{theorem}  \label{thm: ols inconsistency under varying clusters}
    Under Assumptions \ref{assump:exogeneity}, \ref{assump:error independence}, \ref{assump: epsilon_g, X_g and u_g independent}, \ref{assump: distribution of u_g}, and \ref{assump: Xg'Xg=O(Ng)}, the POLS estimator $\hat{\beta}_{POLS}$ is inconsistent for unbalanced clusters, unless $\frac{\sum_{g=1}^G N_g^2}{\left( \sum_{g=1}^G N_g\right)^2} \to 0.$
\end{theorem}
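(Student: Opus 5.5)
We decompose the estimation error using \eqref{eq: POLS for model 2} and split it into the idiosyncratic part and the random-coefficient part:
\[
\hat{\beta}_{POLS}-\beta=\left(\frac{X^{\prime}X}{\sum_g N_g}\right)^{-1}\frac{1}{\sum_g N_g}\sum_{g=1}^G X_g^{\prime}\epsilon_g+\left(\frac{X^{\prime}X}{\sum_g N_g}\right)^{-1}\frac{1}{\sum_g N_g}\sum_{g=1}^G X_g^{\prime}X_g u_g .
\]
By Assumption \ref{assump: Xg'Xg=O(Ng)}(i), the inverse factor is $C_G^{-1}+o_P(1)$, with $C_G^{-1}$ bounded uniformly in $G$. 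So consistency is equivalent to both normalized sums being $o_P(1)$. The plan is to show the following. The first (error) term is always negligible, or at least does not cancel the second. The second term, $A_G:=\frac{1}{\sum_g N_g}\sum_g X_g'X_g u_g$, is a sum of independent mean-zero vectors whose variance stays bounded away from zero unless $\sum_g N_g^2/(\sum_g N_g)^2\to 0$.

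\textbf{Variance of $A_G$.} By Assumptions \ref{assump: epsilon_g, X_g and u_g independent} and \ref{assump: distribution of u_g}, the terms $X_g'X_gu_g$ are independent across $g$ with mean zero, and
\[
\operatorname{tr}\operatorname{Var}(A_G)=\frac{1}{(\sum_g N_g)^2}\sum_{g=1}^G \mathbb{E}\operatorname{tr}\left(X_g'X_g\Delta X_g'X_g\right).
\]
Since $\Delta$ is positive definite, $\operatorname{tr}(X_g'X_g\Delta X_g'X_g)\ge \lambda_{\min}(\Delta)\operatorname{tr}((X_g'X_g)^2)\ge \lambda_{\min}(\Delta)(\operatorname{tr}X_g'X_g)^2/k$. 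Assumption \ref{assump: Xg'Xg=O(Ng)}(ii) with $p=2$ (or Jensen from $p>1$) gives $\mathbb{E}(\operatorname{tr}X_g'X_g)^2\ge c N_g^2$. Lemma \ref{lemma:Exact order summability} then yields $\operatorname{tr}\operatorname{Var}(A_G)=O_e\big(\sum_g N_g^2/(\sum_g N_g)^2\big)$. If this ratio does not go to zero, then along a subsequence it is bounded below by some $\delta>0$.

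\textbf{Nondegeneracy in probability.} Positive variance alone does not rule out $A_G\to 0$ in probability. Along that subsequence, however, $\sum_g N_g^2/(\sum_g N_g)^2\ge\delta$ forces $\max_g N_g/\sum_g N_g\ge\delta$, because $\sum_g N_g^2\le \max_g N_g\sum_g N_g$. So some single cluster $g^*$ carries a nonvanishing share $w_{g^*}=N_{g^*}/\sum_g N_g\ge \delta$. Write
\[
A_G=\frac{X_{g^*}'X_{g^*}}{\sum_g N_g}u_{g^*}+R_G ,
\]
where $R_G$ is independent of $u_{g^*}$ given $X$.

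The key step, and the main obstacle, is to show that $\frac{X_{g^*}'X_{g^*}}{\sum_g N_g}u_{g^*}$ does not vanish and cannot be cancelled by $R_G$ or by the $\epsilon$-term. I would argue by conditioning on $X$ and all other $u_g,\epsilon_g$. Then $\hat\beta_{POLS}-\beta=M_G u_{g^*}+W_G$ with $M_G=(X'X)^{-1}X_{g^*}'X_{g^*}$, where $W_G$ is independent of $u_{g^*}$. Since $u_{g^*}$ is nondegenerate with covariance $\Delta>0$, the quantity $\mathbb{P}(\|M_G u_{g^*}+w\|>\eta)$ is bounded below uniformly in $w$ whenever $\lambda_{\min}(M_G)$ is bounded below. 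That holds with probability bounded away from zero, because $X_{g^*}'X_{g^*}/N_{g^*}$ has eigenvalues bounded below on an event of positive probability, by the trace-moment lower bound of Assumption \ref{assump: Xg'Xg=O(Ng)}(ii) and a Paley--Zygmund argument. The step needs $u_{g^*}$ to be nondegenerate, but not any particular distribution for it. A clean version takes $u_g$ i.i.d. with a fixed law, so anti-concentration, namely that no shift $w$ concentrates $Mu+w$ near $0$, follows from $\Delta>0$.

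It follows that $\mathbb{P}(\|\hat\beta_{POLS}-\beta\|>\eta)\not\to0$, so the estimator is inconsistent. Conversely, if the ratio tends to zero, the same variance bound together with Theorem \ref{thm: ols consistency under varying parameters} gives consistency. This identifies $\sum_g N_g^2/(\sum_g N_g)^2\to0$ as the exact condition.
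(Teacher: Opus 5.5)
\paragraph{Comparison with the paper's proof.} Your route is genuinely different from the paper's. The paper never singles out a cluster. It sets $Z_G^*=(\sum_g N_g)^{-1}\sum_g X_g'\eta_g$ and bounds $\mathbb{E}\Vert Z_G^*\Vert^2$ from below via $\operatorname{Var}(X'\epsilon+\sum_g X_g'X_gu_g)\ge\operatorname{Var}(\sum_g X_g'X_gu_g)$. It then bounds $\mathbb{E}\Vert Z_G^*\Vert^4=O(1)$ by expanding into $T_1,T_2,T_3$. Uniform integrability of $\Vert Z_G^*\Vert^2$ then shows that $Z_G^*\to0$ in probability would force $L^2$ convergence, contradicting the second-moment bound. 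You rightly note that a variance bound alone is not enough. You replace the fourth-moment and uniform-integrability step by the deterministic fact $\sum_g N_g^2\le\max_g N_g\sum_g N_g$, which yields a dominant cluster, followed by conditioning and anti-concentration of $u_{g^*}$.

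What your route buys:
\begin{itemize}
\item It needs no moment conditions on $\epsilon_g$ at all.
\item It handles any cancellation between the error part and the random-coefficient part exactly. So your claim that consistency is equivalent to both sums being $o_P(1)$ is not literally true, but it is also unnecessary.
\item It exposes the mechanism: one dominant $u_{g^*}$ survives.
\end{itemize}
The paper's moment argument needs no anti-concentration. It does, however, tacitly use bounded fourth moments of $u_g$, $\epsilon_g$ and $X_g'X_g/N_g$, none of which are among the theorem's hypotheses. Your price is an anti-concentration property of $u_{g^*}$ that is uniform over the possibly $G$-dependent index $g^*$. A common law gives it, as you say. So do uniformly bounded $2p$-th moments in the spirit of Assumption~\ref{assump: existence of moment for ug}, combined with Paley--Zygmund. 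Assumptions~\ref{assump: epsilon_g, X_g and u_g independent} and \ref{assump: distribution of u_g} alone do not.

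\paragraph{The step that fails as justified.} This is the nondegeneracy of $M_G$.
\begin{itemize}
\item A lower bound on $\mathbb{E}[(\operatorname{tr}X_{g^*}'X_{g^*})^p]$ controls at most $\lambda_{\max}(X_{g^*}'X_{g^*})$, never all eigenvalues. A regressor that is constant within the large cluster makes $X_{g^*}'X_{g^*}$ singular while its trace is of order $N_{g^*}$.
\item Paley--Zygmund needs bounds at two different moment orders. Assumption~\ref{assump: Xg'Xg=O(Ng)}(ii) gives a single order $p$, which does not stop $\operatorname{tr}(X_{g^*}'X_{g^*})/N_{g^*}$ from being near zero with probability tending to one (put mass $m^{-p}$ at $m$ with $m\to\infty$).
\end{itemize}

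\paragraph{The repair.} The fix is to use Assumption~\ref{assump:Xg'Xg/Ng inverse is bounded}. It is standing in the paper but not listed for this theorem. Markov's inequality gives $\mathbb{P}\big(\lambda_{\min}(X_{g^*}'X_{g^*}/N_{g^*})<\varepsilon\big)\le M\varepsilon^p$. Alternatively, for a non-random design, pass to $\Vert M_G\Vert_e$, which is all you need: project onto a left singular vector $a$ and use scalar anti-concentration of $a'M_Gu_{g^*}$. You also need $C_G$ bounded \emph{above}, i.e.\ $\lambda_{\max}(X'X)=O_P(\sum_g N_g)$, rather than the boundedness of $C_G^{-1}$ you cite. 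Then the smallest singular value of $M_G$ satisfies
\[
\sigma_{\min}(M_G)\ge\frac{\lambda_{\min}(X_{g^*}'X_{g^*})}{\lambda_{\max}(X'X)}\ge\frac{\varepsilon\delta}{C}
\]
on an event of probability bounded away from zero. With these changes, and a common law (or uniform $2p$-th moments) for the $u_g$, the rest of your argument goes through.
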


The proof of Theorem \ref{thm: ols inconsistency under varying clusters} is given in the Appendix.

If a finite number of clusters are extremely large and the remaining clusters are bounded or unbounded at a lower rate than the clusters, $\hat{\beta}_{POLS}$ becomes inconsistent. However, it is needless to mention that even in such situations $\hat{\bar{\beta}}$ is consistent. Heterogeneity in cluster sizes is a common feature of firm-level and industry-level datasets.  For instance, firm-level datasets such as the Annual Survey of Industries (ASI) in India, Compustat in the United States, and ORBIS across multiple countries reveal substantial variation in the number of firms across industries. In these data, a small number of industries account for a disproportionately large share of firms, while many others consist of relatively few firms. Such unbalanced cluster sizes arise naturally in practice and are routinely observed in large-scale economic datasets.

The following section presents the application of both our proposed tests and the POLS-based test to simulated data and empirical datasets.

\section{Simulation and Empirical Analysis} 
\label{section:analysis}

This section investigates the finite-sample performance of the proposed estimator
$\hat{\bar{\beta}}$ under a variety of data-generating processes. We examine the behavior of all test procedures and test statistics developed in the preceding sections, with specific emphasis on their empirical size and power. We further compare the performance of the proposed testing procedure with the conventional test based on the Pooled Ordinary Least Squares (POLS) for testing general linear hypotheses. The superblock estimation approach for testing parameter constancy is also assessed, both through Monte Carlo simulation experiments and an empirical application using data from the Household Consumption Expenditure Survey of India.

\subsection{Simulation Study}

To examine the finite-sample properties of our methods, we conducted
Monte Carlo simulations for a linear model with an intercept and a single
regressor.

\subsubsection{General Linear Hypothesis Testing} 

We are interested in testing $H_{0}:\beta=\beta_{0}$ against $H_{1}:\beta\neq\beta_{0}$ under model \eqref{eq: model1}. We generate 10000 replications, where each replication yields a new
draw of data from the dgp, and that leads to rejection or non-rejection of $H_{0}$. In each replication, there are $G$ clusters, with $N_{g}$ individuals in each cluster. The simulation procedure is described below. The simulated data are generated as
\[
Y_{gi}  =  X_{gi}^{\prime}\beta+\epsilon_{gi}.
\]

The error terms $\epsilon_{g}$ are drawn from a multivariate normal distribution $N_{N_g}(0, \Omega_g)$, where $\Omega_{g}$ are constructed to ensure strong dependence by setting:
\begin{equation} \label{eq: Omega_g = M_g M_g'}
    \Omega_{g} = M_{g}M_{g}^{\prime},
\end{equation}
where $M_{g}$ is an $N_{g} \times N_{g}$ matrix with entries $m_{ij}$ drawn independently from  $\text{Unif}\,[-5, 10]$. By construction, this specification results in the maximum eigenvalue of $\Omega_g$ being of order $O(N_g)$ for each group $g$, thereby generating the desired strong dependence structure in the data. In the simulation design, the first cluster is taken to be a large cluster. For $g = 2,3,\ldots,G$, the cluster sizes $N_g$ are independently drawn from the discrete uniform distribution on $\{25,\ldots,50\}$. The regressors $X_{gi}$ are two-dimensional, consisting of an intercept and a slope. For the first cluster, let $p_1^{(\max)}$ denote the eigenvector corresponding to the largest eigenvalue $\lambda_{\max}(\Omega_1)$. Define $X_1^{*} = \mathrm{sgn}\!\left(p_1^{(\max)}\right)$ and construct the regressor matrix
$X_1 = [\,\mathbbm{1}_{N_1},\, c_1 X_1^{*}\,]$, where the scaling coefficients $c_{1j} \sim \mathrm{Unif}(2,10)$ independently for $j = 1,2,\ldots,N_1$. For clusters $g = 2,3,\ldots,G$, the regressor matrices $X_g$ are generated with $\mathbbm{1}_{N_g}$ as the first column and the second column consisting of $N_g$ independent draws from a normal distribution $N(\mu_g,\omega_g^{2})$, where $\mu_g \sim \mathrm{Unif}(10,100)$ and $\omega_g^{2} \sim \mathrm{Unif}(200,300)$. 

After generating $\{X_{gi}, Y_{gi}\}$, the proposed estimator is computed as in \eqref{eq: proposed estimator}. Keeping the regressors and the model parameters fixed, we replicate the entire data-generating process $R^{*}=10{,}000$ times. For each replication, we compute $\hat{\bar{\beta}},$ and its variance estimate
\[
\hat{V}_G
=
\frac{1}{G}\sum_{g=1}^{G}
\left(X_{g}^{\prime}X_{g}\right)^{-1}
X_{g}^{\prime} e_{g} e_{g}^{\prime} X_{g}
\left(X_{g}^{\prime}X_{g}\right)^{-1},
\]
where $e_{g}=Y_{g}-X_{g}\hat{\bar{\beta}}$, as defined in
Theorem~\ref{thm: consistency of V hat}. The corresponding test statistic is
\[
T
=
G\left(R\hat{\bar{\beta}}-r\right)^{\prime}
\hat{\Delta}_{G}^{-1}
\left(R\hat{\bar{\beta}}-r\right),
\qquad
\hat{\Delta}_{G}=R\hat{V}_{G}R^{\prime},
\]
which is evaluated under both the null hypothesis $H_{0}$ and the alternative hypothesis $H_{1}$.

Across replications, the error terms are generated from the same distribution,
while the covariance matrices $\{\Omega_g\}_{g=1}^{G}$ are held fixed. The
empirical size and empirical power of the test at the $5\%$ significance level are then computed. The empirical size is defined as the proportion of test statistics $T$ that fall in the rejection region under $H_{0}$, while the empirical power is defined analogously under $H_{1}$. For meaningful power comparisons, a size correction is implemented.  Specifically, the size-corrected critical value is taken to be the $95^{\text{th}}$ quantile of the empirical distribution of $T$ obtained from all replications under $H_{0}$. This corrected critical value is  subsequently used to compute size-corrected empirical powers.

For comparison, we also consider the usual test statistic based on pooled
ordinary least squares (POLS), defined as
\[
T_{\text{POLS}}
=
\left(R\hat{\beta}_{\text{POLS}}-r\right)^{\prime}
\left(R\hat{V}_{\text{POLS}}R^{\prime}\right)^{-1}
\left(R\hat{\beta}_{\text{POLS}}-r\right),
\]
where
\[
\hat{V}_{\text{POLS}}
=
\left(X^{\prime}X\right)^{-1}
\left(\sum_{g=1}^{G}
X_{g}^{\prime}\hat{\epsilon}_{g}\hat{\epsilon}_{g}^{\prime}X_{g}\right)
\left(X^{\prime}X\right)^{-1},
\]
with $\hat{\epsilon}_{g}=Y_{g}-X_{g}\hat{\beta}_{\text{POLS}}.$ In the following table, we compare the test based on the proposed estimator with the POLS-based test for unbalanced cluster sizes under strong error dependence and report the empirical size and power of the tests
across different values of $G$ and $N_1$.

\begin{table}[H]
\centering
\caption{Comparison  of the proposed test and the POLS-based test under strong dependence for one large cluster}
\begin{tabular}{|c|c|c|c|c|c|c|c|c|}
\hline 
\multirow{2}{*}{$(G,N_{1})$} & \multicolumn{2}{c|}{Size} & \multicolumn{2}{c|}{Critical Value} & \multicolumn{2}{c|}{Power} & \multicolumn{2}{c|}{Size-corrected Power}\tabularnewline
\cline{2-9} \cline{3-9} \cline{4-9} \cline{5-9} \cline{6-9} \cline{7-9} \cline{8-9} \cline{9-9} 
 & $\hat{\bar{\beta}}$ & $\hat{\beta}_{POLS}$ & $\hat{\bar{\beta}}$ & $\hat{\beta}_{POLS}$ & $\hat{\bar{\beta}}$ & $\hat{\beta}_{POLS}$ & $\hat{\bar{\beta}}$ & $\hat{\beta}_{POLS}$\tabularnewline
\hline 
(25, 100) & 0.064 & 0.138 & 4.27 & 6.75 & 0.996 & 0.995 & 0.995 & 0.993\tabularnewline
\hline 
(25, 500) & 0.066 & 0.835 & 4.42 & 30.68 & 0.991 & 0.866 & 0.986 & 0.741\tabularnewline
\hline 
(50, 100) & 0.058 & 0.137 & 4.21 & 6.52 & 1 & 0.999 & 1 & 0.998\tabularnewline
\hline 
(50, 500) & 0.057 & 0.765 & 4.16 & 15.44 & 1 & 0.9164 & 1 & 0.814\tabularnewline
\hline 
(100, 100) & 0.051 & 0.091 & 3.87 & 5.08 & 1 & 1 & 1 & 1\tabularnewline
\hline 
(100, 500) & 0.052 & 0.219 & 3.89 & 6.41 & 1 & 0.928 & 1 & 0.898\tabularnewline
\hline 
\end{tabular}
\begin{minipage}{13.3cm}
\vspace{.1cm}	
The number of replications is 10,000. The empirical size is calculated at $\beta_0=(1,0.5)$ and the empirical power is calculated at $\beta_1=(1,1.6).$
\end{minipage}
\end{table}

The simulation results indicate that the proposed testing procedure exhibits satisfactory size control and closely matches the nominal significance level. In contrast, the POLS-based test suffers from severe size distortion. To address this discrepancy and facilitate a valid comparison, we evaluate the power of both procedures using size-corrected critical values. The results demonstrate that the proposed test consistently yields superior power compared to the POLS-based test. Additionally, the performance of the proposed method relative to POLS improves as the number of clusters increases.

\subsubsection{Testing Parameter Constancy}

We are interested in testing $H_{0}:\Delta=0$ against $H_{1}:\Delta\neq 0$ under model \eqref{eq: model2 for u_l}. We generate 10000 replications, where each replication yields a new draw of data from the dgp, and that leads to rejection or non-rejection of $H_{0}$. In each replication, there are $G$ clusters, with $N_{g}$ individuals in each cluster. In addition, we assume a superblock structure such that there are $D$ superblocks, and in the $l$-th  superblock, we have $P_l$ clusters. For simplicity, we assume a balanced configuration with $P_l = P$, for all $l=1,2,\ldots, D$.  The simulated data are generated from model \eqref{eq: model2 for u_l}.

The cluster sizes $N_g$ are independently drawn from the discrete uniform distribution on $\{25,\ldots,50\}$. The error terms $\epsilon_{g}$ are drawn from a multivariate normal distribution $N_{N_g}(0, \Omega_g)$, where $\Omega_{g}$ are constructed as in \eqref{eq: Omega_g = M_g M_g'} for each group $g$, to get the desired strong dependence structure in the data. In the simulation design, the regressors $X_{gi}$ are two-dimensional, consisting of an intercept and a slope. The regressor matrices $X_g$ are generated with $\mathbbm{1}_{N_g}$ as the first column and the second column consisting of $N_g$ independent draws from a normal distribution $N(\mu_g,\omega_g^{2})$, where $\mu_g \sim \mathrm{Unif}(10,100)$ and $\omega_g^{2} \sim \mathrm{Unif}(200,300)$. The superblock-level random coefficient deviations $\{u_l\}$ are generated from uniform and normal distributions over different parameter ranges, as reported in the table.

Keeping the regressors and the model parameters fixed, we replicate the entire data-generating process $R^{*}=10{,}000$ times. For each replication, we compute 
\[
T_{SB}
= \sum_{l=1}^{D}\left(\Tilde{\beta}_{l}-\hat{\bar{\beta}}\right)^{\prime}\Tilde{V}_{l}^{-1}\left(\Tilde{\beta}_{l}-\hat{\bar{\beta}}\right),
\]
where
\[
\Tilde{\beta}_{l}=\frac{1}{P_{l}}\sum_{g\in \mathcal{S}_{l}}\hat{\beta}_{g}, \qquad \hat{\bar{\beta}} = \frac{1}{G}\sum_{l=1}^D P_l\Tilde{\beta}_l,
\] and 
\[
    \Tilde{V}_{l}=\frac{1}{P_{l}^{2}}\sum_{g\in \mathcal{S}_{l}}\left(X_{g}^{\prime}X_{g}\right)^{-1}X_{g}^{\prime}e_{g}^{*}e_{g}^{*\prime}X_{g}\left(X_{g}^{\prime}X_{g}\right)^{-1}, 
\]
  with $e_{g}^{*}=Y_{g}-X_{g}\Tilde{\beta}_{l}.$ The corresponding test statistic is
\[
Z = \frac{T_{SB}-kD}{\sqrt{2kD}},
\]
which is evaluated under both the null hypothesis $H_{0}$ and the alternative hypothesis $H_{1}$.

The covariance matrices $\{\Omega_g\}_{g=1}^{G}$ are held fixed across replications. Finite-sample performance is evaluated in terms of empirical size and power at the $5\%$ nominal significance level, using the standard normal critical value $z_{0.975}$.

\begin{table}[H]
    \centering
    \caption{Empirical size and power for tests of parameter constancy under strong dependence}
\begin{tabular}{|c|c|c|c|c|}
\hline 
\multirow{2}{*}{$(P,\,D)$} & \multirow{2}{*}{Size} & \multicolumn{3}{c|}{Power}\tabularnewline
\cline{3-5} \cline{4-5} \cline{5-5} 
 &  & $u_{l}\sim$ Unif $[-0.1,0.1]$ & $u_{l}\sim$ Unif $[-0.2,0.2]$ & $u_{l}\sim N(0,0.01)$\tabularnewline
\hline 
\hline 
(25, 25) & 0.152 & 0.441 & 0.916 & 0.9278\tabularnewline
\hline 
(25, 50) & 0.239 & 0.7193 & 0.9988 & 0.9897\tabularnewline
\hline 
(25, 100) & 0.396 & 0.892 & 1 & 0.9898\tabularnewline
\hline 
(50, 25) & 0.075 & 0.6127 & 0.996 & 0.965\tabularnewline
\hline 
(100, 25) & 0.051 & 0.8376 & 1 & 1\tabularnewline
\hline 
(100, 50) & 0.056 & 0.939 & 1 & 1\tabularnewline
\hline 
\end{tabular}
\begin{minipage}{12.8cm}
\vspace{.1cm}	
The number of replications is 10,000. The empirical size is calculated at $\beta_0=(1,2)$.
\end{minipage}
\end{table}

The simulation results indicate that, as $G$ and $P$ increase, the empirical size of the test approaches the nominal level, while the empirical power converges to one. Moreover, the simulation evidence reflects the importance of the condition $D/P \to 0$, which appears necessary for the desirable finite-sample performance of the proposed procedure.

\subsection{Empirical Illustration}

The following empirical exercise uses household-level data from the Household Consumption Expenditure Survey (HCES) of India for the year 2022--23. The survey’s sampling design is mapped to our superblock framework, with households as individual observations, First Stage Units (FSUs) as clusters, and States as superblocks. For the empirical analysis, Union Territories are excluded due to the small number of clusters they contain. In addition, since each FSU contains at most 18 households, we remove clusters with fewer than five households to ensure meaningful within-cluster variation. After these exclusions, the final sample consists of 239,744 households drawn from 14,491 FSUs across 29 States.

We focus on three key variables in the analysis. The variable \textit{food} denotes household expenditure on food items over the reference period, \textit{total} represents total household consumption expenditure, and \textit{hhsize} measures household size, defined as the total number of members residing in the household.

We consider five alternative Engel curve specifications for modeling household food expenditure behavior. Let $w_i$ denote the ratio of household food expenditure to total household expenditure, $v_i$ denote household food expenditure, and $M_i$ denote total household consumption expenditure. The models, ordered as in the empirical analysis, are given below.

\textit{ Model 1 }(Linear food-share model).
\[
w_i = a_i + b_i M_i + \epsilon_i,
\]

\textit{Model 2} (Linear model).
\[
v_i = a_i + b_i M_i + \epsilon_i,
\]

\textit{Model 3} (Double-log or Cobb--Douglas model).
\[
\ln v_i = \ln a_i + b_i \ln M_i + \epsilon_i.
\]

 \textit{Model 4} (Semi-log model).
\[
\ln v_i = a_i + b_i M_i + \epsilon_i.
\]

\textit{Model 5} (Working--Leser model).
\[
w_i = a_i + b_i \ln M_i + \epsilon_i.
\]

Each of the five models is estimated for the aggregated all-India sample, both with and without household size (\textit{hhsize}) included as an additional control variable.

The following table reports the estimates obtained using the proposed method and the Pooled Ordinary Least Squares (POLS) estimator, along with the value of the test statistic based on the proposed estimator:
\[
Z = \frac{T_{SB} - kD}{\sqrt{2kD}},
\]
where $T_{SB}$ is the superblock-based test statistic defined in the previous subsection.

\begin{table}[H]
\centering
\caption{Table showing the household consumption expenditure for testing state-level
variation}
    \begin{adjustbox}{width=\textwidth}
    \begin{tabular}{|c|c|c|c|c|c|c|}
\hline 
\multirow{2}{*}{} & \multicolumn{3}{c|}{Without hhsize} & \multicolumn{3}{c|}{With hhsize}\tabularnewline
\cline{2-7} \cline{3-7} \cline{4-7} \cline{5-7} \cline{6-7} \cline{7-7} 
 & $\hat{\bar{\beta}}$ & $\hat{\beta}_{POLS}$ & $Z$ & $\hat{\bar{\beta}}$ & $\hat{\beta}_{POLS}$ & $Z$\tabularnewline
\hline 
Model 1 & $\begin{pmatrix}\begin{array}{c}
4.727393e-01\\
-2.008456e-06
\end{array}\end{pmatrix}$ & $\begin{pmatrix}\begin{array}{c}
4.926134e-01\\
-2.499980e-06
\end{array}\end{pmatrix}$ & $463.2962$ & $\begin{pmatrix}\begin{array}{c}
4.542658e-01\\
-5.726445e-06\\
2.017493e-02
\end{array}\end{pmatrix}$ & $\begin{pmatrix}\begin{array}{c}
4.370622e-01\\
-3.009854e-06\\
1.539651e-02
\end{array}\end{pmatrix}$ & $348.6733$\tabularnewline
\hline 
Model 2 & $\begin{pmatrix}\begin{array}{c}
1119.4529217\\
0.3744132
\end{array}\end{pmatrix}$ & $\begin{pmatrix}\begin{array}{c}
3832.1177046\\
0.2303107
\end{array}\end{pmatrix}$ & $453.2751$ & $\begin{pmatrix}\begin{array}{c}
529.7656275\\
0.2946063\\
551.367824
\end{array}\end{pmatrix}$ & $\begin{pmatrix}\begin{array}{c}
1254.8449853\\
0.2066553\\
714.3146155
\end{array}\end{pmatrix}$ & $290.6762$\tabularnewline
\hline 
Model 3 & $\begin{pmatrix}\begin{array}{c}
0.2202282\\
0.8882348
\end{array}\end{pmatrix}$ & $\begin{pmatrix}\begin{array}{c}
1.2618063\\
0.7822229
\end{array}\end{pmatrix}$ & $509.3487$ & $\begin{pmatrix}\begin{array}{c}
1.7128509\\
0.7123225\\
0.0594726
\end{array}\end{pmatrix}$ & $\begin{pmatrix}\begin{array}{c}
1.79900738\\
0.70199741\\
0.05776171
\end{array}\end{pmatrix}$ & $379.0587$\tabularnewline
\hline 
Model 4 & $\begin{pmatrix}\begin{array}{c}
8.066576e+00\\
4.906426e-05
\end{array}\end{pmatrix}$ & $\begin{pmatrix}\begin{array}{c}
8.497149e+00\\
2.022707e-05
\end{array}\end{pmatrix}$ & $821.6923$ & $\begin{pmatrix}\begin{array}{c}
7.968004e+00\\
3.739323e-05\\
7.544114e-02
\end{array}\end{pmatrix}$ & $\begin{pmatrix}\begin{array}{c}
8.13683e+00\\
1.69199e-05\\
9.98657e-02
\end{array}\end{pmatrix}$ & $594.0962$\tabularnewline
\hline 
Model 5 & $\begin{pmatrix}\begin{array}{c}
0.80060181\\
-0.03727856
\end{array}\end{pmatrix}$ & $\begin{pmatrix}\begin{array}{c}
1.23852479\\
-0.08156762
\end{array}\end{pmatrix}$ & $486.5201$ & $\begin{pmatrix}\begin{array}{c}
1.35120594\\
-0.10225263\\
0.02095447
\end{array}\end{pmatrix}$ & $\begin{pmatrix}\begin{array}{c}
1.43733095\\
-0.11125730\\
0.02137633
\end{array}\end{pmatrix}$ & $329.247$\tabularnewline
\hline 
\end{tabular}
    \end{adjustbox}
\begin{minipage}{16.5cm}
    Data are taken from the Household Consumption Expenditure Survey (HCES) of India for the year 2022-23. 
\end{minipage}
\end{table}

\vspace{-.2 cm}

The results reported in the table indicate that, for Model~1, both the proposed estimator and the POLS estimator yield a negative slope coefficient, which is economically plausible. Moreover, across all five model specifications, the null hypothesis $H_{0}:\Delta = 0$ is rejected, providing strong evidence of significant parameter heterogeneity across States. The estimates further suggest that including household size (\textit{hhsize}) has a negligible impact on the results.

\section{Conclusion}\label{section:conclusion}

The present article suggests a novel estimator for regression coefficients in clustered data that accommodates cluster dependence. We demonstrated that the standard pooled ordinary least squares (POLS) method can yield inconsistent results in various practical situations. In contrast, our proposed estimator has been shown to maintain consistency under these conditions.
We analyzed the asymptotic properties of our estimator, applicable to both finite and infinite cluster sizes, and then explored a typical classical random coefficient model. From this, we derived asymptotic results for average (common) parameters and formulated a Wald-type test statistic for addressing general linear hypothesis testing. Additionally, we created a novel test for parameter stability at a higher (superblock) level, assuming that parameters are stable across clusters within the superblock. One real life example has been considered to demonstrate the vast scope of possible applicability of the proposed test for parameter stability.   

Future research can extend the proposed methodology in several ways, such as incorporating two-way clustering, multilevel modeling, and permitting cluster-specific non-random means. These extensions are currently being investigated.

\newpage
\section{Appendix}

For the ease of reading the proofs, we mention briefly the meaning of various types of dependence discussed in Remark \ref{rem:explanation of strong, semi-strong, weak dependence}. Strong dependence refers to a scenario in which $O(G)$ clusters exhibit strong dependency, while other clusters may be semi-strongly or weakly dependent. Semi-strong dependence means that $O(G)$ clusters are semi-strongly dependent, with no strongly dependent clusters present, although weakly dependent clusters may still exist. Weak dependence, on the other hand, describes a situation where all clusters are weakly dependent, with no strong or semi-strongly dependent clusters involved.

\vspace{0.2cm}

{\bf Proof of Lemma \ref{lemma:Exact order summability}}

By Definition \ref{def:exact order and order uniformly}, for the doubly indexed positive sequences $\left\{ z_{N_g}\right\}$ and $\left\{ a_{N_g}\right\},$ $z_{N_g}=O_e\left(a_{N_g}\right)$ uniformly in $g$ means that for some $0<c_1\leq c_2<\infty,$
\[ c_1 \leq \inf_{g,N_g}  \frac{z_{N_g}}{a_{N_g}}  \leq \sup_{g,N_g} \frac{z_{N_g}}{a_{N_g}}  \leq c_2, \]
that is, for all $g$, there exist positive constants $c_1$ and $c_2$ such that $c_1 \cdot a_{N_g} \leq z_{N_g} \leq c_2 \cdot a_{N_g}.$ For any $G\geq 1$, summing these inequalities from $g = 1$ to $G$ gives
\[
c_1\cdot \sum_{g=1}^G a_{N_g} \leq \sum_{g=1}^G z_{N_g} \leq c_2\cdot\sum_{g=1}^G a_{N_g}.
\]
This simplifies to $c_1 \leq \frac{\sum_{g=1}^G Z_{N_g}}{\sum_{g=1}^G a_{N_g}} \leq c_2,$ implying the fact that 
\[
c_1 \leq  \frac{\sum_{g=1}^G z_{N_g}}{\sum_{g=1}^G a_{N_g}}  \leq c_2, \quad \text{ for all } G.
\]
Therefore, the lemma follows directly from Definition \ref{def:exact order and order uniformly}.

\hfill
\qed

For the sake of completeness, we provide the following result.

\begin{result} \label{result:trace inequality}
We provide some of the known results used in the proofs, which can be found in the standard literature. Here, by a positive semi-definite matrix $A\in \mathbb{R}^{n \times n}$, we mean that $A$ is real symmetric and the quadratic form $z^\prime A z \geq 0,$ for all $z \in \mathbb{R}^n.$
    \begin{enumerate}
        \item $\Vert AB \Vert\leq \Vert A \Vert\; \Vert B\Vert,$ for any matrices $A$ and $B.$
        \item $\operatorname{tr}(AB)\leq \operatorname{tr}(A)\;\operatorname{tr}(B),$ for positive semi-definite matrices $A$ and $B.$
        \item $\operatorname{tr}(A)\;\lambda_{\min}(B)\leq \operatorname{tr}(AB)\leq \operatorname{tr}(A)\;\lambda_{\max}(B),$ for positive semi-definite matrices $A$ and $B.$
        \item $z^\prime ABA^\prime z\leq \left(z^\prime AA^\prime z\right) \; \operatorname{tr}(B),$ for any $z \in \mathbbm{R}^k,$ any matrix $A$ and positive semi-definite matrix $B.$
        \item $\lambda_{\max}(ABA^\prime)\leq \lambda_{\max}(AA^\prime)\;\lambda_{\max}(B),$ for any matrix $A$ and positive semi-definite matrix $B.$
        \item $\lambda_{\min}(ABA^\prime)\geq \lambda_{\min}(AA^\prime)\;\lambda_{\min}(B),$ for any matrix $A$ and positive semi-definite matrix $B.$
        \item $\lambda_{\max}(A+B)\leq \lambda_{\max}(A)+\lambda_{\max}(B),$ for positive semi-definite matrices $A$ and $B.$
        \item For two positive semi-definite matrices $A$ and $B,$ if $A\geq B,$ then $\left\Vert A\right\Vert \geq \left\Vert B\right\Vert.$ 
    \end{enumerate}
\end{result}

\textbf{Proof of Theorem \ref{thm:consistency}}

To prove the consistency of $\hat{\bar{\beta}},$ first, note that 
\[
\hat{\bar{\beta}}-\beta=\frac{1}{G}\sum_{g=1}^{G}\left(\frac{X_{g}^{\prime}X_{g}}{N_{g}}\right)^{-1}\frac{X_{g}^{\prime}\epsilon_{g}}{N_{g}}.
\]

Now, using Assumption \ref{assump:exogeneity}, note that
\[
\mathbb{E}\left[\hat{\bar{\beta}}-\beta\right]=\mathbb{E}\left[\frac{1}{G}\sum_{g=1}^{G}\left(\frac{X_{g}^{\prime}X_{g}}{N_{g}}\right)^{-1}\mathbb{E}\left[\frac{X_{g}^{\prime}\epsilon_{g}}{N_{g}}\bigg|X\right]\right]=0.
\]

Using the independence between clusters in the third line, and Result \ref{result:trace inequality} in the sixth line,
\begin{align*}
\mathbb{E}\left\Vert \hat{\bar{\beta}}-\beta\right\Vert ^{2} & =  \mathbb{E}\left\Vert \frac{1}{G}\sum_{g=1}^{G}\left(\frac{X_{g}^{\prime}X_{g}}{N_{g}}\right)^{-1}\frac{X_{g}^{\prime}\epsilon_{g}}{N_{g}}\right\Vert ^{2}\\
 & = \mathbb{E}\mathbb{E}\left[\frac{1}{G^{2}}\sum_{g=1}^{G}\frac{\epsilon_{g}^{\prime}X_{g}}{N_{g}}\left(\frac{X_{g}^{\prime}X_{g}}{N_{g}}\right)^{-1}\sum_{g_{1}=1}^{G}\left(\frac{X_{g_{1}}^{\prime}X_{g_{1}}}{N_{g_{1}}}\right)^{-1}\frac{X_{g_{1}}^{\prime}\epsilon_{g_{1}}}{N_{g_{1}}}\Bigg|X\right]\\
& =\mathbb{E}\mathbb{E}\left[\frac{1}{G^{2}}\sum_{g=1}^{G}\frac{\epsilon_{g}^{\prime}X_{g}}{N_{g}}\left(\frac{X_{g}^{\prime}X_{g}}{N_{g}}\right)^{-2}\frac{X_{g}^{\prime}\epsilon_{g}}{N_{g}}\Bigg|X\right]\\
 & = \mathbb{E}\left[\frac{1}{G^{2}}\sum_{g=1}^{G}\operatorname{tr}\left(\frac{1}{N_{g}^{2}}X_{g}\left(\frac{X_{g}^{\prime}X_{g}}{N_{g}}\right)^{-2}X_{g}^{\prime}\mathbb{E}\left[\epsilon_{g}\epsilon_{g}^{\prime}\big|X\right]\right)\right]\\
 & =\mathbb{E}\left[\frac{1}{G^{2}}\sum_{g=1}^{G}\operatorname{tr}\left(\frac{1}{N_{g}^{2}}X_{g}\left(\frac{X_{g}^{\prime}X_{g}}{N_{g}}\right)^{-2}X_{g}^{\prime}\Omega_{g}\right)\right]\\
 & \leq\mathbb{E}\left[\frac{1}{G^{2}}\sum_{g=1}^{G}\operatorname{tr}\left(\frac{1}{N_{g}}X_{g}\left(\frac{X_{g}^{\prime}X_{g}}{N_{g}}\right)^{-2}X_{g}^{\prime}\right)\frac{\lambda_{max}\left(\Omega_{g}\right)}{N_{g}}\right]\\
 & =\mathbb{E}\left[\frac{1}{G^{2}}\sum_{g=1}^{G}\operatorname{tr}\left(\left(\frac{X_{g}^{\prime}X_{g}}{N_{g}}\right)^{-1}\right)\frac{\lambda_{max}\left(\Omega_{g}\right)}{N_{g}}\right]\\
 & =\frac{1}{G^{2}}\sum_{g=1}^{G}\mathbb{E}\left[\operatorname{tr}\left(\left(\frac{X_{g}^{\prime}X_{g}}{N_{g}}\right)^{-1}\right)\right]\frac{\lambda_{max}\left(\Omega_{g}\right)}{N_{g}}.
\end{align*}
Now, using the facts that $\mathbb{E}\left[\operatorname{tr}\left(\left(\frac{X_{g}^{\prime}X_{g}}{N_{g}}\right)^{-1}\right)\right]=O(1)$
uniformly in $g$ by Assumption \ref{assump:Xg'Xg/Ng inverse is bounded}, $\frac{\lambda_{max}\left(\Omega_{g}\right)}{N_{g}}=O(1)$
uniformly in $g$ for any kind of dependence by Assumption \ref{assump:error independence},
and Lemma \ref{lemma:Exact order summability}, we have 
\[
\mathbb{E}\left\Vert \hat{\bar{\beta}}-\beta\right\Vert ^{2}=O\left(\frac{1}{G}\right)\to0,\text{ as }G\to\infty.
\]

Hence, $\hat{\bar{\beta}}$ is consistent for $\beta,$ for any kind of dependence of $\epsilon_{g},$ and for any cluster sizes.

\hfill 
\qed

\textbf{Proof of Theorem \ref{thm:normal}}

To show the asymptotic normality of $\sqrt{G}V_{G}^{-1/2}\left(\hat{\bar{\beta}}-\beta\right),$
first note that 
\[
\sqrt{G}V_{G}^{-1/2}\left(\hat{\bar{\beta}}-\beta\right)=\frac{1}{\sqrt{G}}V_{G}^{-1/2}\sum_{g=1}^{G}\left(X_{g}^{\prime}X_{g}\right)^{-1}X_{g}^{\prime}\epsilon_{g}.
\]
 For the asymptotic normality, it suffices to prove the Liapounov
condition for CLT. To prove the Liapounov condition, it is required
to show, for any fixed $z\in\mathbb{R}^{k},$ and for some $p>1,$
\begin{align*}
\frac{\sum_{g=1}^{G}\mathbb{E}\left|z^{\prime}\frac{1}{\sqrt{G}}V_{G}^{-1/2}\left(X_{g}^{\prime}X_{g}\right)^{-1}X_{g}^{\prime}\epsilon_{g}\right|^{2p}}{\left(\sum_{g=1}^{G}\mathbb{E}\left[z^{\prime}\frac{1}{\sqrt{G}}V_{G}^{-1/2}\left(X_{g}^{\prime}X_{g}\right)^{-1}X_{g}^{\prime}\epsilon_{g}\right]^{2}\right)^{p}} & \to0,\text{ as }G\to\infty\\
\implies\frac{\sum_{g=1}^{G}\mathbb{E}\left|z^{\prime}V_{G}^{-1/2}\left(X_{g}^{\prime}X_{g}\right)^{-1}X_{g}^{\prime}\epsilon_{g}\right|^{2p}}{\left(\sum_{g=1}^{G}\mathbb{E}\left[z^{\prime}V_{G}^{-1/2}\left(X_{g}^{\prime}X_{g}\right)^{-1}X_{g}^{\prime}\epsilon_{g}\right]^{2}\right)^{p}} & \to0,\text{ as }G\to\infty.
\end{align*}

Using the trace inequality in the second and third line, observe that
\begin{eqnarray*}
\sum_{g=1}^{G}\mathbb{E}\left|z^{\prime}V_{G}^{-1/2}\left(X_{g}^{\prime}X_{g}\right)^{-1}X_{g}^{\prime}\epsilon_{g}\right|^{2p} & = & \sum_{g=1}^{G}\mathbb{E}\mathbb{E}\left[\left(z^{\prime}V_{G}^{-1/2}\left(X_{g}^{\prime}X_{g}\right)^{-1}X_{g}^{\prime}\epsilon_{g}\epsilon_{g}^{\prime}X_{g}\left(X_{g}^{\prime}X_{g}\right)^{-1}V_{G}^{-1/2}z\right)^{p}\bigg|X\right]\\
 & \leq & \sum_{g=1}^{G}\mathbb{E}\mathbb{E}\left[\left(z^{\prime}V_{G}^{-1/2}\left(X_{g}^{\prime}X_{g}\right)^{-1}V_{G}^{-1/2}z\right)^{p}\left(\operatorname{tr}\left(\epsilon_{g}\epsilon_{g}^{\prime}\right)\right)^{p}\bigg|X\right]\\
 & \leq & \sum_{g=1}^{G}\mathbb{E}\left[\left(z^{\prime}V_{G}^{-1}z\right)^{p}\left(\operatorname{tr}\left(\left(X_{g}^{\prime}X_{g}\right)^{-1}\right)\right)^{p}\mathbb{E}\left[\left(\epsilon_{g}^{\prime}\epsilon_{g}\right)^{p}\bigg|X\right]\right].
\end{eqnarray*}

Now, using H\"{o}lder's inequality and Assumption \ref{assump:existence of moment},
the $p^{th}$ absolute moment of $\epsilon_{g}^{\prime}\epsilon_{g}$
conditional on $X,$ is 
\begin{equation} \label{eq: E(epsilon_g'epsilon_g)^p order}
    \mathbb{E}\left[\left(\epsilon_{g}^{\prime}\epsilon_{g}\right)^{p}\bigg|X\right]\leq \mathbb{E}\left[\left(\sum_{i=1}^{N_{g}}1^{\frac{p}{p-1}}\right)^{p-1}\sum_{i=1}^{N_{g}}\left|\epsilon_{gi}\right|^{2p}\Bigg|X\right]=O\left(N_{g}^{p}\right)\;\text{ uniformly in }g. 
\end{equation}

Then, asssuming $\mathbb{E}\left[\left(\epsilon_{g}^{\prime}\epsilon_{g}\right)^{p}\bigg|X\right]\leq c_{0}N_{g}^{p},$
for some $0<c_{0}<\infty,$ we have
\[
\sum_{g=1}^{G}\mathbb{E}\left|z^{\prime}V_{G}^{-1/2}\left(X_{g}^{\prime}X_{g}\right)^{-1}X_{g}^{\prime}\epsilon_{g}\right|^{2p} \leq c_{0}\left(z^{\prime}V_{G}^{-1}z\right)^{p}\sum_{g=1}^{G}\mathbb{E}\left[\left(\operatorname{tr}\left(\left(\frac{X_{g}^{\prime}X_{g}}{N_{g}}\right)^{-1}\right)\right)^{p}\right].
\]
Note that by Assumption \ref{assump:lambda max V_G=lambda min V_G}, we obtain
\[
\frac{z^{\prime}V_{G}^{-1}z}{z^{\prime}z}\leq\lambda_{\max}\left(V_{G}^{-1}\right) =\frac{1}{\lambda_{\min}\left(V_{G}\right)} = 
\begin{cases}
O\left(1\right), & \text{for strong dependence}\\
O\left(\frac{G}{\sum_{g=1}^{G}\frac{h(N_{g})}{N_{g}}}\right), & \text{for semi-strong dependence}\\
O\left(\frac{G}{\sum_{g=1}^{G}\frac{1}{N_{g}}}\right), & \text{for weak dependence}
\end{cases}
\]
Also, for the denominator, we have
\begin{eqnarray*}
 &  & \left(\sum_{g=1}^{G}\mathbb{E}\left[z^{\prime}V_{G}^{-1/2}\left(X_{g}^{\prime}X_{g}\right)^{-1}X_{g}^{\prime}\epsilon_{g}\right]^{2}\right)^{p}\\
 & = & \left(\sum_{g=1}^{G}\mathbb{E}\mathbb{E}\left[\left(z^{\prime}V_{G}^{-1/2}\left(X_{g}^{\prime}X_{g}\right)^{-1}X_{g}^{\prime}\epsilon_{g}\epsilon_{g}^{\prime}X_{g}\left(X_{g}^{\prime}X_{g}\right)^{-1}V_{G}^{-1/2}z\right)\Big|X\right]\right)^{p}\\
 & = & \left(z^{\prime}V_{G}^{-1/2}\mathbb{E}\left[\sum_{g=1}^{G}\left(X_{g}^{\prime}X_{g}\right)^{-1}X_{g}^{\prime}\Omega_{g}X_{g}\left(X_{g}^{\prime}X_{g}\right)^{-1}\right]V_{G}^{-1/2}z\right)^{p}\\
 & = & \left(z^{\prime}V_{G}^{-1/2}G\cdot V_{G}V_{G}^{-1/2}z\right)^{p}\\
& = & G^p\left(z^{\prime}z\right)^{p}.
\end{eqnarray*}

Therefore, using Assumption \ref{assump:Xg'Xg/Ng inverse is bounded} and Lemma \ref{lemma:Exact order summability} in the second line, the Liapounov condition reduces to 
\begin{eqnarray*}
\frac{\sum_{g=1}^{G}\mathbb{E}\left|z^{\prime}V_{G}^{-1/2}\left(X_{g}^{\prime}X_{g}\right)^{-1}X_{g}^{\prime}\epsilon_{g}\right|^{2p}}{\left(\sum_{g=1}^{G}\mathbb{E}\left[z^{\prime}V_{G}^{-1/2}\left(X_{g}^{\prime}X_{g}\right)^{-1}X_{g}^{\prime}\epsilon_{g}\right]^{2}\right)^{p}} & \leq & \frac{c_{0}\left(z^{\prime}V_{G}^{-1}z\right)^{p}\sum_{g=1}^{G}\mathbb{E}\left[\left(\operatorname{tr}\left(\left(\frac{X_{g}^{\prime}X_{g}}{N_{g}}\right)^{-1}\right)\right)^{p}\right]}{G^p\left(z^{\prime}z\right)^{p}}\\
 & \leq & \left(\frac{z^{\prime}V_{G}^{-1}z}{z^{\prime}z}\right)^p
 \frac{c_{0}MG}{G^p} \\
 & = & \begin{cases}
O\left(G^{1-p}\right), & \text{for strong dependence}\\
O\left(\frac{G}{\left[\sum_{g=1}^{G}\frac{h\left(N_{g}\right)}{N_{g}}\right]^{p}}\right), & \text{for semi-strong dependence}\\
O\left(\frac{G}{\left[\sum_{g=1}^{G}\frac{1}{N_{g}}\right]^{p}}\right), & \text{for weak dependence}
\end{cases}
\end{eqnarray*}

Hence, for strong dependence, the asymptotic normality holds trivially. For semi-strong dependence, the asymptotic normality holds if for some $p>1,$ $O\left(\frac{G}{\left[\sum_{g=1}^{G}\frac{h\left(N_{g}\right)}{N_{g}}\right]^{p}}\right)\to0,$
as $G\to\infty.$ For weak dependence, the asymptotic normality holds if for some $p>1,$ $O\left(\frac{G}{\left[\sum_{g=1}^{G}\frac{1}{N_{g}}\right]^{p}}\right)\to0,$
as $G\to\infty.$

\hfill 
\qed

\textbf{Proof of Lemma \ref{lemma: sum h^2/N^2 to 0}} 

Note that
\[\frac{\sum_{g=1}^{G}\frac{h^{2}(N_{g})}{N_{g}^{2}}}{\left(\sum_{g=1}^{G}\frac{h(N_{g})}{N_{g}}\right)^{2}}=\frac{1}{1+\frac{\sum_{g_{1}\neq g_{2}}\frac{h(N_{g_{1}})h(N_{g_{2}})}{N_{g_{1}}N_{g_{2}}}}{\sum_{g=1}^{G}\frac{h^{2}(N_{g})}{N_{g}^{2}}}}=\frac{1}{1+a_{G}}, \text{ say. }
\] 

(i) All the clusters are nearly balanced such that $N_{g}\simeq N,$
for all $g.$ Then, 
\[
a_{G}=\frac{G(G-1)\frac{h^{2}(N)}{N^{2}}}{G\frac{h^{2}(N)}{N^{2}}}=G-1=O(G).
\]

(ii) Suppose that $L$ clusters are extremely large as
compared to the other clusters such that $N_{1},\ldots,N_{L}\to\infty$
and $N_{g}\simeq N\to\infty,$ for $g=L+1,\ldots,G,$ but $\frac{N}{N_{j}}=o(1),$
for $j=1,2,\ldots,L.$ Let $\mathcal{L}$ be the set of extremely
large clusters and without loss of generality, assume that $\frac{h(N_{1})}{N_{1}}<\frac{h(N_{2})}{N_{2}}<\cdots<\frac{h(N_{L})}{N_{L}}.$
Then,
\begin{eqnarray*}
a_{G} & = & \frac{\sum_{g_{1},g_{2}\in\mathcal{L}}\frac{h(N_{g_{1}})h(N_{g_{2}})}{N_{g_{1}}N_{g_{2}}}+\sum_{g_{1}\in\mathcal{L},g_{2}\in\mathcal{L}^{c}}\frac{h(N_{g_{1}})h(N_{g_{2}})}{N_{g_{1}}N_{g_{2}}}+(G-L)(G-L-1)\frac{h^{2}(N)}{N^{2}}}{\sum_{g\in\mathcal{L}}\frac{h^{2}(N_{g})}{N_{g}^{2}}+(G-L)\frac{h^{2}(N)}{N^{2}}}\\
 & \geq & \frac{L(L-1)\frac{h(N_{1})h(N_{2})}{N_{1}N_{2}}+2L(G-L)\frac{h(N_{1})h(N)}{N_{1}N}+(G-L)(G-L-1)\frac{h^{2}(N)}{N^{2}}}{L\frac{h^{2}(N_{L})}{N_{L}^{2}}+(G-L)\frac{h^{2}(N)}{N^{2}}}\\
 & = & \frac{\frac{L(L-1)}{(G-L)}\frac{h(N_{1})h(N_{2})N^{2}}{N_{1}N_{2}h^{2}(N)}+2L\frac{h(N_{1})N}{N_{1}h(N)}+(G-L-1)}{\frac{L}{G-L}\frac{h^{2}(N_{L})N^{2}}{N_{L}^{2}h^{2}(N)}+1},\text{ dividing by }(G-L)\frac{h^{2}(N)}{N^{2}}.
\end{eqnarray*}

Clearly, the numerator is bigger than $G-L-1$ and the denominator
is $o\left(\frac{L}{G-L}\right)+1,$ due to the fact that $\frac{h^{2}(N_{L})N^{2}}{N_{L}^{2}h^{2}(N)}\to0.$
Hence, if $L$ is finite, then $a_{G}=O(G).$ Also, note that $L$
may be $O(G)$ additionally with $G-L=O(G),$ for example, $L=G/2.$
Then also $a_{G}=O(G).$ So, for case (i) and case (ii), $\frac{\sum_{g=1}^{G}\frac{h^{2}(N_{g})}{N_{g}^{2}}}{\left(\sum_{g=1}^{G}\frac{h(N_{g})}{N_{g}}\right)^{2}}\to 0,$ as $G\to\infty.$

Putting $h(N_{g})=1,$ we can have
$\frac{\sum_{g=1}^{G}\frac{1}{N_{g}^{2}}}{\left(\sum_{g=1}^{G}\frac{1}{N_{g}}\right)^{2}}\to0,$ as $G\to\infty$ under the similar cases described above.
This concludes the proof of Lemma \ref{lemma: sum h^2/N^2 to 0}.

\hfill
\qed

\textbf{Proof of Theorem \ref{thm: consistency of V hat}}

To show that $\hat{V}_{G}$ is consistent for $V_{G},$
we have to show that $\frac{1}{\left\Vert V_{G}\right\Vert }\left\Vert \hat{V}_{G}-V_{G}\right\Vert =o_{P}(1).$
To show this, note that $e_{g}=Y_{g}-X_{g}\hat{\bar{\beta}}=\epsilon_{g}-X_{g}\left(\hat{\bar{\beta}}-\beta\right),$
which implies $e_{g}e_{g}^{\prime}=\epsilon_{g}\epsilon_{g}^{\prime}+X_{g}\left(\hat{\bar{\beta}}-\beta\right)\left(\hat{\bar{\beta}}-\beta\right)^{\prime}X_{g}^{\prime}-\epsilon_{g}\left(\hat{\bar{\beta}}-\beta\right)^{\prime}X_{g}^{\prime}-X_{g}\left(\hat{\bar{\beta}}-\beta\right)\epsilon_{g}^{\prime}.$
Now, 
\begin{eqnarray*}
\hat{V}_{G}-V_{G} & = & \frac{1}{G}\sum_{g=1}^{G}\left(\left(X_{g}^{\prime}X_{g}\right)^{-1}X_{g}^{\prime}e_{g}e_{g}^{\prime}X_{g}\left(X_{g}^{\prime}X_{g}\right)^{-1}-\mathbb{E}\left[\left(X_{g}^{\prime}X_{g}\right)^{-1}X_{g}^{\prime}\Omega_{g}X_{g}\left(X_{g}^{\prime}X_{g}\right)^{-1}\right]\right)\\
 & = & T_{1}+T_{2}+T_{3}+T_{4},\text{ say,}
\end{eqnarray*}

where $T_{1}=\frac{1}{G}\sum_{g=1}^{G}\left(\left(X_{g}^{\prime}X_{g}\right)^{-1}X_{g}^{\prime}\epsilon_{g}\epsilon_{g}^{\prime}X_{g}\left(X_{g}^{\prime}X_{g}\right)^{-1}-\mathbb{E}\left[\left(X_{g}^{\prime}X_{g}\right)^{-1}X_{g}^{\prime}\Omega_{g}X_{g}\left(X_{g}^{\prime}X_{g}\right)^{-1}\right]\right),$ 

$T_{2}=\frac{1}{G}\sum_{g=1}^{G}\left(X_{g}^{\prime}X_{g}\right)^{-1}X_{g}^{\prime}X_{g}\left(\hat{\bar{\beta}}-\beta\right)\left(\hat{\bar{\beta}}-\beta\right)^{\prime}X_{g}^{\prime}X_{g}\left(X_{g}^{\prime}X_{g}\right)^{-1},$

$T_{3}=-\frac{1}{G}\sum_{g=1}^{G}\left(X_{g}^{\prime}X_{g}\right)^{-1}X_{g}^{\prime}\epsilon_{g}\left(\hat{\bar{\beta}}-\beta\right)^{\prime}X_{g}^{\prime}X_{g}\left(X_{g}^{\prime}X_{g}\right)^{-1}$
and

$T_{4}=-\frac{1}{G}\sum_{g=1}^{G}\left(X_{g}^{\prime}X_{g}\right)^{-1}X_{g}^{\prime}X_{g}\left(\hat{\bar{\beta}}-\beta\right)\epsilon_{g}^{\prime}X_{g}\left(X_{g}^{\prime}X_{g}\right)^{-1}.$

Using Assumption \ref{assump:error independence}, conditional on $X,$ observe that 
\begin{align*}
\mathbb{E}\left[T_{1}\right] & = \mathbb{E}\mathbb{E}\left[T_{1}\big|X\right]\\
 & = \frac{1}{G}\sum_{g=1}^{G}\mathbb{E}\left[\left(X_{g}^{\prime}X_{g}\right)^{-1}X_{g}^{\prime}\mathbb{E}\left[\epsilon_{g}\epsilon_{g}^{\prime}\big|X\right]X_{g}\left(X_{g}^{\prime}X_{g}\right)^{-1}-\mathbb{E}\left[\left(X_{g}^{\prime}X_{g}\right)^{-1}X_{g}^{\prime}\Omega_{g}X_{g}\left(X_{g}^{\prime}X_{g}\right)^{-1}\right]\right]\\
 & = 0,
\end{align*}
and using the independence between clusters in the second line, and the norm inequality in the third line, we have 
\begin{align*}
\mathbb{E}\left\Vert T_{1}\right\Vert ^{2} & =  \mathbb{E}\left\Vert \frac{1}{G}\sum_{g=1}^{G}\left(X_{g}^{\prime}X_{g}\right)^{-1}\left(X_{g}^{\prime}\epsilon_{g}\epsilon_{g}^{\prime}X_{g}-X_{g}^{\prime}\Omega_{g}X_{g}\right)\left(X_{g}^{\prime}X_{g}\right)^{-1}\right\Vert ^{2}\\
& =  \frac{1}{G^{2}}\sum_{g=1}^{G}\mathbb{E}\mathbb{}\left[\left\Vert \left(X_{g}^{\prime}X_{g}\right)^{-1}X_{g}^{\prime}\left(\epsilon_{g}\epsilon_{g}^{\prime}-\Omega_{g}\right)X_{g}\left(X_{g}^{\prime}X_{g}\right)^{-1}\right\Vert ^{2}\bigg|X\right]\\
& \leq \frac{1}{G^{2}}\sum_{g=1}^{G}\mathbb{E}\left[\left\Vert \left(X_{g}^{\prime}X_{g}\right)^{-1}\right\Vert ^{4}\mathbb{E}\left[\left\Vert X_{g}^{\prime}\left(\epsilon_{g}\epsilon_{g}^{\prime}-\Omega_{g}\right)X_{g}\right\Vert ^{2}\Big|X\right]\right]
\end{align*}

Using the fact that $\mathbb{E}\left\Vert X_{g}^{\prime}\epsilon_{g}\epsilon_{g}^{\prime}X_{g}-X_{g}^{\prime}\Omega_{g}X_{g}\right\Vert ^{2}=\begin{cases}
O\left(N_{g}^{4}\right), & \text{for strong dependence}\\
O\left(N_{g}^{2}h^{2}\left(N_{g}\right)\right), & \text{for semi-strong dependence}\\
O\left(N_{g}^{2}\right), & \text{for weak dependence}
\end{cases}$

uniformly in $g$ by Assumption \ref{assump:Order of second moment of Xg'Omega_g Xg}, and $\mathbb{E}\left[\left\Vert \left(X_{g}^{\prime}X_{g}\right)^{-1}\right\Vert ^{4}\right]=O\left(N_{g}^{-4}\right)$
uniformly in $g$ by Assumption \ref{assump:Xg'Xg/Ng inverse is bounded}, we have from Lemma \ref{lemma:Exact order summability}
\[
\mathbb{E}\left\Vert T_{1}\right\Vert ^{2}=\begin{cases}
\frac{1}{G^{2}}\sum_{g=1}^{G}O\left(N_{g}^{-4}\right)O\left(N_{g}^{4}\right)=O\left(\frac{1}{G}\right), & \text{for strong dependence}\\
\frac{1}{G^{2}}\sum_{g=1}^{G}O\left(N_{g}^{-4}\right)O\left(N_{g}^{2}h^{2}\left(N_{g}\right)\right)=O\left(\frac{1}{G^{2}}\sum_{g=1}^{G}\frac{h^{2}\left(N_{g}\right)}{N_{g}^{2}}\right), & \text{for semi-strong dependence}\\
\frac{1}{G^{2}}\sum_{g=1}^{G}O\left(N_{g}^{-4}\right)O\left(N_{g}^{2}\right)=O\left(\frac{1}{G^{2}}\sum_{g=1}^{G}\frac{1}{N_{g}^{2}}\right), & \text{for weak dependence}
\end{cases}
\]
which deduces that 
\[
\left\Vert T_{1}\right\Vert =\begin{cases}
O_{P}\left(\frac{1}{\sqrt{G}}\right), & \text{for strong dependence}\\
O_{P}\left(\frac{1}{G}\sqrt{\sum_{g=1}^{G}\frac{h^{2}\left(N_{g}\right)}{N_{g}^{2}}}\right), & \text{for semi-strong dependence}\\
O_{P}\left(\frac{1}{G}\sqrt{\sum_{g=1}^{G}\frac{1}{N_{g}^{2}}}\right), & \text{for weak dependence}
\end{cases}
\]

Now, for the second term, we have 
\[
\mathbb{E}\left\Vert T_{2}\right\Vert =\mathbb{E}\left\Vert \left(\hat{\bar{\beta}}-\beta\right)\left(\hat{\bar{\beta}}-\beta\right)^{\prime}\right\Vert =\frac{1}{G}\mathbb{E}\left[\operatorname{tr}\left(V_{G}\right)\right].
\]
 By Assumption \ref{assump:lambda max V_G=lambda min V_G}, 
\[
O_{e}\left(\lambda_{min}\left(V_{G}\right)\right)=O_{e}\left(\lambda_{max}\left(V_{G}\right)\right)=\begin{cases}
O_{e}\left(1\right), & \text{for strong dependence}\\
O_{e}\left(\frac{1}{G}\sum_{g=1}^{G}\frac{h\left(N_{g}\right)}{N_{g}}\right), & \text{for semi-strong dependence}\\
O_{e}\left(\frac{1}{G}\sum_{g=1}^{G}\frac{1}{N_{g}}\right), & \text{for weak dependence}
\end{cases}
\]
This yields 
\[
\left\Vert T_{2}\right\Vert =\begin{cases}
O_{P}\left(\frac{1}{G}\right), & \text{for strong dependence}\\
O_{P}\left(\frac{1}{G^{2}}\sum_{g=1}^{G}\frac{h\left(N_{g}\right)}{N_{g}}\right), & \text{for semi-strong dependence}\\
O_{P}\left(\frac{1}{G^{2}}\sum_{g=1}^{G}\frac{1}{N_{g}}\right), & \text{for weak dependence}
\end{cases}
\]

Note that 
\[
T_{3}=-\frac{1}{G}\sum_{g=1}^{G}\left(X_{g}^{\prime}X_{g}\right)^{-1}X_{g}^{\prime}\epsilon_{g}\left(\hat{\bar{\beta}}-\beta\right)^{\prime}=-\left(\hat{\bar{\beta}}-\beta\right)\left(\hat{\bar{\beta}}-\beta\right)^{\prime}=-T_{2},
\]
and $T_{4}=T_{3}.$ Hence, $\left\Vert T_{3}\right\Vert =\left\Vert T_{2}\right\Vert =\left\Vert T_{4}\right\Vert .$
By the triangular inequality for norms
\begin{eqnarray*}
\frac{\left\Vert \hat{V}_{G}-V_{G}\right\Vert }{\left\Vert V_{G}\right\Vert } & \leq & \frac{1}{\left\Vert V_{G}\right\Vert }\left[\left\Vert T_{1}\right\Vert +\left\Vert T_{2}\right\Vert +\left\Vert T_{3}\right\Vert +\left\Vert T_{4}\right\Vert \right].
\end{eqnarray*}

From Lemma \ref{lemma: sum h^2/N^2 to 0}, we have $\frac{\sum_{g=1}^{G}\frac{h^{2}(N_{g})}{N_{g}^{2}}}{\left(\sum_{g=1}^{G}\frac{h(N_{g})}{N_{g}}\right)^{2}}\to 0$ and  $\frac{\sum_{g=1}^{G}\frac{1}{N_{g}^{2}}}{\left(\sum_{g=1}^{G}\frac{1}{N_{g}}\right)^{2}}\to 0,$
as $G\to\infty,$ for the nearly balanced case generally, and the unbalanced case with
$O(G)$ clusters being nearly balanced. Therefore,
\begin{align*}
\frac{\left\Vert T_{1}\right\Vert }{\left\Vert V_{G}\right\Vert } & =\begin{cases}
O_{P}\left(\frac{1}{\sqrt{G}}\right), & \text{for strong dependence}\\
O_{P}\left(\frac{\sqrt{\sum_{g=1}^{G}\frac{h^{2}\left(N_{g}\right)}{N_{g}^{2}}}}{\sum_{g=1}^{G}\frac{h\left(N_{g}\right)}{N_{g}}}\right), & \text{for semi-strong dependence}\\
O_{P}\left(\frac{\sqrt{\sum_{g=1}^{G}\frac{1}{N_{g}^{2}}}}{\sum_{g=1}^{G}\frac{1}{N_{g}}}\right), & \text{for weak dependence}
\end{cases}\\
 & =o_{P}(1),\quad\text{ for any kind of dependence.}
\end{align*}

Also, using Lemma \ref{lemma: sum h^2/N^2 to 0},
\[
\frac{\left\Vert T_{2}\right\Vert }{\left\Vert V_{G}\right\Vert }=O_{P}\left(\frac{1}{G}\right)=o_{P}(1),\text{\; for any kind of dependence.}
\]

Therefore, we have shown that 
\[
\frac{\left\Vert \hat{V}_{G}-V_{G}\right\Vert }{\left\Vert V_{G}\right\Vert }=o_{P}(1),
\]

which deduces that $\hat{V}_{G}$ is consistent for $V_{G},$ under any kind of dependence, for the nearly balanced case. For the unbalanced case, the above holds for (i) strong dependence generally, (ii) weak and semi-strong dependence, if $O(G)$ clusters are nearly balanced.

\hfill
\qed

For the sake of completeness, we provide the following lemma.
\begin{lemma}\label{lemma:lambda_min lambda_max inequality}
Suppose that $A$ is a $m\times n$ matrix with
full row rank $m(\leq n),$ where $m$ is finite. Then, for any $n\times n$
positive definite matrix $B,$
(i) $\lambda_{\min}\left(ABA^{\prime}\right)\ge\lambda_{\min}\left(AA^{\prime}\right)\lambda_{\min}\left(B\right)$
and
(ii) $\lambda_{\max}\left(ABA^{\prime}\right)\leq\lambda_{\max}\left(AA^{\prime}\right)\lambda_{\max}\left(B\right).$
\end{lemma}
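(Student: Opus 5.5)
Both inequalities follow from the Courant--Fischer (Rayleigh quotient) characterization of the extreme eigenvalues of a symmetric matrix, applied to the two quadratic forms involved. Because $B$ is positive definite and $A$ has full row rank $m$, the $m\times m$ matrices $ABA^{\prime}$ and $AA^{\prime}$ are both symmetric positive definite, so their extreme eigenvalues are positive. We have
\[
\lambda_{\min}\left(ABA^{\prime}\right)=\min_{z\neq 0}\frac{z^{\prime}ABA^{\prime}z}{z^{\prime}z},\qquad \lambda_{\max}\left(ABA^{\prime}\right)=\max_{z\neq 0}\frac{z^{\prime}ABA^{\prime}z}{z^{\prime}z},
\]
and the same holds with $B$ replaced by $I_n$.

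Fix $z\in\mathbb{R}^{m}$ with $z\neq 0$ and set $w=A^{\prime}z\in\mathbb{R}^{n}$. Since $A$ has full row rank, $A^{\prime}$ has trivial kernel, so $w\neq 0$. Applying the Rayleigh bounds for $B$ to $w$ gives
\[
\lambda_{\min}(B)\,w^{\prime}w\;\le\; w^{\prime}Bw\;\le\;\lambda_{\max}(B)\,w^{\prime}w,
\]
which in terms of $z$ reads
\[
\lambda_{\min}(B)\,z^{\prime}AA^{\prime}z\;\le\; z^{\prime}ABA^{\prime}z\;\le\;\lambda_{\max}(B)\,z^{\prime}AA^{\prime}z .
\]
Next, bound $z^{\prime}AA^{\prime}z$ using the Rayleigh bounds for $AA^{\prime}$:
\[
\lambda_{\min}(AA^{\prime})\,z^{\prime}z\;\le\; z^{\prime}AA^{\prime}z\;\le\;\lambda_{\max}(AA^{\prime})\,z^{\prime}z .
\]
Combining the two displays is legitimate because all the multipliers $\lambda_{\min}(B)$, $\lambda_{\max}(B)$ are positive. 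This yields
\[
\lambda_{\min}(AA^{\prime})\lambda_{\min}(B)\,z^{\prime}z\;\le\; z^{\prime}ABA^{\prime}z\;\le\;\lambda_{\max}(AA^{\prime})\lambda_{\max}(B)\,z^{\prime}z .
\]

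Dividing by $z^{\prime}z$ and then taking the minimum over $z\neq 0$ gives (i), while taking the maximum gives (ii).

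There is no real obstacle in this argument. The only point needing care is the role of the full row rank hypothesis: it guarantees that $w=A^{\prime}z\neq 0$ and that $AA^{\prime}$ is invertible, so $\lambda_{\min}(AA^{\prime})>0$ and the lower bound is meaningful. Inequality (ii), on the other hand, holds for any $A$, and for it full rank is only cosmetic. The same fact appears as items 5 and 6 of Result~\ref{result:trace inequality}, so the lemma could also simply be cited from there.
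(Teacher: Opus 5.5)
Your proof is correct and follows essentially the same route as the paper, which also bounds $z^{\prime}ABA^{\prime}z/z^{\prime}z$ below by $\lambda_{\min}(B)\,z^{\prime}AA^{\prime}z/z^{\prime}z$ (and above by $\lambda_{\max}(B)\,z^{\prime}AA^{\prime}z/z^{\prime}z$) and then takes the minimum (maximum) over $z$. Your side remark on the rank hypothesis is also right, and it extends to (i): since $ABA^{\prime}$ is positive semidefinite, (i) holds trivially when $\lambda_{\min}(AA^{\prime})=0$, so full row rank is not needed for either inequality, and $w\neq 0$ is not needed because the bound for $B$ holds trivially at $w=0$.
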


\textbf{Proof of Lemma \ref{lemma:lambda_min lambda_max inequality}}

Note that
\[
\lambda_{\min}\left(ABA^{\prime}\right)=\underset{z\in\mathbb{R}^{k}}{min}\frac{z^{\prime}ABA^{\prime}z}{z^{\prime}z}\geq\underset{z\in\mathbb{R}^{k}}{min}\frac{z^{\prime}AA^{\prime}z}{z^{\prime}z}\lambda_{\min}\left(B\right)\ge\lambda_{\min}\left(AA^{\prime}\right)\lambda_{\min}\left(B\right),
\]

and
\[
\lambda_{\max}\left(ABA^{\prime}\right)=\underset{z\in\mathbb{R}^{k}}{max}\frac{z^{\prime}ABA^{\prime}z}{z^{\prime}z}\leq\underset{z\in\mathbb{R}^{k}}{max}\frac{z^{\prime}AA^{\prime}z}{z^{\prime}z}\lambda_{\max}\left(B\right)\leq\lambda_{\max}\left(AA^{\prime}\right)\lambda_{\max}\left(B\right).
\]

\hfill
\qed

\textbf{Proof of Theorem \ref{thm:testing Rbeta=r}}

We show the asymptotic distribution of the Wald-type test statistic based on $\hat{\bar{\beta}}$ in two steps.

\textit{Step 1}: In this step, we show that under $H_{0},$
\[
    \Pi_{G}^{-1/2}\sqrt{G}\left(R\hat{\bar{\beta}}-r\right)\xrightarrow{d}N_{q}(0,I_{q}), \;\text{ as } G\rightarrow\infty
\]
(i) for the nearly balanced case, (ii) for the unbalanced case and strong dependence, (iii) for the unbalanced case, and weak or semi-strong dependence, if $O(G)$ clusters are nearly balanced. 

First note that under $H_{0},$
\[
    \Pi_{G}^{-1/2}\sqrt{G}\left(R\hat{\bar{\beta}}-r\right)=\Pi_{G}^{-1/2}\sqrt{G}R\left(\hat{\bar{\beta}}-\beta\right)=\Pi_{G}^{-1/2}RV_{G}^{1/2}V_{G}^{-1/2}\sqrt{G}\left(\hat{\bar{\beta}}-\beta\right).
\]
Note that Theorem \ref{thm:normal} yields $V_{G}^{-1/2}\sqrt{G}\left(\hat{\bar{\beta}}-\beta\right)\xrightarrow{d}N_{k}(0,I_{k}),$ (i) for nearly balanced case, (ii) for the unbalanced case and strong dependence, (iii) for the unbalanced case, and weak or semi-strong dependence, if $O(G)$ clusters are nearly balanced.
Since $R$ is a finite matrix of dimension $q\times k$ with full
row rank $q,$ and $V_{G}$ is finite and positive definite, it follows
from Corollary 4.24 of \cite{white1984asymptotic} that 
\[
    \Pi_{G}^{-1/2}RV_{G}^{1/2}V_{G}^{-1/2}\sqrt{G}\left(\hat{\bar{\beta}}-\beta\right)\xrightarrow{d}N_{q}(0,I_{q}),\;\text{ as } G\rightarrow\infty
\]
for the cases mentioned already.

\textit{Step 2}: In this step, we show that under $H_{0},$
\[
    \hat{\Pi}_{G}^{-1/2}\sqrt{G}\left(R\hat{\bar{\beta}}-r\right)\xrightarrow{d}N_{q}(0,I_{q}),\;\text{ as } G\rightarrow\infty.
\] 

To show this, first note that using the norm inequality,
\begin{eqnarray*}
\left\Vert\hat{\Pi}_{G}^{-1/2}\sqrt{G}\left(R\hat{\bar{\beta}}-r\right)-\Pi_{G}^{-1/2}\sqrt{G}\left(R\hat{\bar{\beta}}-r\right)\right\Vert & \leq & \left\Vert\hat{\Pi}_{G}^{-1/2}\Pi_{G}^{1/2}-I_{l}\right\Vert\left\Vert\Pi_{G}^{-1/2}\sqrt{G}\left(R\hat{\bar{\beta}}-r\right)\right\Vert.
\end{eqnarray*}
 From the Step 1, we have shown that under $H_0,$ $\Pi_{G}^{-1/2}\sqrt{G}\left(R\hat{\bar{\beta}}-r\right)\xrightarrow{d}N_{q}(0,I_{q}),$ which deduces that
 \[
    \left\Vert\Pi_{G}^{-1/2}\sqrt{G}\left(R\hat{\bar{\beta}}-r\right)\right\Vert=O_P(1),\quad\text{ under } H_{0}.
 \]
 
Therefore, it is enough to show that $\left\Vert\hat{\Pi}_{G}^{-1/2}\Pi_{G}^{1/2}-I_{l}\right\Vert=o_{P}(1).$ 

Since $\Pi_{G}=RV_{G}R^{\prime},$ it follows from Lemma \ref{lemma:lambda_min lambda_max inequality} that 
\begin{align*}
\lambda_{\min}\left(\Pi_{G}\right) & \ge\lambda_{\min}\left(RR^{\prime}\right)\lambda_{\min} \left(V_{G}\right)=O_{e}(1)\lambda_{\min}\left(V_{G}\right) \\
\lambda_{\max}\left(\Pi_{G}\right) & \leq\lambda_{\max}\left(RR^{\prime}\right)\lambda_{\max}\left(V_{G}\right)=O_{e}(1)\lambda_{\max}\left(V_{G}\right).
\end{align*}

Since $O_{e}\left(\lambda_{\min}\left(V_{G}\right)\right)=O_{e}\left(\lambda_{\max}\left(V_{G}\right)\right)$ by Assumption \ref{assump:lambda max V_G=lambda min V_G}, we have  $\left\Vert\Pi_{G}\right\Vert=O_{e}\left(\left\Vert V_{G}\right\Vert\right).$
In the proof of Theorem \ref{thm: consistency of V hat}, we have shown that
$\frac{1}{\left\Vert V_{G}\right\Vert}\left\Vert\hat{V}_{G}-V_{G}\right\Vert=o_{P}(1),$
which implies $\frac{1}{O_{e}\left(\left\Vert V_{G}\right\Vert\right)}\left\Vert R\hat{V}_{G}R^{\prime}-RV_{G}R^{\prime}\right\Vert=o_{P}(1),$
which further implies that $\frac{1}{\left\Vert\Pi_{G}\right\Vert}\left\Vert\hat{\Pi}_{G}-\Pi_{G}\right\Vert=o_{P}(1).$
By Assumption \ref{assump:lambda max V_G=lambda min V_G}, we deduce that $\frac{V_{G}}{\left\Vert V_{G}\right\Vert}$
is uniformly positive definite, and it
subsequently follows that $\frac{\Pi_{G}}{\left\Vert\Pi_{G}\right\Vert}$
is also uniformly positive definite. So, by the continuity theorem for
probability convergence, taking $g(Z)=Z^{-1/2}\left(\frac{\Pi_G}{\left\Vert\Pi_G\right\Vert}\right)^{1/2}$ as a continuous function of $Z$ and putting $Z=\frac{\hat{\Pi}_G}{\left\Vert\Pi_G\right\Vert}$, we have $\left\Vert\hat{\Pi}_{G}^{-1/2}\Pi_{G}^{1/2}-I_{l}\right\Vert=o_{P}(1).$

Hence, we have shown that under $H_{0},$
\[
    \hat{\Pi}_{G}^{-1/2}\sqrt{G}\left(R\hat{\bar{\beta}}-r\right)\xrightarrow{d}N_{q}(0,I_{q}),\quad\text{ as } G\to\infty.
\]
 Consequently, by the continuous mapping theorem, it follows that
\[
G\left(R\hat{\bar{\beta}}-r\right)^{\prime}\hat{\Pi}_{G}^{-1}\left(R\hat{\bar{\beta}}-r\right)\xrightarrow{d}\chi_{q}^{2},\quad \text{ as } G\to\infty
\]
(i) for the nearly balanced case, (ii) for the unbalanced case and strong dependence, (iii) for the unbalanced case, and weak and semi-strong dependence, if $O(G)$ clusters are nearly balanced. This proves the theorem.

\hfill
\qed

{\bf Proof of Theorem \ref{thm:consistency under varying parameters}}

To prove the consistency of $\hat{\bar{\beta}}$ for model \eqref{eq: model2 for u_g}, first, note that from \eqref{eq: beta_hat for model2}
\[
\hat{\bar{\beta}} - \beta = \frac{1}{G}\sum_{g=1}^G u_g + \frac{1}{G}\sum_{g=1}^G \left(X_g^\prime X_g\right)^{-1}X_g^\prime \epsilon_g.
\]
Thus, $\hat{\bar{\beta}}$ is unbiased by Assumption \ref{assump:exogeneity} and \ref{assump: distribution of u_g}. Hence, to prove the consistency of $\hat{\bar{\beta}},$ it
suffices to show $\mathbb{E}\left\Vert\hat{\bar{\beta}}-\beta\right\Vert^{2}\rightarrow0,$
as $G\rightarrow\infty.$  Using the independence between $\epsilon_{g}$
and $u_{g}$ by Assumption \ref{assump: epsilon_g, X_g and u_g independent},
\[
\mathbb{E}\left\Vert \hat{\bar{\beta}}-\beta\right\Vert ^{2}=\operatorname{tr}\left(\mathbb{E}\left[\left(\hat{\bar{\beta}}-\beta\right)\left(\hat{\bar{\beta}}-\beta\right)^{\prime}\right]\right)=\operatorname{tr}\left(V^{*}\right),
\]

where $V^{*}=V+\frac{1}{G}\Delta$ as defined in \eqref{eq: V^*}. From the proof of Theorem \ref{thm:consistency}, we have already shown that for any kind of dependence, under Assumptions \ref{assump:exogeneity}, \ref{assump:error independence}, and \ref{assump:Xg'Xg/Ng inverse is bounded},
\[
\operatorname{tr}(V)=O\left(\frac{1}{G}\right),
\]

irrespective of the cluster sizes. Also, under Assumption \ref{assump: distribution of u_g},
\[
\operatorname{tr}(\Delta)=O\left(1\right).
\]

Therefore, 
\[
\mathbb{E}\left\Vert \hat{\bar{\beta}}-\beta\right\Vert ^{2}=O\left(\frac{1}{G}\right),
\]

which yields that $\hat{\bar{\beta}}$ is consistent for $\beta$ for any kind of dependence of $\epsilon_{g}$ irrespective of the cluster sizes.

\hfill 
\qed

{\bf Proof of Theorem \ref{thm:normal under varying parameters}}

To prove the asymptotic normality of $V^{*-1/2}\left(\hat{\bar{\beta}}-\beta\right),$
first note that $\hat{\bar{\beta}}-\beta=\frac{1}{G}\sum_{g=1}^{G}\left(X_{g}^{\prime}X_{g}\right)^{-1}X_{g}^{\prime}\epsilon_{g}+\frac{1}{G}\sum_{g=1}^{G}u_{g}$
as in \eqref{eq: beta_hat for model2}. Assuming $w_{g}=\left(X_{g}^{\prime}X_{g}\right)^{-1}X_{g}^{\prime}\epsilon_{g}$,
we obtain 
\[
V^{*-1/2}\left(\hat{\bar{\beta}}-\beta\right)=V^{*-1/2}\left[\frac{1}{G}\sum_{g=1}^{G}w_{g}+\frac{1}{G}\sum_{g=1}^{G}u_{g}\right].
\]

For the asymptotic normality, it suffices to prove the Liapounov condition
for CLT. To prove the Liapounov condition, it is required to show
that for any fixed $z\in\mathbb{R}^{k},$ and for some $p>1,$
\begin{align*}
\frac{\sum_{g=1}^{G}\mathbb{E}\left|z^{\prime}V^{*-1/2}\frac{1}{G}\left(w_{g}+u_{g}\right)\right|^{2p}}{\left(\sum_{g=1}^{G}\mathbb{E}\left[z^{\prime}V^{*-1/2}\frac{1}{G}\left(w_{g}+u_{g}\right)\right]^{2}\right)^{p}} & \to0,\;\text{ as }G\to\infty\\
\implies\frac{\sum_{g=1}^{G}\mathbb{E}\left|z^{\prime}V^{*-1/2}\left(w_{g}+u_{g}\right)\right|^{2p}}{\left(\sum_{g=1}^{G}\mathbb{E}\left[z^{\prime}V^{*-1/2}\left(w_{g}+u_{g}\right)\right]^{2}\right)^{p}} & \to0,\;\text{ as }G\to\infty.
\end{align*}

Using the trace inequality in the second line, the inequality $\left|a+b+c\right|^{p}\leq3^{p-1}\left(\left|a\right|^{p}+\left|b\right|^{p}+\left|c\right|^{p}\right)$
in the fourth line, observe that 
\begin{align*}
\sum_{g=1}^{G}\mathbb{E}\left|z^{\prime}V^{*-1/2}\left(w_{g}+u_{g}\right)\right|^{2p} & =\sum_{g=1}^{G}\mathbb{E}\left[\left|z^{\prime}V^{*-1/2}\left(w_{g}+u_{g}\right)\left(w_{g}+u_{g}\right)^{\prime}V^{*-1/2}z\right|^{p}\right]\\
 & \leq\sum_{g=1}^{G}\mathbb{E}\left[\left|z^{\prime}V^{*-1}z\right|^{p}\left|\operatorname{tr}\left(\left(w_{g}+u_{g}\right)\left(w_{g}+u_{g}\right)^{\prime}\right)\right|^{p}\right]\\
 & =\left|z^{\prime}V^{*-1}z\right|^{p}\sum_{g=1}^{G}\mathbb{E}\left[\left|w_{g}^{\prime}w_{g}+u_{g}^{\prime}u_{g}+2w_{g}^{\prime}u_{g}\right|^{p}\right]\\
 & \leq\left|z^{\prime}V^{*-1}z\right|^{p}\sum_{g=1}^{G}3^{p-1}\left(\mathbb{E}\left|w_{g}^{\prime}w_{g}\right|^{p}+\mathbb{E}\left|u_{g}^{\prime}u_{g}\right|^{p}+\mathbb{E}\left|2w_{g}^{\prime}u_{g}\right|^{p}\right).
\end{align*}

By the trace inequality, 
\begin{align*}
\mathbb{E}\left|w_{g}^{\prime}w_{g}\right|^{p} & \leq\mathbb{E}\left[\left|\operatorname{tr}\left(X_{g}\left(X_{g}^{\prime}X_{g}\right)^{-2}X_{g}^{\prime}\right)\right|^{p}\left|\epsilon_{g}^{\prime}\epsilon_{g}\right|^{p}\right]\\
 & =\mathbb{E}\left[\left|\operatorname{tr}\left(\left(X_{g}^{\prime}X_{g}\right)^{-1}\right)\right|^{p}\mathbb{E}\left[\left|\epsilon_{g}^{\prime}\epsilon_{g}\right|^{p}\big|X\right]\right].
\end{align*}

Since $\mathbb{E}\left[\left|\epsilon_{g}^{\prime}\epsilon_{g}\right|^{p}\big|X\right]=O\left(N_{g}^{p}\right)$
uniformly in $g$ follows by Assumption \ref{assump:existence of moment} from \eqref{eq: E(epsilon_g'epsilon_g)^p order} in the proof of Theorem \ref{thm:normal}, we obtain from Assumption \ref{assump:Xg'Xg/Ng inverse is bounded}
\[
\mathbb{E}\left|w_{g}^{\prime}w_{g}\right|^{p}=O(1),\quad\text{ uniformly in }g.
\]

Using the H\"{o}lder's inequality in the second line, and Assumption \ref{assump: existence of moment for ug} in the fourth line 
\begin{align*}
\mathbb{E}\left|u_{g}^{\prime}u_{g}\right|^{p} & =\mathbb{E}\left|\sum_{j=1}^{k}u_{gj}^{2}\right|^{p}\\
 & \leq\left(\sum_{j=1}^{k}1^{p^*}\right)^{p/p^*}\mathbb{E}\left[\sum_{j=1}^{k}\left|u_{gj}\right|^{2p}\right],\;\text{ where }\frac{1}{p}+\frac{1}{p^*}=1\\
 & =k^{p/p^*}\sum_{j=1}^{k}\mathbb{E}\left[\left|u_{gj}\right|^{2p}\right]\\
 & =O\left(k^{p}\right)=O(1),\quad\text{ uniformly in }g.
\end{align*}

Applying the C-S inequality in the first line, using the independence
between $w_{g}$ and $u_{g}$ by Assumption \ref{assump: epsilon_g, X_g and u_g independent} in the second line, and putting the orders of the previous terms, we obtain
\begin{align*}
\mathbb{E}\left|w_{g}^{\prime}u_{g}\right|^{p} & \leq\mathbb{E}\left[\left|w_{g}^{\prime}w_{g}\right|^{1/2}\left|u_{g}^{\prime}u_{g}\right|^{1/2}\right]^{p}\\
 & =\mathbb{E}\left[\left|w_{g}^{\prime}w_{g}\right|^{p/2}\right]\mathbb{E}\left[\left|u_{g}^{\prime}u_{g}\right|^{p/2}\right]\\
 & =O(1),\quad\text{ uniformly in }g.
\end{align*}

Therefore, for some $0<c_{1}<\infty,$ by Lemma \ref{lemma:Exact order summability}
\[
\sum_{g=1}^{G}\mathbb{E}\left|z^{\prime}V^{*-1/2}\left(w_{g}+u_{g}\right)\right|^{2p}\leq c_{1}G\left|z^{\prime}V^{*-1}z\right|^{p}.
\]

Note that for any $z\in\mathbb{R}^{k},$
\[
\frac{z^{\prime}V^{*-1}z}{z^{\prime}z}\leq\lambda_{max}\left(V^{*-1}\right)=\frac{1}{\lambda_{min}\left(V^{*}\right)}.
\]

Since $\lambda_{min}\left(V^{*}\right)\geq\lambda_{min}\left(V\right)+\lambda_{min}\left(\frac{1}{G}\Delta\right)\geq\lambda_{min}\left(\frac{1}{G}\Delta\right)\geq\frac{c_{2}}{G},$
for some $0<c_{2}<\infty$ by Assumption \ref{assump: distribution of u_g}, it follows
that \textbf{
\[
\frac{z^{\prime}V^{*-1}z}{z^{\prime}z}\leq\frac{G}{c_{2}}.
\]
}

For the denominator, putting the expression of $V^*$ from \eqref{eq: V^*} in the second
line, we have 
\begin{align*}
\left(\sum_{g=1}^{G}\mathbb{E}\left[z^{\prime}V^{*-1/2}\left(w_{g}+u_{g}\right)\right]^{2}\right)^{p} & =\left(z^{\prime}V^{*-1/2}\sum_{g=1}^{G}\mathbb{E}\left[\left(w_{g}+u_{g}\right)\left(w_{g}+u_{g}\right)^{\prime}\right]V^{*-1/2}z\right)^{p}\\
 & =\left(z^{\prime}V^{*-1/2}G^{2}V^{*}V^{*-1/2}z\right)^{p}\\
 & =G^{2p}\left(z^{\prime}z\right)^{p}.
\end{align*}

Therefore, the Liapounov condition reduces to 
\[
\frac{\sum_{g=1}^{G}\mathbb{E}\left|z^{\prime}V^{*-1/2}\frac{1}{G}\left(w_{g}+u_{g}\right)\right|^{2p}}{\left(\sum_{g=1}^{G}\mathbb{E}\left[z^{\prime}V^{*-1/2}\frac{1}{G}\left(w_{g}+u_{g}\right)\right]^{2}\right)^{p}}\leq c_{1}G^{1-2p}\left|\frac{z^{\prime}V^{*-1}z}{z^{\prime}z}\right|^{p}\leq\frac{c_{1}}{c_{2}}G^{2(1-p)},
\]

which goes to $0,$ as $G\to\infty,$ for some $p>1.$ Hence, the CLT holds for $V^{*-1/2}\left(\hat{\bar{\beta}}-\beta\right),$
for strong, semi-strong, and weak dependence and for any cluster sizes.

\hfill 
\qed

\textbf{Proof of Theorem \ref{thm: V hat consistency for model 2}}

To show that $\hat{V}^{*}_G$ is consistent for $V_G^*,$ it suffices to show that 
\[
    \mathbb{E}\left[ \frac{\| \hat{V}^{*}_G - V^*_G \|}{\| V^*_G \|} \right] \to 0, \;\text{ as } G\to\infty.
\]
To show this, first, note that $e_g = Y_g - X_g \hat{\bar{\beta}} = \epsilon_g + X_g u_g - X_g (\hat{\bar{\beta}} - \beta),$ which implies that
\begin{align*}
    e_g e_g' = & (\epsilon_g + X_g u_g)(\epsilon_g + X_g u_g)' + X_g (\hat{\bar{\beta}} - \beta)(\hat{\bar{\beta}} - \beta)' X_g' \\
    & - X_g (\hat{\bar{\beta}} - \beta)(\epsilon_g + X_g u_g)' - (\epsilon_g + X_g u_g)(\hat{\bar{\beta}} - \beta)' X_g'. 
\end{align*}

Then, $\hat{V}^{*}_G - V^*_G$ can be expressed as:
\begin{align*}
    \hat{V}^{*}_G - V^*_G & =  \frac{1}{G}\sum_{g=1}^{G}\left(X_{g}^{\prime}X_{g}\right)^{-1}X_{g}^{\prime}e_{g}e_{g}^{\prime}X_{g}\left(X_{g}^{\prime}X_{g}\right)^{-1} - V_G^* \\
    & =  T_1 + T_2 + T_3 + T_4, \quad \text{say,}
\end{align*}
where \[
T_1 =   \frac{1}{G} \sum_{g=1}^{G} (X_g' X_g)^{-1} X_g' \left[(\epsilon_g + X_g u_g)(\epsilon_g + X_g u_g)'\right] X_g (X_g' X_g)^{-1} - V_G^*, 
\]
\[
T_2 = \frac{1}{G} \sum_{g=1}^{G} (\hat{\bar{\beta}} - \beta) (\hat{\bar{\beta}} - \beta)',
\]
\[
T_3 = -\frac{1}{G} \sum_{g=1}^{G} (\hat{\bar{\beta}} - \beta)(\epsilon_g + X_g u_g)' X_g (X_g' X_g)^{-1}, \quad \text{and}
\]
\[
T_4 = -\frac{1}{G} \sum_{g=1}^{G} (X_g' X_g)^{-1} X_g' (\epsilon_g + X_g u_g) (\hat{\bar{\beta}} - \beta)' .
\]

First, note that using Assumption \ref{assump:exogeneity}, Assumption \ref{assump: epsilon_g, X_g and u_g independent} and Assumption \ref{assump: distribution of u_g}
\[
    \mathbb{E}\mathbb{E}\left[T_1\right] = \mathbb{E}\left[\frac{1}{G} \sum_{g=1}^{G} (X_g' X_g)^{-1} X_g' \left[(\epsilon_g + X_g u_g)(\epsilon_g + X_g u_g)'\right] X_g (X_g' X_g)^{-1} \bigg| X \right] -V_G^* = 0. 
\]

Now, conditional on $X,$
\[ 
T_1 = \frac{1}{G} \sum_{g=1}^{G} (X_g' X_g)^{-1} X_g' \left[(\epsilon_g + X_g u_g)(\epsilon_g + X_g u_g)' - (\Omega_g + X_g \Delta X_g')\right] X_g (X_g' X_g)^{-1} \]

 Assuming $A_g = (\epsilon_g + X_g u_g)(\epsilon_g + X_g u_g)' - (\Omega_g + X_g \Delta X_g'),$ and using the independence between clusters by Assumption \ref{assump:error independence}, $\mathbb{E} \operatorname{tr}[T_1 T_1']$ can be written as
\begin{align*}
    \mathbb{E} \operatorname{tr}[T_1 T_1'] & = \frac{1}{G^2} \sum_{g=1}^{G} \mathbb{E}\left[ (X_g' X_g)^{-1} X_g' A_g X_g (X_g' X_g)^{-2} X_g' A_g' X_g (X_g' X_g)^{-1}\right] \\
    & \leq \frac{1}{G^2} \sum_{g=1}^{G} \mathbb{E} \left[\operatorname{tr}\left( X_g (X_g' X_g)^{-2} X_g'\right) \cdot \mathbb{E}\operatorname{tr}[ (X_g' X_g)^{-1} X_g' A_g A_g' X_g (X_g' X_g)^{-1} \big| X] \right],\; \text{ by trace inequality} \\
    & \leq \frac{1}{G^2} \sum_{g=1}^{G} \mathbb{E} \left[\operatorname{tr}^2\left((X_g' X_g)^{-1}\right) \cdot \mathbb{E}\operatorname{tr}[ A_g A_g' \big| X] \right],\; \text{ by trace inequality.} 
\end{align*}

Now, 
\[ \mathbb{E}\operatorname{tr}[A_g A_g'] = \mathbb{E}\operatorname{tr} \left[ ( (\epsilon_g + X_g u_g)(\epsilon_g + X_g u_g)' - (\Omega_g + X_g \Delta X_g') )^2 \right] \]
After expanding the square and taking expectations involving fourth moments,
\begin{align*}
    & \; \mathbb{E}\operatorname{tr}[A_g A_g'] \\
   = &  \; \mathbb{E} \left[ \{ (\epsilon_g + X_g u_g)' (\epsilon_g + X_g u_g) \}^2 \right] - \operatorname{tr}[(\Omega_g + X_g \Delta X_g')^2] \\
    = &  \; \mathbb{E} \left[ (\epsilon_g' \epsilon_g)^2 + (u_g' X_g' X_g u_g)^2 + 4(\epsilon_g' X_g u_g)^2 + 2 \epsilon_g'\epsilon_g u_g'X_g'X_g u_g + 2 u_g' X_g'X_g u_g \epsilon_g'X_g u_g + 2\epsilon_g'\epsilon_g \epsilon_g'X_gu_g\right] \\
      &  \; -\operatorname{tr}[(\Omega_g + X_g \Delta X_g')^2].
\end{align*}
Note that by Assumption \ref{assump:existence of moment}, and Assumption \ref{assump: existence of moment for ug}, all the above terms are $O(N_g^2)$. 
Applying the inequality $\mathbb{E}\|T_1\| \leq (\mathbb{E} \operatorname{tr}[T_1 T_1'])^{1/2},$ using Assumption \ref{assump:Xg'Xg/Ng inverse is bounded} and Lemma \ref{lemma:Exact order summability}, we obtain
\begin{align*}
    \mathbb{E}\|T_1\| & \leq \frac{1}{G} \left( \sum_{g=1}^{G} O(N_g^{-2}) \cdot O(N_g^2) \right)^{1/2} \\
     &  = O\left(\frac{1}{\sqrt{G}}\right), \quad \text{for any kind of dependence.} 
\end{align*}

Note that $T_2 = (\hat{\bar{\beta}} - \beta)(\hat{\bar{\beta}} - \beta)'$. Since $V_G^* = V_G + \Delta$, so by Assumption \ref{assump:lambda max V_G=lambda min V_G} and Assumption \ref{assump: distribution of u_g},
\[
\mathbb{E}\|T_2\| =  \mathbb{E}\left| (\hat{\bar{\beta}} - \beta)' (\hat{\bar{\beta}} - \beta) \right| = \operatorname{tr}(V^*) = \frac{1}{G}\operatorname{tr}(V_G^*) = O\left(\frac{1}{G}\right).
\] 

Note that 
\[
T_{3} = -\frac{1}{G} \sum_{g=1}^{G} (\hat{\bar{\beta}} - \beta)(\epsilon_g + X_g u_g)' X_g (X_g' X_g)^{-1} = -\left(\hat{\bar{\beta}}-\beta\right)\left(\hat{\bar{\beta}}-\beta\right)^{\prime}=-T_{2},
\]
and similarly, $T_{4}=T_{3}.$ Hence, $\left\Vert T_{3}\right\Vert =\left\Vert T_{2}\right\Vert =\left\Vert T_{4}\right\Vert .$
By the triangular inequality for norms
\[
\frac{\left\Vert \hat{V}^{*}_{G}-V^*_{G}\right\Vert }{\left\Vert V^*_{G}\right\Vert }  \leq  \frac{1}{\left\Vert V^*_{G}\right\Vert }\left[\left\Vert T_{1}\right\Vert +\left\Vert T_{2}\right\Vert +\left\Vert T_{3}\right\Vert +\left\Vert T_{4}\right\Vert \right].
\]

Also, $\|V_G^*\| = \| V_G + \Delta \| \geq \| \Delta\| = O_e(1)$, by Assumption \ref{assump: distribution of u_g}. Therefore,
\[ \mathbb{E}\left[ \frac{\| \hat{V}^{*}_G - V^*_G \|}{\| V^*_G \|} \right] = O\left(\frac{1}{\sqrt{G}}\right) \to 0, \quad \text{as } G \to \infty. \]

Hence, $\hat{V}^{*}_G$ is consistent for $V_G^*,$ under model \eqref{eq: model2 for u_g}.

\hfill 
\qed

\textbf{Proof of Theorem \ref{thm: Rb = r testing for model2}}

We show the asymptotic distribution of the Wald-type test statistic based on $\hat{\bar{\beta}}$ for model \eqref{eq: model2 for u_g} in two steps.

\textit{Step 1}: In this step, we show that under $H_{0},$
\[
    \Gamma_{G}^{-1/2}\sqrt{G}\left(R\hat{\bar{\beta}}-r\right)\xrightarrow{d}N_{q}(0,I_{q}), \;\text{ as } G\rightarrow\infty.
\]

First note that under $H_{0},$
\[
    \Gamma_{G}^{-1/2}\sqrt{G}\left(R\hat{\bar{\beta}}-r\right)=\Gamma_{G}^{-1/2}\sqrt{G}R\left(\hat{\bar{\beta}}-\beta\right)=\Gamma_{G}^{-1/2}RV_{G}^{*1/2}V_{G}^{*-1/2}\sqrt{G}\left(\hat{\bar{\beta}}-\beta\right).
\]
Note that Theorem \ref{thm:normal under varying parameters} yields $V_{G}^{*-1/2}\sqrt{G}\left(\hat{\bar{\beta}}-\beta\right)\xrightarrow{d}N_{k}(0, I_{k}).$ Also, $V_G^* = V_G + \Delta$ is finite and positive definite due to the fact that $V_G$ and $\Delta$ are both finite and positive definite by Assumption \ref{assump:lambda max V_G=lambda min V_G} and Assumption \ref{assump: distribution of u_g}, respectively. 
Since $R$ is a finite matrix of dimension $q\times k$ with full
row rank $q,$ and $V^{*}_{G}$ is finite and positive definite, it follows
from Corollary 4.24 of \cite{white1984asymptotic} that 
\[
    \Gamma_{G}^{-1/2}RV_{G}^{*1/2}V_{G}^{*-1/2}\sqrt{G}\left(\hat{\bar{\beta}}-\beta\right)\xrightarrow{d}N_{q}(0,I_{q}),\;\text{ as } G\rightarrow\infty.
\]

\textit{Step 2}: In this step, we show that under $H_{0},$
\[
    \hat{\Gamma}_{G}^{-1/2}\sqrt{G}\left(R\hat{\bar{\beta}}-r\right)\xrightarrow{d}N_{q}(0,I_{q}),\;\text{ as } G\rightarrow\infty.
\] 

To show this, first note that using the norm inequality,
\[
\left\Vert\hat{\Gamma}_{G}^{-1/2}\sqrt{G}\left(R\hat{\bar{\beta}}-r\right)-\Gamma_{G}^{-1/2}\sqrt{G}\left(R\hat{\bar{\beta}}-r\right)\right\Vert \leq \left\Vert\hat{\Gamma}_{G}^{-1/2}\Gamma_{G}^{1/2}-I_{l}\right\Vert\left\Vert\Gamma_{G}^{-1/2}\sqrt{G}\left(R\hat{\bar{\beta}}-r\right)\right\Vert.
\]
 From the Step 1, we have shown that under $H_0,$ $\Gamma_{G}^{-1/2}\sqrt{G}\left(R\hat{\bar{\beta}}-r\right)\xrightarrow{d}N_{q}(0,I_{q}),$ which deduces that
 \[
    \left\Vert\Gamma_{G}^{-1/2}\sqrt{G}\left(R\hat{\bar{\beta}}-r\right)\right\Vert=O_P(1),\quad\text{ under } H_{0}.
 \]
 
Therefore, it is enough to show that $\left\Vert\hat{\Gamma}_{G}^{-1/2}\Gamma_{G}^{1/2}-I_{l}\right\Vert=o_{P}(1).$ 

Since $\Gamma_{G}=RV^{*}_{G}R^{\prime},$ it follows from Lemma \ref{lemma:lambda_min lambda_max inequality} that 
\begin{align*}
\lambda_{\min}\left(\Gamma_{G}\right) & \ge\lambda_{\min}\left(RR^{\prime}\right)\lambda_{\min} \left(V^{*}_{G}\right)=O_{e}(1)\lambda_{\min}\left(V^{*}_{G}\right) \\
\lambda_{\max}\left(\Gamma_{G}\right) & \leq\lambda_{\max}\left(RR^{\prime}\right)\lambda_{\max}\left(V^{*}_{G}\right)=O_{e}(1)\lambda_{\max}\left(V^{*}_{G}\right).
\end{align*}

The following holds for any kind of dependence by Assumptions \ref{assump:lambda max V_G=lambda min V_G} and \ref{assump: distribution of u_g}:
\begin{align*}
    \lambda_{\max}(V_G^*) & \leq \lambda_{\max}(V_G) + \lambda_{\max}(\Delta) = O_e(1), \quad \text{ and } \\
    \lambda_{\min}(V_G^*) & \geq \lambda_{\min}(V_G) + \lambda_{\min}(\Delta) \geq \lambda_{\min}(\Delta) = O_e(1).
\end{align*}
Therefore, 
$O_{e}\left(\lambda_{\min}\left(V^{*}_{G}\right)\right)=O_{e}\left(\lambda_{\max}\left(V^{*}_{G}\right)\right)$ for any kind of dependence, and consequently,   $\left\Vert\Gamma_{G}\right\Vert=O_{e}\left(\left\Vert V^{*}_{G}\right\Vert\right).$
In the proof of Theorem \ref{thm: V hat consistency for model 2}, we have shown that
$\frac{1}{\left\Vert V^{*}_{G}\right\Vert}\left\Vert\hat{V}^{*}_{G}-V^{*}_{G}\right\Vert=o_{P}(1),$
which implies 
\[
\frac{1}{O_{e}\left(\left\Vert V^{*}_{G}\right\Vert\right)}\left\Vert R\hat{V}^{*}_{G}R^{\prime}-RV^{*}_{G}R^{\prime}\right\Vert=o_{P}(1),
\]
which further implies that $\frac{1}{\left\Vert\Gamma_{G}\right\Vert}\left\Vert\hat{\Gamma}_{G}-\Gamma_{G}\right\Vert=o_{P}(1).$
By Assumptions \ref{assump:lambda max V_G=lambda min V_G} and \ref{assump: distribution of u_g}, we deduce that $\frac{V^{*}_{G}}{\left\Vert V^{*}_{G}\right\Vert}$
is uniformly positive definite, and it
subsequently follows that $\frac{\Gamma_{G}}{\left\Vert\Gamma_{G}\right\Vert}$
is also uniformly positive definite. So, by the continuity theorem for
probability convergence, taking $g(Z)=Z^{-1/2}\left(\frac{\Gamma_G}{\left\Vert\Gamma_G\right\Vert}\right)^{1/2}$ as a continuous function of $Z$ and putting $Z=\frac{\hat{\Gamma}_G}{\left\Vert\Gamma_G\right\Vert}$, we have $\left\Vert\hat{\Gamma}_{G}^{-1/2}\Gamma_{G}^{1/2}-I_{l}\right\Vert=o_{P}(1).$

Hence, we have shown that under $H_{0},$
\[
    \hat{\Gamma}_{G}^{-1/2}\sqrt{G}\left(R\hat{\bar{\beta}}-r\right)\xrightarrow{d}N_{q}(0,I_{q}),\quad\text{ as } G\to\infty.
\]
 Consequently, by the continuous mapping theorem, it follows that
\[
G\left(R\hat{\bar{\beta}}-r\right)^{\prime}\hat{\Gamma}_{G}^{-1}\left(R\hat{\bar{\beta}}-r\right)\xrightarrow{d}\chi_{q}^{2},\quad \text{ as } G\to\infty.
\]

\hfill 
\qed

\textbf{Proof of Lemma \ref{lemma: V_l hat consistency}}

To show that $\Tilde{V}_{l}$ is consistent for $V_{G,l},$ it suffices to show that 
\[
    \mathbb{E}\left[\frac{\left\Vert \Tilde{V}_{l}-V_{G,l}\right\Vert }{\left\Vert V_{G,l}\right\Vert}\right] \to 0, \quad \text{ as } G\to\infty.
\]

To show this, first, note that $e_{g}^{*}=Y_{g}-X_{g}\Tilde{\beta}_{l}=\epsilon_{g}-X_{g}\left(\Tilde{\beta}_{l}-\beta\right),$
which implies 
\[    e_{g}^{*}e_{g}^{*\prime}=\epsilon_{g}\epsilon_{g}^{\prime}+X_{g}\left(\Tilde{\beta}_{l}-\beta\right)\left(\Tilde{\beta}_{l}-\beta\right)^{\prime}X_{g}^{\prime}-\epsilon_{g}\left(\Tilde{\beta}_{l}-\beta\right)^{\prime}X_{g}^{\prime}-X_{g}\left(\Tilde{\beta}_{l}-\beta\right)\epsilon_{g}^{\prime}.
\]

Then, $\Tilde{V}_{l}-V_{G,l}$ can be expressed as:
\[
\Tilde{V}_{l}-V_{G,l}=T_{1}+T_{2}+T_{3}+T_{4},\quad \text{ say,}
\]
 where
 \[
    T_{1}=\frac{1}{P_{l}^{2}}\sum_{g\in \mathcal{S}_{l}}\left(\left(X_{g}^{\prime}X_{g}\right)^{-1}X_{g}^{\prime}\epsilon_{g}\epsilon_{g}^{\prime}X_{g}\left(X_{g}^{\prime}X_{g}\right)^{-1}-\mathbb{E}\left[\left(X_{g}^{\prime}X_{g}\right)^{-1}X_{g}^{\prime}\Omega_{g}X_{g}\left(X_{g}^{\prime}X_{g}\right)^{-1}\right]\right),
 \] 
\[
    T_{2}=\frac{1}{P_{l}^{2}}\sum_{g\in \mathcal{S}_{l}}\left(\Tilde{\beta}_{l}-\beta\right)\left(\Tilde{\beta}_{l}-\beta\right)^{\prime},
\]
\[
    T_{3}=-\frac{1}{P_{l}^{2}}\sum_{g\in \mathcal{S}_{l}}\left(X_{g}^{\prime}X_{g}\right)^{-1}X_{g}^{\prime}\epsilon_{g}\left(\Tilde{\beta}_{l}-\beta\right)^{\prime}, \quad \text{ and }
\]
\[
T_{4}=-\frac{1}{P_{l}^{2}}\sum_{g\in \mathcal{S}_{l}}\left(\Tilde{\beta}_{l}-\beta\right)\epsilon_{g}^{\prime}X_{g}\left(X_{g}^{\prime}X_{g}\right)^{-1}.
\]

Now, conditional on $X,$ observe that 
\begin{align*}
\mathbb{E}\left[T_{1}\right] & =  \mathbb{E}\mathbb{}\left[T_{1}\big|X\right]\\
& = \frac{1}{P_{l}^{2}}\sum_{g\in \mathcal{S}_{l}}\mathbb{E}\left[\left(X_{g}^{\prime}X_{g}\right)^{-1}X_{g}^{\prime}\mathbb{E}\left[\epsilon_{g}\epsilon_{g}^{\prime}\big|X\right]X_{g}\left(X_{g}^{\prime}X_{g}\right)^{-1}-\mathbb{E}\left[\left(X_{g}^{\prime}X_{g}\right)^{-1}X_{g}^{\prime}\Omega_{g}X_{g}\left(X_{g}^{\prime}X_{g}\right)^{-1}\right]\right]\\
& = 0,
\end{align*}
and using the independence between clusters by Assumption \ref{assump:error independence} in the second line, and the norm inequality in the third line, we have 
\begin{align*}
\mathbb{E}\left\Vert T_{1}\right\Vert ^{2} & =  \mathbb{E}\left\Vert \frac{1}{P_{l}^{2}}\sum_{g\in \mathcal{S}_{l}}\left(X_{g}^{\prime}X_{g}\right)^{-1}\left(X_{g}^{\prime}\epsilon_{g}\epsilon_{g}^{\prime}X_{g}-X_{g}^{\prime}\Omega_{g}X_{g}\right)\left(X_{g}^{\prime}X_{g}\right)^{-1}\right\Vert ^{2}\\
& = \frac{1}{P_{l}^{4}}\sum_{g\in \mathcal{S}_{l}}\mathbb{E}\mathbb{}\left[\left\Vert \left(X_{g}^{\prime}X_{g}\right)^{-1}X_{g}^{\prime}\left(\epsilon_{g}\epsilon_{g}^{\prime}-\Omega_{g}\right)X_{g}\left(X_{g}^{\prime}X_{g}\right)^{-1}\right\Vert ^{2}\bigg|X\right]\\
& \leq \frac{1}{P_{l}^{4}}\sum_{g\in \mathcal{S}_{l}}\mathbb{E}\left[\left\Vert \left(X_{g}^{\prime}X_{g}\right)^{-1}\right\Vert ^{4}\mathbb{E}\left[\left\Vert X_{g}^{\prime}\left(\epsilon_{g}\epsilon_{g}^{\prime}-\Omega_{g}\right)X_{g}\right\Vert ^{2}\Big|X\right]\right]
\end{align*}

Now, using  $\mathbb{E}\left[\left\Vert \left(X_{g}^{\prime}X_{g}\right)^{-1}\right\Vert ^{4}\right]=O\left(N_{g}^{-4}\right)$
uniformly in $g$ by Assumption \ref{assump:Xg'Xg/Ng inverse is bounded}, and applying Assumption \ref{assump:Order of second moment of Xg'Omega_g Xg} and Lemma \ref{lemma:Exact order summability}, we obtain
\[
\mathbb{E}\left\Vert T_{1}\right\Vert ^{2}=\begin{cases}
\frac{1}{P_{l}^{4}}\sum_{g\in \mathcal{S}_{l}}O\left(N_{g}^{-4}\right)O\left(N_{g}^{4}\right)=O\left(\frac{1}{P_{l}^{3}}\right), & \text{for strong dependence}\\
\frac{1}{P_{l}^{4}}\sum_{g\in \mathcal{S}_{l}}O\left(N_{g}^{-4}\right)O\left(N_{g}^{2}h^{2}\left(N_{g}\right)\right)=O\left(\frac{1}{P_{l}^{4}}\sum_{g\in \mathcal{S}_{l}}\frac{h^{2}\left(N_{g}\right)}{N_{g}^{2}}\right), & \text{for semi-strong dependence}\\
\frac{1}{P_{l}^{4}}\sum_{g\in \mathcal{S}_{l}}O\left(N_{g}^{-4}\right)O\left(N_{g}^{2}\right)=O\left(\frac{1}{P_{l}^{4}}\sum_{g\in \mathcal{S}_{l}}\frac{1}{N_{g}^{2}}\right), & \text{for weak dependence}
\end{cases}
\]
Therefore,
\[
\left\Vert T_{1}\right\Vert =\begin{cases}
O_{P}\left(\frac{1}{P_{l}^{3/2}}\right), & \text{for strong dependence}\\
O_{P}\left(\frac{1}{P_{l}^{2}}\sqrt{\sum_{g\in \mathcal{S}_{l}}\frac{h^{2}\left(N_{g}\right)}{N_{g}^{2}}}\right), & \text{for semi-strong dependence}\\
O_{P}\left(\frac{1}{P_{l}^{2}}\sqrt{\sum_{g\in \mathcal{S}_{l}}\frac{1}{N_{g}^{2}}}\right), & \text{for weak dependence}
\end{cases}
\]
 
uniformly in $l.$ Now, for the second term, we have 
\[
\mathbb{E}\left\Vert T_{2}\right\Vert =\mathbb{E}\left\Vert \frac{1}{P_{l}}\left(\Tilde{\beta}_{l}-\beta\right)\left(\Tilde{\beta}_{l}-\beta\right)^{\prime}\right\Vert =\frac{1}{P_{l}}\mathbb{E}\left[\operatorname{tr}\left(V_{G,l}\right)\right].
\]

By Assumption  Assumption \ref{assump:error independence}, \ref{assump:Xg'Xg/Ng inverse is bounded}, and Lemma \ref{lemma:Exact order summability},  
\[
\mathbb{E}\left[\operatorname{tr}\left(V_{G,l}\right)\right] = 
\begin{cases}
O\left(\frac{1}{P_{l}}\right), & \text{for strong dependence}\\
O\left(\frac{1}{P_{l}^{2}}\sum_{g\in \mathcal{S}_{l}}\frac{h\left(N_{g}\right)}{N_{g}}\right), & \text{for semi-strong dependence}\\
O\left(\frac{1}{P_{l}^{2}}\sum_{g\in \mathcal{S}_{l}}\frac{1}{N_{g}}\right), & \text{for weak dependence}
\end{cases}
\]

uniformly in $l.$ This yields 
\[
\left\Vert T_{2}\right\Vert =\begin{cases}
O_{P}\left(\frac{1}{P_{l}^{2}}\right), & \text{for strong dependence}\\
O_{P}\left(\frac{1}{P_{l}^{3}}\sum_{g\in \mathcal{S}_{l}}\frac{h\left(N_{g}\right)}{N_{g}}\right), & \text{for semi-strong dependence}\\
O_{P}\left(\frac{1}{P_{l}^{3}}\sum_{g\in \mathcal{S}_{l}}\frac{1}{N_{g}}\right), & \text{for weak dependence}
\end{cases}
\]

Note that 
\[
T_{3}=-\frac{1}{P_{l}^{2}}\sum_{g\in \mathcal{S}_{l}}\left(X_{g}^{\prime}X_{g}\right)^{-1}X_{g}^{\prime}\epsilon_{g}\left(\hat{\bar{\beta}}-\beta\right)^{\prime}=-T_{2}.
\]
and $T_{4}=T_{3}.$ Hence, $\left\Vert T_{3}\right\Vert =\left\Vert T_{2}\right\Vert =\left\Vert T_{4}\right\Vert .$
By the triangular inequality for norms, 
\[
\frac{\left\Vert \Tilde{V}_{l}-V_{G,l}\right\Vert }{\left\Vert V_{G,l}\right\Vert }  \leq  \frac{1}{\left\Vert V_{G,l}\right\Vert }\left[\left\Vert T_{1}\right\Vert +\left\Vert T_{2}\right\Vert +\left\Vert T_{3}\right\Vert +\left\Vert T_{4}\right\Vert \right].
\]

By Assumption \ref{assump: lambda_max Vl = lambda_min Vl}, 
\[
\left\Vert V_{G,l}\right\Vert = \begin{cases}
O_e\left(\frac{1}{P_{l}}\right), & \text{for strong dependence}\\
O_e\left(\frac{1}{P_{l}^{2}}\sum_{g\in \mathcal{S}_{l}}\frac{h\left(N_{g}\right)}{N_{g}}\right), & \text{for semi-strong dependence}\\
O_e\left(\frac{1}{P_{l}^{2}}\sum_{g\in \mathcal{S}_{l}}\frac{1}{N_{g}}\right), & \text{for weak dependence}
\end{cases}
\]
Therefore, using Lemma \ref{lemma: sum h^2/N^2 to 0}
\begin{align*}
\frac{\left\Vert T_{1}\right\Vert }{\left\Vert V_{G,l}\right\Vert } & =\begin{cases}
O_{P}\left(\frac{1}{\sqrt{P_l}}\right), & \text{for strong dependence}\\
O_{P}\left(\frac{\sqrt{\sum_{g\in \mathcal{S}_{l}}\frac{h^{2}\left(N_{g}\right)}{N_{g}^{2}}}}{\sum_{g\in \mathcal{S}_{l}}\frac{h\left(N_{g}\right)}{N_{g}}}\right), & \text{for semi-strong dependence}\\
O_{P}\left(\frac{\sqrt{\sum_{g\in \mathcal{S}_{l}}\frac{1}{N_{g}^{2}}}}{\sum_{g\in \mathcal{S}_{l}}\frac{1}{N_{g}}}\right), & \text{for weak dependence}
\end{cases}\\
 & =o_{P}(1),\quad \text{ for any kind of dependence, since } P_l\to\infty.
\end{align*}

Also, for any kind of dependence,
\[
\frac{\left\Vert T_{2}\right\Vert }{\left\Vert V_{G,l}\right\Vert }=O_{P}\left(\frac{1}{P_l}\right)=o_{P}(1),\quad \text{ as } P_l\to\infty.
\]

Hence, for any kind of error dependence,
\[
\frac{\left\Vert \Tilde{V}_{l}-V_{G,l}\right\Vert }{\left\Vert V_{G,l}\right\Vert }=o_{P}(1),\quad \text{ as } P_l\to\infty.
\]
which implies $\Tilde{V}_{l}$ is consistent for $V_{G,l},$ for every $l=1,2,\ldots,D.$

\hfill 
\qed

\textbf{Proof of Theorem \ref{thm: testing Delta=0}} 

We show the asymptotic normality of the test statistic in two steps.

\textit{Step 1}: Let $T_{SB}^* = \sum_{l=1}^D (\Tilde{\beta}_{l} - \hat{\bar{\beta}})' V_{G,l}^{-1} (\Tilde{\beta}_{l} - \hat{\bar{\beta}})$.
We aim to show that
$$ Z^* = \frac{T_{SB}^* - kD}{\sqrt{2kD}} \xrightarrow[H_0]{d} N(0,1), \;\text{ as } \min_l P_l \to \infty, \, D \to \infty. $$

Expanding the quadratic form:
\begin{align*}
T_{SB}^* & = \sum_{l=1}^D (\Tilde{\beta}_{l} - \beta)' V_{G,l}^{-1} (\Tilde{\beta}_{l} - \beta) + \sum_{l=1}^D (\hat{\bar{\beta}} - \beta)' V_{G,l}^{-1} (\hat{\bar{\beta}} - \beta) - 2 \sum_{l=1}^D (\Tilde{\beta}_{l} - \beta)' V_{G,l}^{-1} (\hat{\bar{\beta}} - \beta) \\
&= A_D + B_D + C_D, \; \text{ say.}
\end{align*}

We begin by analyzing the term $A_D.$
Following the proof of Theorem \ref{thm:normal}, it is easy to observe that under Assumptions \ref{assump:exogeneity}, \ref{assump:error independence}, \ref{assump:Xg'Xg/Ng inverse is bounded}, \ref{assump:existence of moment} and \ref{assump: lambda_max Vl = lambda_min Vl},
\begin{equation} \label{eq: normality for Vl^-1/2(beta_l - beta)}
    V_{G,l}^{-1/2}(\Tilde{\beta}_{l} - \beta) \xrightarrow{d} N(0, I_k), \quad \text{ as } P_l \to \infty,
\end{equation}
for: (i) strong dependence generally, (ii) semi-strong dependence, if for some $p>1,$ $\frac{P_l}{\left[\sum_{g\in \mathcal{S}_l}\frac{h\left(N_{g}\right)}{N_{g}}\right]^{p}}\to0,$
as $P_l\to\infty.$ (iii) weak dependence, if for some $p>1,$ $\frac{P_l}{\left[\sum_{g\in \mathcal{S}_l}\frac{1}{N_{g}}\right]^{p}}\to0,$ as $P_l\to\infty.$

So, by the continuous mapping theorem, for fixed $D$, $A_D \xrightarrow{d} \chi^2_{kD}$. Therefore, using the CLT for the sum of independent chi-squares, we obtain
\[
    \frac{A_D - kD}{\sqrt{2kD}} \xrightarrow{d} N(0,1), \quad \text{ as } \min_l P_l \to \infty \text{ and } D \to \infty.
\]

For the term $B_D,$ note that by trace inequality
\begin{equation*}
    \mathbb{E}\left|B_D\right|  = \mathbb{E}\left[\sum_{l=1}^D (\hat{\bar{\beta}} - \beta)' V_{G,l}^{-1} (\hat{\bar{\beta}} - \beta)\right] = \sum_{l=1}^D \operatorname{tr}\left(V_{G,l}^{-1} \, \mathbb{E}\left[(\hat{\bar{\beta}} - \beta) (\hat{\bar{\beta}} - \beta)'\right]\right) \leq \sum_{l=1}^D \operatorname{tr}(V_{G,l}^{-1}) \, \operatorname{tr} (V).
\end{equation*}
By Assumption \ref{assump:lambda max V_G=lambda min V_G}, Assumption \ref{assump: lambda_max Vl = lambda_min Vl}, and Lemma \ref{lemma:Exact order summability},
\begin{align*}
     \mathbb{E}\left|B_D\right| & \leq \sum_{l=1}^D \operatorname{tr}(V_{G,l}^{-1}) \, \operatorname{tr} (V) \\
     & = \begin{cases}
\sum_{l=1}^D O(P_l) \cdot O(\frac{1}{G})  & , \text{for strong dependence} \\
\sum_{l=1}^D O\left( \frac{P^2_l}{\sum_{g \in \mathcal{S}_l} \frac{h(N_g)}{N_g}}\right) \cdot O\left(\frac{1}{G^2}\sum_{g=1}^G \frac{h(N_g)}{N_g}\right)  & , \text{for semi-strong dependence} \\
\sum_{l=1}^D O\left( \frac{P^2_l}{\sum_{g \in \mathcal{S}_l} \frac{1}{N_g}}\right) \cdot O\left(\frac{1}{G^2}\sum_{g=1}^G \frac{1}{N_g}\right)   & , \text{for weak dependence}
\end{cases} \\
& = \begin{cases}
O\left(1\right) &, \text{for strong dependence} \\
O\left( \sum_{l=1}^D \frac{P^2_l}{\sum_{g \in \mathcal{S}_l} \frac{h(N_g)}{N_g}} \cdot \frac{1}{G^2} \sum_{l=1}^D \sum_{g \in \mathcal{S}_l} \frac{h(N_g)}{N_g} \right) &, \text{for semi-strong dependence} \\
O\left(\sum_{l=1}^D \frac{P^2_l}{\sum_{g \in \mathcal{S}_l} \frac{1}{N_g}} \cdot \frac{1}{G^2} \sum_{l=1}^D \sum_{g \in \mathcal{S}_l} \frac{1}{N_g} \right) &, \text{for weak dependence}
\end{cases} 
\end{align*}

 So, 
\[
\frac{B_D}{\sqrt{2kD}} = \begin{cases}
O_P\left(\frac{1}{\sqrt{D}}\right) &, \text{for strong dependence} \\
O_P\left( \frac{1}{\sqrt{D}}\sum_{l=1}^D \frac{P^2_l}{\sum_{g \in \mathcal{S}_{l}} \frac{h(N_g)}{N_g}} \cdot \frac{1}{G^2} \sum_{l=1}^D \sum_{g \in \mathcal{S}_{l}} \frac{h(N_g)}{N_g} \right) &, \text{for semi-strong dependence} \\
O_P\left(\frac{1}{\sqrt{D}}\sum_{l=1}^D \frac{P^2_l}{\sum_{g \in \mathcal{S}_{l}} \frac{1}{N_g}} \cdot \frac{1}{G^2} \sum_{l=1}^D \sum_{g \in \mathcal{S}_{l}} \frac{1}{N_g} \right) &, \text{for weak dependence}
\end{cases} 
\]

By Assumption \ref{assump: condition for T_SB normality}, the following holds:
\[
\frac{1}{\sqrt{D}}\sum_{l=1}^D \frac{P^2_l}{\sum_{g \in \mathcal{S}_{l}} \frac{h(N_g)}{N_g}} \cdot \frac{1}{G^2} \sum_{l=1}^D \sum_{g \in \mathcal{S}_{l}} \frac{h(N_g)}{N_g}  \to 0, \;\text{ as } D\to\infty.
\]
The expression in the weak dependence case holds similarly by putting $h(N_g)=1.$ Hence, $\frac{B_D}{\sqrt{2kD}} = o_p(1)$, for any kind of dependence.

For the term $C_D,$ by the triangle inequality in the first line and the C-S inequality in the second line
\begin{align*}
    \mathbb{E}\left|C_D \right| & \leq 2\mathbb{E}\left[ \sum_{l=1}^D \left|(\Tilde{\beta}_{l} - \beta)' V_{G,l}^{-1} (\hat{\bar{\beta}} - \beta)\right|\right] \\
    & \leq 2 \sqrt{A_D^2} \sqrt{B_D^2}. 
\end{align*}
Since $ \frac{A_D - kD}{\sqrt{2kD}} \xrightarrow{d} N(0,1), $ and $\frac{B_D}{\sqrt{2kD}} =o_P(1),$ we obtain $\frac{C_D}{\sqrt{2kD}} =o_P(1).$
Therefore, 
\[
    \frac{T_{SB}^* - kD}{\sqrt{2kD}} = \underbrace{\frac{A_D - kD}{\sqrt{2kD}}}_{\xrightarrow{d} N(0,1)} + \underbrace{\frac{B_D}{\sqrt{2kD}}}_{\xrightarrow{p} 0} +\underbrace{\frac{C_D}{\sqrt{2kD}}}_{\xrightarrow{p} 0} 
\]
which implies $Z^* \xrightarrow{d} N(0,1)$, as  $P_l \to \infty, D \to \infty.$

\textit{Step 2:} In this step, we show that $Z - Z^* \xrightarrow{P} 0$ as $D\to \infty$ along with $\frac{D}{\min_l P_l}\to 0.$
Note that
\begin{align*}
\left|Z - Z^*\right| & = \left|\frac{T_{SB} - T_{SB}^*}{\sqrt{2kD}} \right|\\
&\leq \frac{1}{\sqrt{2kD}} \sum_{l=1}^D \left|(\Tilde{\beta}_{l} - \hat{\bar{\beta}})' (\Tilde{V}_{l}^{-1} - V_{G,l}^{-1}) (\Tilde{\beta}_{l} - \hat{\bar{\beta}}) \right|\\
&= \frac{1}{\sqrt{2kD}} \sum_{l=1}^D \left|(\Tilde{\beta}_{l} - \hat{\bar{\beta}})' V_{G,l}^{-1/2} \left[ V_{G,l}^{1/2} \Tilde{V}_{l}^{-1} V_{G,l}^{1/2} - I_k \right] V_{G,l}^{-1/2} (\Tilde{\beta}_{l} - \hat{\bar{\beta}})\right| \\
&\leq \frac{1}{\sqrt{2kD}} \sum_{l=1}^D (\Tilde{\beta}_{l} - \hat{\bar{\beta}})' V_{G,l}^{-1} (\Tilde{\beta}_{l} - \hat{\bar{\beta}}) \cdot \left\| V_{G,l}^{1/2} \Tilde{V}_{l}^{-1} V_{G,l}^{1/2} - I_k \right\|
\end{align*}

From the proof of Lemma \ref{lemma: V_l hat consistency}, we have 
\[
\frac{\left\Vert \Tilde{V}_{l}-V_{G,l}\right\Vert }{\left\Vert V_{G,l}\right\Vert }=O_{P}\left(\frac{1}{\sqrt{P_l}}\right), \quad \text{ uniformly in  } l.
\]  
Also, since $\frac{V_{G,l}}{\|V_{G,l}\|}$ is uniformly positive definite, by the continuity theorem for probability convergence, taking $g(W) = \left( \frac{V_{G,l}}{\|V_{G,l}\|} \right)^{1/2} W^{-1} \left( \frac{V_{G,l}}{\|V_{G,l}\|} \right)^{1/2}$ and setting $W = \frac{\Tilde{V}_{l}}{\|V_{G,l}\|}$, we have
    $$ \| V_{G,l}^{1/2} \Tilde{V}_{l}^{-1} V_{G,l}^{1/2} - I_k \| = O_{P}\left(\frac{1}{\sqrt{P_l}}\right), \quad \text{ uniformly in  } l. $$

Hence, using the norm inequality in the second line
\begin{align*}
\left|Z - Z^*\right| &\leq \frac{1}{\sqrt{2kD}} \sum_{l=1}^D (\Tilde{\beta}_{l} - \hat{\bar{\beta}})' V_{G,l}^{-1} (\Tilde{\beta}_{l} - \hat{\bar{\beta}}) \cdot  \left\Vert V_{G,l}^{1/2} \Tilde{V}_{l}^{-1} V_{G,l}^{1/2} - I_k \right\Vert \\
&\leq \max_l \left\Vert V_{G,l}^{1/2} \Tilde{V}_{l}^{-1} V_{G,l}^{1/2} - I_k \right\Vert\cdot \frac{T_{SB}^*}{\sqrt{2kD}} \\
&= O_P\left(\frac{1}{\min_l P_l}\right) \cdot O_P\left(\sqrt{D}\right) \\
&= O_P\left(\frac{D}{\min_l P_l}\right).
\end{align*}

Thus, if $\frac{D}{\min_l P_l}\to 0,$ $Z$ and $Z^*$ have the same asymptotic distribution and 
\[
Z \xrightarrow[H_0]{d} N(0,1).
\]

\hfill 
\qed

\textbf{Proof of Result \ref{result: V<VOLS, for weak dependence}}

Here, conditional on $X,$
\[
V=\frac{1}{G^{2}}\sum_{g=1}^G (X_{g}^{\prime}X_{g})^{-1}X_{g}^{\prime}\Omega_{g}X_{g}(X_{g}^{\prime}X_{g})^{-1}=\frac{1}{G^2}\sum_{g=1}^G  \frac{a_g X_g^\prime X_g}{a_g^2} = \frac{1}{G},
\]
and 
\[
\Sigma = \left(\sum_{g=1}^G X_{g}^{\prime}X_{g}\right)^{-1} \sum_{g=1}^G  X_{g}^{\prime}\Omega_{g}X_{g} \left(\sum_{g=1}^G X_{g}^{\prime}X_{g}\right)^{-1} = \frac{\sum_{g=1}^G  a_g^2}{\left( \sum_{g=1}^G  a_g \right)^2}.
\]
Since $\left( \sum_{g=1}^G  a_g \right)^2 \leq  G\sum_{g=1}^G a_g^2$ by C-S inequality, we have 
\[
\frac{1}{G} \leq \frac{\sum_{g=1}^G  a_g^2}{\left( \sum_{g=1}^G  a_g \right)^2}.
\]
Hence the result.

\hfill 
\qed

{\bf Proof of Result  \ref{result:efficiency} } 

For the given $\Omega_g,$ the variance of the POLS estimator, conditional on $X,$ is
\begin{align*}
    \Sigma & = \left(\sum_{g=1}^G X_g^\prime X_g\right)^{-1}\left( (a-b)\sum_{g=1}^G {X_g^\prime  X_g} + b\sum_{g=1}^G X^\prime_g \mathbbm{1} \mathbbm{1} ^\prime X_g \right) \left(\sum_{g=1}^G X_g^\prime X_g\right)^{-1} \\
    & = \frac{(a-b) \sum_{g=1}^G N_g + b  \sum_{g=1}^G N^2_g }{\left( \sum_{g=1}^G N_g \right)^2} \\
    & = \frac{a-b}{\sum_{g=1}^G N_g } + \frac{b\sum_{g=1}^G N^2_g}{\left( \sum_{g=1}^G N_g \right)^2}.
\end{align*}

On the other hand, the variance of $\hat{\bar{\beta}}$ is
\[
V=\frac{1}{G^{2}}\sum_{g=1}^G (X_{g}^{\prime}X_{g})^{-1}X_{g}^{\prime}\Omega_{g}X_{g}(X_{g}^{\prime}X_{g})^{-1} = \frac{1}{G^2}\sum_{g=1}^G  \frac{aN_g + b N_g (N_g-1)}{N_g^2} = \frac{(a-b)}{G^2} \sum_{g=1}^G \frac{1}{N_g} + \frac{b}{G}.
\]

So, we have $\operatorname{Var}(\hat{\beta}_{POLS}) = \Sigma = O\left(\frac{N_1^2 + G}{(N_1+G)^2}\right),$ and  $\operatorname{Var}(\hat{\bar{\beta}}) = V = O(\frac{1}{G}).$ 
Then, the efficiency of $\hat{\bar{\beta}}$ compared to $\hat{\beta}_{POLS}$, for sufficiently large $N_1$ and $G$, is 
\begin{eqnarray*}
\frac{\Sigma}{V} & = & O\left(\frac{GN_{1}^{2}+G^{2}}{(N_{1}+G)^{2}}\right)\\
 & = & \begin{cases}
O\left(\frac{G+\frac{G^{2}}{N_{1}^{2}}}{\left(1+\frac{G}{N_{1}}\right)^{2}}\right) & =O(G),\;\;\text{ if }\frac{N_{1}}{G}\rightarrow\infty \; \text{ or } N_1=O(G)\\
O\left(\frac{\frac{N_{1}^{2}}{G}+1}{\left(\frac{N_{1}}{G}+1\right)^{2}}\right) & =\begin{cases}
O\left(N_{1}^{1-\delta}\right), & \text{if }\frac{N_{1}}{G}=o(1)\text{ and }G=O\left(N_{1}^{1+\delta}\right),\text{ for \ensuremath{0<\delta<1}}\\
\rightarrow c, & \text{if }\frac{N_{1}}{G}=o(1)\text{ and }G=O_e\left(N_{1}^{2}\right), \text{ for } 1<c<\infty\\
\rightarrow1, & \text{if }\frac{N_{1}}{G}=o(1)\text{ and }G=O\left(N_{1}^{2+\delta_{1}}\right),\text{ for \ensuremath{\delta_{1}>0}}
\end{cases}
\end{cases}
\end{eqnarray*}

Hence, $\hat{\bar{\beta}}$ is asymptotically more efficient than $\hat{\beta}_{POLS}$ except when $\frac{N^2}{G}\to 0$. 

\hfill
\qed

\begin{lemma} \label{lemma:A_n=B_n+o_P(1) implies A_n = O_P(1)}
    Let $\left\{A_n\right\}$ be a sequence of finite-dimensional matrices such that $A_n=B_n+o_P(1),$ where $B_n$ is finite and uniformly positive definite over $n.$ Then, $A_n=O_P(1).$
\end{lemma}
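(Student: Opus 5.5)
Since the matrices are finite-dimensional (say $k\times k$ with $k$ fixed), I will work entrywise, or equivalently with the trace norm of Definition \ref{def:norms}. All norms on $\mathbb{R}^{k\times k}$ are equivalent, so boundedness in probability can be checked in $\Vert\cdot\Vert$. The plan is to write $A_n = B_n + R_n$ with $R_n = o_P(1)$ and apply the triangle inequality:
\[
\Vert A_n\Vert \leq \Vert B_n\Vert + \Vert R_n\Vert .
\]
If $B_n$ is deterministic, which is how ``finite and uniformly positive definite over $n$'' is used throughout the paper (for example $C_G$ in Assumption \ref{assump: Xg'Xg=O(Ng)} and $V_G$ in Assumption \ref{assump:lambda max V_G=lambda min V_G}), then ``finite uniformly in $n$'' means $\sup_n \Vert B_n\Vert \leq K$ for some $K<\infty$. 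I would state this reading explicitly, since the word ``finite'' alone does not give a uniform bound. The only role of positive definiteness here is to guarantee that $B_n$ is a genuine bounded limit object; the lower eigenvalue bound is not needed for $O_P(1)$.

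Next I verify Definition \ref{def:stochastically bounded uniformly}(i). Fix $\varepsilon>0$ and take $M_\varepsilon = K+1$. Then
\[
\mathbb{P}\left(\Vert A_n\Vert > K+1\right) \leq \mathbb{P}\left(\Vert R_n\Vert > 1\right).
\]
Because $R_n = o_P(1)$, the right-hand side is smaller than $\varepsilon$ for all $n\geq L_\varepsilon$, and this is exactly the definition of $A_n = O_P(1)$. If one wants the bound for every $n$, as in the remark following the definition, I would enlarge $M_\varepsilon$ to handle the finitely many indices $n<L_\varepsilon$. This works because each $\Vert A_n\Vert$ is an almost surely finite random variable, so each individual tail probability tends to zero as $M\to\infty$.

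The only delicate point is the interpretation of ``finite'' as a uniform bound on $\Vert B_n\Vert$; everything else is routine. If $B_n$ were instead random, I would assume $B_n = O_P(1)$, for instance because it is bounded uniformly in probability. The same union bound, $\mathbb{P}(\Vert A_n\Vert > 2M) \leq \mathbb{P}(\Vert B_n\Vert > M) + \mathbb{P}(\Vert R_n\Vert > M)$, then closes the argument.
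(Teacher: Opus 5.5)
Your proposal is correct and follows essentially the same route as the paper. Both read ``finite'' as a uniform bound $\sup_n\Vert B_n\Vert\le K$ on a deterministic $B_n$ and conclude via $O_P(1)+o_P(1)=O_P(1)$; you verify that step directly with the triangle inequality, whereas the paper cites van der Vaart for it. Your remark that uniform positive definiteness is never used also matches the paper's proof, which likewise never uses it.
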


\textbf{Proof of Lemma \ref{lemma:A_n=B_n+o_P(1) implies A_n = O_P(1)}}

Since $B_n$ is finite, there exists a constant \( M > 0 \) such that $ \|B_n\| \leq M, $ for all $n.$ $B_n$ is uniformly positive definite means that there exists a constant \( c > 0 \) such that for all nonzero vectors \( v \), $v^\top B_n v \geq c \|v\|^2,$ uniformly in \( n \). By the addition property of stochastic orders, $O_P(1)+o_P(1)=O_P(1)$, from Section 2.2 of \cite{van2000asymptotic}. To prove that \( A_n = O_P(1) \), it is enough to show that \( B_n = O_P(1) \), i.e., we show that every element \( (B_n)_{ij} \) of \( B_n \) is \( O_P(1) \). 

By Definition \ref{def:stochastically bounded uniformly}, a sequence of random variables \( Z_n \) is \( O_P(1), \) if for every $\varepsilon > 0$, there exist $M_\varepsilon > 0$ and ${L_\varepsilon} \in \mathbb{N}$ such that 
     \[ \mathbb{P}\left(\left|Z_n\right| > M_\varepsilon\right) < \varepsilon, \; \text{ for all } n \geq {L_\varepsilon}. \]

In this case, \( B_n \) is deterministic, so the probability statements become deterministic bounds. Since \( \|B_n\| \leq M \) for all \( n \), it follows that for any entry \( (B_n)_{ij} \),
\[
|(B_n)_{ij}| \leq \|B_n\| \leq M.
\]
This is a consequence of the fact that the absolute value of any entry of a matrix is bounded by its norm. Now, let \( \varepsilon > 0 \) be given, and choose \( M_\varepsilon = M,\; L_\varepsilon = 1. \) Then for all \( n \geq {L_\varepsilon}, \) and all \( i, j \),
\[
\mathbb{P}\left(|(B_n)_{ij}| > M_\varepsilon\right) = \mathbb{P}\left(|(B_n)_{ij}| > M\right) = 0 < \varepsilon.
\]
This satisfies the definition of \( O_P(1) \) for each element of \( B_n \). Since every entry of \( B_n \) is bounded by \( M \) (independent of \( n \)), we conclude that $ B_n = O_P(1). $ Consequently, we have 
\[
A_n=O_P(1).
\]
\hfill
\qed

{\bf Proof of Theorem \ref{thm: ols consistency under varying parameters}}

To show the consistency of $\hat{\beta}_{POLS}$, first note that under model \eqref{eq: model 2 for eta_g}
\[
    \hat{\beta}_{POLS}-\beta=\left(X^{\prime}X\right)^{-1}\sum_{g=1}^{G}X_{g}^{\prime}\eta_{g}= \left(X^{\prime}X\right)^{-1} \left[X'\epsilon + \sum_{g=1}^{G}X_{g}^{\prime}X_{g}u_{g}\right].
\]

Note that $ \left(\frac{X^{\prime}X}{\sum_{g=1}^{G}N_{g}}\right)^{-1} - C_{G}^{-1}=o_P(1),$ which follows from Assumption \ref{assump: Xg'Xg=O(Ng)}, where $C_{G}^{-1}$ is finite and uniformly positive definite over $G.$ Then, by Lemma \ref{lemma:A_n=B_n+o_P(1) implies A_n = O_P(1)},
\[
\left(\frac{X^{\prime}X}{\sum_{g=1}^{G}N_{g}}\right)^{-1}=O_P(1).
\]

Now, using Assumption \ref{assump:exogeneity}, note that 
\[
\mathbb{E}\left[\frac{X^{\prime}\epsilon}{\sum_{g=1}^{G}N_{g}}\right]=\mathbb{E}\mathbb{E}\left[\frac{X^{\prime}\epsilon}{\sum_{g=1}^{G}N_{g}} \bigg|  X\right]=\frac{1}{\sum_{g=1}^{G}N_{g}}\mathbb{E}\left[X^{\prime}\mathbb{E}\left[\epsilon \big|  X\right]\right]=0.
\]
 Also, using the independence across clusters by Assumption \ref{assump:error independence}, the variance-covariance matrix of the vector  $\frac{X^{\prime}\epsilon}{\sum_{g=1}^{G}N_{g}}$ is 
\[
\mathbb{E}\left[\operatorname{Var}\left(\frac{X^{\prime}\epsilon}{\sum_{g=1}^{G}N_{g}} \bigg|  X\right)\right] = \frac{\sum_{g=1}^{G}\mathbb{E}\left[X_{g}^{\prime}\Omega_{g}X_{g}\right]}{\left(\sum_{g=1}^{G}N_{g}\right)^{2}}.
\]

Since this is a positive semi-definite matrix,
\[
\left\Vert \frac{\sum_{g=1}^{G}\mathbb{E}\left[X_{g}^{\prime}\Omega_{g}X_{g}\right]}{\left(\sum_{g=1}^{G}N_{g}\right)^{2}} \right\Vert_e = \lambda_{\max}\left(\frac{\sum_{g=1}^{G} \mathbb{E}\left[X_{g}^{\prime}\Omega_{g}X_{g}\right]}{\left(\sum_{g=1}^{G}N_{g}\right)^{2}}\right).
\]
Then, by Result \ref{result:trace inequality} and Jensen's inequality, we have 
\begin{align*}
    \lambda_{\max}\left(\frac{\sum_{g=1}^{G} \mathbb{E}\left[X_{g}^{\prime}\Omega_{g}X_{g}\right]}{\left(\sum_{g=1}^{G}N_{g}\right)^{2}}\right) & \leq   \frac{\sum_{g=1}^{G} \lambda_{\max}\left(\mathbb{E}\left[X_{g}^{\prime}\Omega_{g}X_{g}\right]\right)}{\left(\sum_{g=1}^{G}N_{g}\right)^{2}}\\
    &  \leq   \frac{\sum_{g=1}^{G} \mathbb{E}\left[\lambda_{\max}\left(X_{g}^{\prime}\Omega_{g}X_{g}\right)\right]}{\left(\sum_{g=1}^{G}N_{g}\right)^{2}}\\
    &  \leq   \frac{\sum_{g=1}^{G} \mathbb{E}\left[\lambda_{\max}\left(X_{g}^{\prime}X_{g}\right) \lambda_{\max}\left(\Omega_{g}\right)\right]}{\left(\sum_{g=1}^{G}N_{g}\right)^{2}}.
\end{align*}

Using the fact that $\mathbb{E}\left[\lambda_{\max}\left(X_{g}^{\prime}X_{g}\right)\right] = O\left(N_g\right)$ uniformly in $g$ by Assumption \ref{assump: Xg'Xg=O(Ng)}, it follows from Assumption \ref{assump:error independence} and Lemma \ref{lemma:Exact order summability} that 
\begin{align*}
\left\Vert \operatorname{Var}\left(\frac{X^{\prime}\epsilon}{\sum_{g=1}^{G}N_{g}}\right) \right\Vert= & \begin{cases}
O\left(\frac{\sum_{g=1}^{G}N_{g}^{2}}{\left(\sum_{g=1}^{G}N_{g}\right)^{2}}\right), & \text{for strong dependence}\\
O\left(\frac{\sum_{g=1}^{G}N_{g}h(N_{g})}{\left(\sum_{g=1}^{G}N_{g}\right)^{2}}\right), & \text{for semi-strong dependence}\\
O\left(\frac{1}{\sum_{g=1}^{G}N_{g}}\right), & \text{for weak dependence}
\end{cases}
\end{align*}

Now, for the terms involving $u_g$, note that by Assumptions \ref{assump: epsilon_g, X_g and u_g independent} and \ref{assump: distribution of u_g} 
\[
\mathbb{E}\left[\frac{1}{\sum_{g=1}^G N_g} \sum_{g=1}^G X_g'X_g u_g\right] = 0,
\]

and the variance-covariance matrix of $\frac{1}{\sum_{g=1}^G N_g} \sum_{g=1}^G X_g'X_g u_g$ is
\[
\frac{1}{\left(\sum_{g=1}^G N_g\right)^2}\sum_{g=1}^{G}\mathbb{E}\left[X_{g}^{\prime}X_{g}\Delta X_{g}^{\prime}X_{g}\right].
\]

Since this is a positive semi-definite matrix,
\[
\left\Vert \frac{1}{\left(\sum_{g=1}^G N_g\right)^2}\sum_{g=1}^{G}\mathbb{E}\left[X_{g}^{\prime}X_{g}\Delta X_{g}^{\prime}X_{g}\right] \right\Vert_e = \lambda_{\max}\left(\frac{1}{\left(\sum_{g=1}^{G}N_{g}\right)^{2}} \sum_{g=1}^{G} \mathbb{E}\left[X_{g}^{\prime}X_{g}\Delta X_{g}^{\prime}X_{g}\right]\right).
\]
Then, by Result \ref{result:trace inequality} and Jensen's inequality, we have 
\begin{align*}
    \lambda_{\max}\left(\frac{1}{\left(\sum_{g=1}^{G}N_{g}\right)^{2}} \sum_{g=1}^{G} \mathbb{E}\left[X_{g}^{\prime}X_{g}\Delta X_{g}^{\prime}X_{g}\right]\right) & \leq   \frac{\sum_{g=1}^{G} \lambda_{\max}\left(\mathbb{E}\left[X_{g}^{\prime}X_{g}\Delta X_{g}^{\prime}X_{g}\right]\right)}{\left(\sum_{g=1}^{G}N_{g}\right)^{2}}\\
    &  \leq   \frac{\sum_{g=1}^{G} \mathbb{E}\left[\lambda_{\max}\left(X_{g}^{\prime}X_{g}\Delta X_{g}^{\prime}X_{g}\right)\right]}{\left(\sum_{g=1}^{G}N_{g}\right)^{2}}\\
    &  \leq   \frac{\sum_{g=1}^{G} \mathbb{E}\left[\lambda_{\max}\left((X_{g}^{\prime}X_{g})^2\right) \lambda_{\max}\left(\Delta\right)\right]}{\left(\sum_{g=1}^{G}N_{g}\right)^{2}}.
\end{align*}

Using the fact that $\mathbb{E}\left[\lambda_{\max}\left((X_{g}^{\prime}X_{g})^2\right)\right] = O\left(N_g^2\right)$ uniformly in $g$ by Assumption \ref{assump: Xg'Xg=O(Ng)}, it follows from Assumption \ref{assump: distribution of u_g} and Lemma \ref{lemma:Exact order summability} that 
\begin{align*}
\left\Vert \operatorname{Var}\left(\frac{1}{\sum_{g=1}^G N_g} \sum_{g=1}^G X_g'X_g u_g\right) \right\Vert= O\left(\frac{\sum_{g=1}^{G}N_{g}^{2}}{\left(\sum_{g=1}^{G}N_{g}\right)^{2}}\right).
\end{align*}

Since $u_g$ and $\epsilon_g$ are independently distributed by Assumption \ref{assump: epsilon_g, X_g and u_g independent}, it follows that 
\[
   \left\Vert \operatorname{Var}\left(\frac{X'\epsilon + \sum_{g=1}^{G}X_{g}^{\prime}X_{g}u_{g}}{\sum_{g=1}^G N_g}\right)\right\Vert = O\left(\frac{\sum_{g=1}^{G}N_{g}^{2}}{\left(\sum_{g=1}^{G}N_{g}\right)^{2}}\right),
\]
for any kind of dependence of $\epsilon_g.$

Observe that
$\frac{\sum_{g=1}^{G}N_{g}^{2}}{\left(\sum_{g=1}^{G}N_{g}\right)^{2}}\to0,$ if $ \frac{\underset{g:g=1(1)G}{max}N_g}{\sum_{g=1}^{G}N_{g}}\to0.$ In particular, for nearly balanced case, assuming $N_g\simeq N,$  $\frac{\sum_{g=1}^{G}N_{g}^{2}}{\left(\sum_{g=1}^{G}N_{g}\right)^{2}}\simeq \frac{GN^2}{G^2 N^2}=\frac{1}{G}\to 0,$ as $G\to \infty.$ Therefore, 
\[
    \frac{X'\epsilon + \sum_{g=1}^{G}X_{g}^{\prime}X_{g}u_{g}}{\sum_{g=1}^G N_g} = o_P(1),
\]
if (i) the clusters are nearly balanced, or (ii) the clusters are unbalanced with $\frac{\max_g N_g}{\sum_{g=1}^G N_g}\to 0,$ as $G\to\infty.$ Since $\left(\frac{X^{\prime}X}{\sum_{g=1}^{G}N_{g}}\right)^{-1}=O_P(1),$ we obtain 
\[
\hat{\beta}_{POLS}-\beta = \left(\frac{X^{\prime}X}{\sum_{g=1}^{G}N_{g}}\right)^{-1}  \left[\frac{X'\epsilon + \sum_{g=1}^{G}X_{g}^{\prime}X_{g}u_{g}}{\sum_{g=1}^G N_g}\right] = O_P(1) \cdot o_P(1) = o_P(1),
\] 
for (i) the nearly balanced clusters or (ii) the unbalanced clusters with $\frac{\max_g N_g}{\sum_{g=1}^G N_g}\to 0,$ as $G\to\infty$. This proves the theorem.

\hfill 
\qed

{\bf Proof of Theorem \ref{thm: ols inconsistency under varying clusters}}

First note that under model \eqref{eq: model 2 for eta_g}
\[
    \hat{\beta}_{POLS}-\beta=\left(X^{\prime}X\right)^{-1}\sum_{g=1}^{G}X_{g}^{\prime}\eta_{g}= \left(X^{\prime}X\right)^{-1} \left[X'\epsilon + \sum_{g=1}^{G}X_{g}^{\prime}X_{g}u_{g}\right].
\]

Since $ \left(\frac{X^{\prime}X}{\sum_{g=1}^{G}N_{g}}\right)^{-1} - C_{G}^{-1}=o_P(1)$  follows from Assumption \ref{assump: Xg'Xg=O(Ng)}, where $C_{G}^{-1}$ is finite and uniformly positive definite over $G,$ we have by Lemma \ref{lemma:A_n=B_n+o_P(1) implies A_n = O_P(1)},
\[
\left(\frac{X^{\prime}X}{\sum_{g=1}^{G}N_{g}}\right)^{-1}=O_P(1).
\]

Let us assume that $Z_{G} = \sum_{g=1}^G X_g'\eta_g,$ and $Z_G^* = \frac{Z_G}{\sum_{g=1}^G N_g}.$
Since $\epsilon_g$ and $u_g$ are independent by Assumption \ref{assump: epsilon_g, X_g and u_g independent}, 
\[
    \operatorname{Var}\left(X'\epsilon + \sum_{g=1}^{G}X_{g}^{\prime}X_{g}u_{g}\right) = \operatorname{Var}\left(X'\epsilon \right) + \operatorname{Var}\left( \sum_{g=1}^{G}X_{g}^{\prime}X_{g}u_{g}\right) \geq \operatorname{Var}\left( \sum_{g=1}^{G}X_{g}^{\prime}X_{g}u_{g}\right).
\]

Now, by Assumption \ref{assump: distribution of u_g}
\[
\operatorname{Var}\left( \sum_{g=1}^{G}X_{g}^{\prime}X_{g}u_{g}\right) =  \sum_{g=1}^{G}X_{g}^{\prime}X_{g}\Delta X_{g}^{\prime}X_{g}\,.
\]
Note that 
\begin{align*}
    \lambda_{\max}\left(\sum_{g=1}^{G}X_{g}^{\prime}X_{g}\Delta X_{g}^{\prime}X_{g}\right) & \leq \lambda_{\max}\left(\sum_{g=1}^{G}(X_{g}^{\prime}X_{g})^2 \right) \lambda_{\max}\left(\Delta \right) \quad \text{ and } \\
    \lambda_{\min}\left(\sum_{g=1}^{G}X_{g}^{\prime}X_{g}\Delta X_{g}^{\prime}X_{g}\right) & \geq \lambda_{\min}\left(\sum_{g=1}^{G}(X_{g}^{\prime}X_{g})^2 \right) \lambda_{\min}\left(\Delta \right).
\end{align*}
Using Assumptions \ref{assump: distribution of u_g}  and \ref{assump: Xg'Xg=O(Ng)}, we obtain
\[
\left\Vert \operatorname{Var}\left[Z^*_{G}\right] \right\Vert=O_{e}\left(\frac{\sum_{g=1}^{G}N_{g}^{2}}{\left(\sum_{g=1}^{G}N_{g}\right)^{2}}\right).
\]
Hence, the second moment of  the norm of $Z^*_{G}$ is bounded away from zero, unless $\frac{\sum_{g=1}^{G}N_{g}^{2}}{\left(\sum_{g=1}^{G}N_{g}\right)^{2}}\to 0$.

The $4^{th}$ moment of the norm of $Z_{G}$ is 
\begin{align*}
    \mathbb{E}\left[\left\Vert Z_{G}\right\Vert^{4}\right] & = \mathbb{E}\left[\left(\sum_{g=1}^{G}\eta_{g}^{\prime}X_{g}\sum_{g_{1}=1}^G X_{g_{1}}^{\prime}\eta_{g_{1}}\right)^{2}\right] \\
    & = \mathbb{E}\left[\left(\sum_{g,g_{1}=1}^G\left(X_{g}u_{g}+\epsilon_{g}\right)^{\prime}X_{g}X_{g_{1}}^{\prime}\left(X_{g_{1}}u_{g_{1}}+\epsilon_{g_{1}}\right)\right)^{2}\right] \\
    & = \mathbb{E}\left[\left(T_{1}+T_{2}+2 \, T_{3}\right)^{2}\right]\\
    & \leq 3\;\mathbb{E}\left[T_{1}^{2}+T_{2}^{2}+4\,T_{3}^{2}\right], \;\text{ by the C-S inequality, }
\end{align*}

where $T_{1} = \sum_{g,g_{1}=1}^G \epsilon_{g}^{\prime}X_{g}X_{g_{1}}^{\prime}\epsilon_{g_{1}},$ \; 
$T_{2}=\sum_{g,g_{1}=1}^G u_{g}^{\prime}X_{g}^{\prime}X_{g}X_{g_{1}}^{\prime}X_{g_{1}}u_{g_{1}},$\; and  $T_{3}=\sum_{g,g_{1}=1}^G\epsilon_{g}^{\prime}X_{g}X_{g_{1}}^{\prime}X_{g_{1}}u_{g_{1}}.$

Considering the first term, note that
\[
    \mathbb{E}\left[T_{1}^{2}\right] = \mathbb{E}\left[\left(\sum_{g,g_{1}=1}^G \epsilon_{g}^{\prime}X_{g}X_{g_{1}}^{\prime}\epsilon_{g_{1}}\right)^{2}\right] = \sum_{g,g_1,h,h_1=1}^G 
\mathbb{E}\!\left[
\epsilon_g' X_g X_{g_1}' \epsilon_{g_1}
\;\epsilon_h' X_h X_{h_1}' \epsilon_{h_1}
\right].
\]

Since the errors are independent across clusters by Assumption \ref{assump:error independence}, the expectation of the
product vanishes unless $(g,h)=(g_1,h_1)$ or $(g,h)=(h_1,g_1)$. Thus, for some $c>0,$
\begin{align*}
    \mathbb{E}\left[T_1^2\right] & \leq 
c\sum_{g=1}^G \sum_{g_1=1}^G
\mathbb{E}\left[ \left(\epsilon_g' X_g X_{g_1}' \epsilon_{g_1}\right)^2 \right] \\ 
& \leq c\left(\sum_{g=1}^G
\mathbb{E}\left[ \epsilon_g' X_g X_{g}' \epsilon_{g} \right]\right)^2 ,\; \text{ by the C-S inequality} \\
& = c\left(\sum_{g=1}^G
\mathbb{E} \operatorname{tr}\left[X_g' \Omega_g X_g \right]\right)^2 
\end{align*}

Since $\mathbb{E} \operatorname{tr}\left[X_g' \Omega_g X_g \right] \leq \mathbb{E} \operatorname{tr}\left[X_g' X_g\right]\lambda_{\max}(\Omega_g)$ holds by Result \ref{result:trace inequality}, we obtain from Assumptions  \ref{assump:error independence}, \ref{assump: Xg'Xg=O(Ng)} and Lemma \ref{lemma:Exact order summability},
\[
  \mathbb{E}\left[T_1^2\right]  = \begin{cases}
    O\left(\left(\sum_{g=1}^G N_g^2\right)^2\right), & \text{ for strong dependence } \\
     O\left(\left(\sum_{g=1}^G N_g h(N_g)\right)^2\right), & \text{ for semi-strong dependence } \\
     O\left(\left(\sum_{g=1}^G N_g\right)^2\right), & \text{ for weak dependence }
\end{cases}
\]

For the term $T_2,$ note that
\[
\mathbb{E}\left[T_{2}^{2}\right]
=
\sum_{g,g_{1},h,h_{1}=1}^{G}
\mathbb{E}\!\left[
u_{g}^{\prime}X_{g}^{\prime}X_{g}
X_{g_{1}}^{\prime}X_{g_{1}}u_{g_{1}}
\;
u_{h}^{\prime}X_{h}^{\prime}X_{h}
X_{h_{1}}^{\prime}X_{h_{1}}u_{h_{1}}
\right].
\]

Since $u_g$'s are independent across clusters by Assumption \ref{assump: epsilon_g, X_g and u_g independent}, the expectation of the
product vanishes unless $(g,h)=(g_1,h_1)$ or $(g,h)=(h_1,g_1)$. Thus, for some $c_1>0,$
\begin{align*}
    \mathbb{E}\left[T_{2}^{2}\right] & \leq c_1\sum_{g=1}^{G}\sum_{g_{1}=1}^{G}
\mathbb{E}\left[\left(u_{g}^{\prime}X_{g}^{\prime}X_{g}X_{g_{1}}^{\prime}X_{g_{1}}u_{g_{1}}\right)^{2}\right] \\
& \leq c_1\left(\sum_{g=1}^G
\mathbb{E}\left[u_{g}^{\prime}X_{g}^{\prime}X_{g}X_{g}^{\prime}X_{g}u_{g}\right]\right)^2 ,\; \text{ by the C-S inequality} \\
& = c_1\left(\sum_{g=1}^G
\mathbb{E} \operatorname{tr}\left[X_g' X_g \Delta X_g' X_g \right]\right)^2 \\
& \leq c_1\left(\sum_{g=1}^G
\mathbb{E} \operatorname{tr}\left[(X_g' X_g)^2 \right] \lambda_{\max}(\Delta)\right)^2 ,\; \text{ by Result \ref{result:trace inequality}}
\end{align*}

Using Assumptions \ref{assump: distribution of u_g}, \ref{assump: Xg'Xg=O(Ng)} and Lemma \ref{lemma:Exact order summability}, we obtain 
\[
\mathbb{E}\left[T_2^2\right] = O\left(\left(\sum_{g=1}^{G}N_g^{2}\right)^{2}
\right).
\]

For $T_3,$ note that
\[
\mathbb{E}\!\left[T_3^2\right]
=
\sum_{g,g_1,h,h_1=1}^G
\mathbb{E}\!\left[
\epsilon_g' X_g X_{g_1}' X_{g_1} u_{g_1}
\; \epsilon_h' X_h X_{h_1}' X_{h_1} u_{h_1}
\right].
\]

Since $\epsilon_g$ and $u_g$'s are both independent across clusters by Assumptions \ref{assump:error independence} and \ref{assump: epsilon_g, X_g and u_g independent}, the expectation of the product vanishes unless $(g,h)=(g_1,h_1)$ or $(g,h)=(h_1,g_1)$. Thus, for some $c_2>0,$
\begin{align*}
\mathbb{E}\left[T_{3}^{2}\right] & \leq c_2\sum_{g=1}^G \sum_{g_1=1}^G
\mathbb{E}\left[\left( \epsilon_g' X_g X_{g_1}' X_{g_1} u_{g_1}\right)^2\right] \\
& \leq c_2\sum_{g=1}^G \sum_{g_1=1}^G
\mathbb{E}\left[ \epsilon_g' X_g X_g' \epsilon_g u'_{g_1} X_{g_1}' X_{g_1} X_{g_1}' X_{g_1} u_{g_1}\right],\; \text{ by the C-S inequality} \\
& = c_2\sum_{g=1}^G \sum_{g_1=1}^G
\mathbb{E}\left[ \epsilon_g' X_g X_g' \epsilon_g\right] \mathbb{E}\left[ u'_{g_1} X_{g_1}' X_{g_1} X_{g_1}' X_{g_1} u_{g_1}\right] , \; \text{ by Assumption \ref{assump: epsilon_g, X_g and u_g independent}}\\
& = c_2\sum_{g=1}^G 
\mathbb{E}\operatorname{tr}\left[ X_g'\Omega_g X_g\right] \sum_{g_1=1}^G \mathbb{E}\operatorname{tr}\left[ X_{g_1}' X_{g_1} \Delta X_{g_1}' X_{g_1} \right].
\end{align*}

Since $\mathbb{E} \operatorname{tr}\left[X_g' \Omega_g X_g \right] \leq \mathbb{E} \operatorname{tr}\left[X_g' X_g\right]\lambda_{\max}(\Omega_g)$ holds by Result \ref{result:trace inequality}, we obtain from Assumptions  \ref{assump:error independence}, \ref{assump: distribution of u_g}, \ref{assump: Xg'Xg=O(Ng)} and Lemma \ref{lemma:Exact order summability},
\[
\mathbb{E}\left[T_3^2\right] = O\left(\left(\sum_{g=1}^{G}N_g^{2}\right)^{2}
\right).
\]

Combining all these, we have 
\[
\mathbb{E}\left[\left\Vert Z_{G}\right\Vert^{4}\right] = O\left(\left(\sum_{g=1}^{G}N_g^{2}\right)^{2}
\right).
\]

If we assume that $Z_G^* = \frac{Z_G}{\sum_{g=1}^G N_g},$ then 
\[
  \mathbb{E}\left[\left\Vert Z^*_{G}\right\Vert^{4}\right]\; = \;O\left( \frac{\left(\sum_{g=1}^{G}N_g^{2}\right)^{2}}{\left(\sum_{g=1}^{G}N_g\right)^{4}}\right) \; = \; O(1).
\]
Therefore, the fourth moment of the norm of $Z^*_{G}$ is bounded. We have also shown that the second moment of the norm of $Z^*_{G}$ is bounded away from zero unless $\frac{\sum_{g=1}^{G}N_{g}^{2}}{\left(\sum_{g=1}^{G}N_{g}\right)^{2}}\to 0$, that is, $Z^*_{G}$ does not converge to $0$ in the mean square. Since the boundedness of the $(2+\delta)^{th}$ moment implies uniform integrability of $\left\{ \left\Vert Z^*_{G}\right\Vert^{2}:G\in\mathbb{N}\right\},$ the family is uniformly integrable. So, we can claim that $Z^*_{G}$
does not converge to $0$ in probability, since probability
convergence and uniform integrability in $\mathscr{L}^{p}$ implies
convergence at the $p^{th}$ moment. Hence, 
\[
    Z^*_{G}=\frac{\sum_{g=1}^G X_g'\eta_g}{\sum_{g=1}^{G}N_{g}}\neq o_P(1),
\]
which deduces the fact that $\hat{\beta}_{POLS}$ is not consistent for $\beta,$ for strong dependence with unbalanced clusters unless $\frac{\sum_{g=1}^{G}N_{g}^{2}}{\left(\sum_{g=1}^{G}N_{g}\right)^{2}}\to 0.$

\hfill 
\qed

\bibliography{references}

\end{document}